\newcommand{\hide}[1]{}
\newcommand{\Nin}{{N^{-}}}
\newcommand{\res}{{\mathop{\rm\bf res}}}
\newcommand{\dang}{{\mathop{\rm\bf dang}}}
\newcommand{\prop}{{\mathop{\rm\bf prop}}}
\newcommand{\rej}{{\mathop{\rm\bf rej}}}
\newcommand{\tent}{{\mathop{\rm\bf tent}}}
\newcommand{\pend}{{\mathop{\rm\bf pend}}}
\newcommand{\tgs}{{\mathop{\rm\bf tgs}}}
\newcommand{\attraction}{{\mathop{\rm\bf attraction}}}
\newcommand{\ousted}{{\mathop{\rm\bf ousted}}}
\begin{document}
\title{Finalizing Tentative Matches from Truncated Preference
Lists}
\author{Hisao Tamaki}
\authorrunning{H.~Tamaki}
\institute{Department of Computer Science, Meiji University\\
\email{tamaki@cs.meiji.ac.jp}}

\maketitle

\begin{abstract}
Consider the standard hospitals/residents problem, or the
two-sided many-to-one stable matching problem, and assume
that the true preference lists of both sides are
complete (containing all members of the opposite side) and 
strict (having no ties).  The lists actually submitted, however,
are truncated.  Let $I$ be such a truncated instance.
When we apply the resident-proposing deferred acceptance algorithm
of Gale and Shapley to $I$, the algorithm produces a set of tentative
matches (resident-hospital pairs). We say that a tentative match in this set
is {\em finalizable in $I$} if it is in the resident-optimal
stable matching for every completion of $I$ (a complete instance of which $I$ 
is a truncation). We study the problem we call FTM (Finalizability of Tentative
Matches) of deciding if a given tentative match is finalizable in a given
truncated instance. We first show that FTM is coNP-complete, even in the stable
marriage case where the quota of each hospital is restricted to be 1. 

We then introduce and study a special case:
we say that a truncated instance is {\em resident-minimal}, if further
truncation of the preference lists of the residents inevitably changes the set
of tentative matches. Resident-minimal instances are not only practically
motivated but also useful in computations for the general case. We give a computationally useful characterization of negative instances of 
FTM in this special case,  which, for instance,
can be used to formulate an integer program for FTM. 
For the stable marriage case, in particular, 
this characterization yields a polynomial time algorithm
to solve FTM for resident-minimal instances.
On the other hand, we show that FTM remains coNP-complete for resident-minimal instances, 
if the maximum quota of the hospitals is 2 or larger.

We also give a polynomial-time decidable sufficient condition for
a tentative match to be finalizable in the general case.
Simulations show that this sufficient condition
is extremely useful in a two-round matching procedure
based on FTM for a certain type of student-supervisor markets. 
\end{abstract}

\section{Introduction}
In many matching markets in practice which are
modeled by the two-sided, many-to-one 
stable matching problem of Gale and Shapley \cite{gale1962college}@(often
called the {\em hospitals/residents problem}), the task of the participants to 
form a preference list is far from trivial. The large number of participants
in the opposite side makes it practically impossible to evaluate all of them
in enough details to precisely determine the preference order. In some cases,
quite costly procedures such as interviews are involved in the evaluation
process, which makes it even harder to form a precise preference list. 
If each participant is required to submit a complete preference
list in such a market, then the submitted list would be inevitably inaccurate.
Otherwise, the lists submitted would be short.

In the latter case, non-negligible number of participants remain unmatched after
the matching procedure. This is the case, for example, in the National Resident
Matching Program (NRMP) of the United States \cite{NMRP2014}, which assigns candidates 
for residency to positions in hospitals. To provide further opportunities for 
the candidates and positions that have failed to be matched
in the main matching round called MRM (Main Residency Match), 
NRMP organizes a post-match program called SOAP (Supplemental Offer and Acceptance Program).
Unfortunately, the design of this post-match program lacks a
theoretical basis. One clear drawback is that the final matching which 
results from the entire process, MRM followed by SOAP, is not necessarily stable
even under the assumption that, for each agent,
the preference list submitted for MRM followed by the list
used for SOAP, if any, is a truncation of the true preference list, 
despite the popularity of NRMP as a working example of the stable matching model.
Indeed, this drawback does not depend on how SOAP is administered but
is simply due to the possibility that a candidate $r$ unmatched in MRM
may have lost the chance of being accepted by some hospital $h$ he would list
in SOAP because this hospital $h$ has been filled in MRM by candidates possibly
less preferred by $h$ to $r$.  

The algorithmic question studied in this paper is motivated by an approach to
address the above issues: a multi-round stable matching procedure.
This procedure is designed to produce a stable matching as the final outcome,
while allowing participants to incrementally form their preference lists.
In the first round, each participant submits a truncation of its true preference
list, listing only a small number of candidates it ranks the highest. 
The deferred-acceptance (DA)
algorithm of Gale and Shapley \cite{gale1962college} is applied to these
truncated preference lists and stops prematurely with a partial outcome.
In the second and successive rounds, the lists of the participants are
extended, with more agents in the opposite side added in their tails 
as needed to continue the execution of the DA algorithm.

We list potential benefits of such a multi-round procedure.
\begin{enumerate}
  \item In the first round, the participant can concentrate on the
  evaluations of those candidates that are potentially ranked the highest,
  which makes it easier to form an accurate list for the first submission.
  \item In subsequent rounds, some participants do not need to submit extensions
  to their lists, as those extensions are not required by the DA algorithm.
  The saving in the evaluation effort for those participants can be huge.
  \item Even for a participant who does need to submit an extension,
  the evaluation task can be easier since (1)
  some of the remaining candidates may be excluded from considerations as
  it can be deduced (by the central agency) from the submissions so far 
  that they have no possibility to be matched to the participant and  
  (2), except for in the last round,
  the participant may concentrate on those candidate it considers the strongest
  among the remaining candidates, similarly to the situation in the first
  round.
\end{enumerate}

We model the situation after each round in such a multi-round procedure as
follows. The matching is to be made between the set $R$ of residents
and the set $H$ of hospitals. We assume that the true preference lists of 
both the residents and hospitals are
complete, listing all members on the opposite side, and strict, allowing no ties. 
If $J$ is an instance with complete preference lists and
$I$ is obtained from $J$ by truncating some preference lists in $J$,
then we call $I$ a {\em truncation} of $J$ and $J$ a {\em completion}
of $I$. We allow truncations on both sides.

When we run the resident-proposing DA algorithm on an instance $I$ 
executing as many steps as possible in the absence of the
missing parts of the preference lists, the execution results in the
set of {\em tentative matches} for $I$, which we denote by
$\tent(I)$. Here, and throughout the paper, a match means
simply a resident-hospital pair and should not be confused with
a matching, which is a set of matches with certain properties. 
To deal with the truncations of the preference lists
of hospitals to suit our purposes, we need some adaptation of
the standard DA algorithm: hospital $h$ may reject the proposal of
resident $r$ not in its preference list only when $h$ has filled its
quotas by residents which do appear in its list; the proposal of $r$ to $h$ remains
{\em pending} otherwise. See Section~\ref{sec:prelim} for a formal definition
of the adapted DA algorithm and tentative matches for truncated instances.
If $I$ happens to be complete, then $\tent(I)$ is nothing but
the resident-optimal stable matching for $I$ \cite{gale1962college}. When
$I$ is truncated, each match in $\tent(I)$ is truly tentative and may 
eventually be rejected when the DA algorithm continues execution
on some completion of $I$.
We are interested in the following property of tentative matches
and the question on this property.

\begin{definition}
\label{def:finalizable}We say that a match in $\tent(I)$ is {\em
finalizable in $I$} if this match is in $\tent(J)$,
the resident-optimal stable matching for $J$,  
for every completion $J$ of $I$.
\end{definition}

\smallskip
\noindent\textbf{FTM} (Finalizability of Tentative Matches)
\newline\textbf{Instance} An instance $I$ of the hospitals/residents problem
and a match $(r, h) \in \tent(I)$.
\newline\textbf{Question} Is $(r, h)$ finalizable in $I$?
\smallskip

For positive integer $k$, we write $k$-FTM for the version of FTM
where the instances are restricted to those having quota at most $k$
for every hospital: in particular, 1-FTM deals with the one-to-one 
(stable marriage) instances.

In each round of our multi-round procedure, we first compute
$\tent(I)$ by the DA algorithm, where $I$ is the instance specified
by the submissions up to that round, and then compute the set of finalizable matches 
in $\tent(I)$. Those finalizable matches are officially finalized and,
in the succesive rounds, residents in the finalized matches stop
participation and the quotas of the hospitals therein are reduced.
In the final round, each remaining participant is asked to submit 
the complete preference list on the remaining
participants in the opposite side. It is clear from the definition of
finalizability that the matching formed by the entire process is identical to
the one that would be obtained in one round where the participants submit the
complete true preference lists. It is important to note here that 
we do not need to compute the set of finalizable matches exactly: any subset
is sufficient to ensure the correct outcome and a large subset is desirable
for having good progresses through rounds. Thus, even though we have negative
results on the tractability of FTM as described below, they by no means deny the
utility of the notion of finalizability.

Our first result is indeed negative (Theorem~\ref{thm:coNP1}): 
FTM, even 1-FTM in fact, is coNP-complete.

We then look at a special case.  When the resident-proposing
DA algorithm is executed on an instance and
resident $r$ is tentatively matched to hospital $h$ in the outcome,
the tail part of the preference list of $r$ after $h$ 
remains ``unconsumed'' by the algorithm. We say that
an instance $I$ is {\em resident-minimal} if this unconsumed tail
is empty for every resident $r$. Equivalently, 
$I$ is resident-minimal if the set of matches $(r, h)$ such that
$h$ is on the preference list of $r$ equals the set of matches 
that are proposed in the execution of the DA algorithm on $I$.
Resident-minimal instances are of interest for the following reasons.

\begin{enumerate}
  \item A natural and purely algorithm-driven matching procedure
  with incremental submissions would ask for  
  further submissions of participants only when extending their
  preference lists is absolutely necessary for a progress.  In a
  procedure that applies this policy on residents, 
  the instance we have at each execution step
  is resident-minimal. 
  \item 
  Suppose we use a backtrack algorithm to decide if a match is finalizable
  in a general truncated instance $I$, which executes the DA algorithm and
  branches on the next preferred hospital of a resident when it is not given
  in $I$. The extension of $I$ that the algorithm constructs in each branching
  path eventually becomes resident-minimal and the backtrack search
  beyond this search node can be pruned if an efficient algorithm for resident-minimal
  instances is available.
  \item 
  Each general truncated instance $I$ has a further truncation $I'$ that
  is resident-minimal.  The finalizability in $I'$ is a sufficient condition
  for the finalizability in $I$ and therefore an efficient computation
  for resident minimal instances would be useful in estimating 
  the set of finalizable matches in the general case.
\end{enumerate}
Although it is possible to define an analogous notion of hospital-minimal
instances, it is not as natural or as useful as that of resident-minimal
instances mainly because the preference lists of hospitals are not ``sequentially
consumed'' in the resident-proposing algorithm.

We write FTM-RM for FTM (and $k$-FTM-RM for $k$-FTM) in which the instances
are restricted to be resident-minimal. Our main result on FTM-RM is a
computationally useful characterization of negative instances of FTM-RM 
(Theorem~\ref{thm:resident-minimal}).  
This characterization may be used, for example, to formulate an integer program 
or to design a dynamic programming algorithm for FTM-RM. We also give a
polynomial time algorithm for 1-FTM-RM, the stable marriage case, based on
this characterization (Theorem~\ref{thm:1-FTM-RM}).
On the other hand, we show that 2-FTM-RM, and hence FTM-RM, 
remains coNP-complete (Theorem~\ref{thm:coNP2}).

We also develop a polynomial-time decidable sufficient condition for
a tentative match being finalizable (Theorem~\ref{thm:safe-unique}). 
This sufficient condition may be used to compute a subset of finalizable matches
in the proposed multi-round stable matching procedures.
We also show that this condition is necessary for 1-FTM-RM
(Theorem~\ref{thm:necessary}). This gives another proof that 1-FTM-RM is 
polynomial time solvable.

\subsubsection*{Applications}
As in typical Japanese Universities, every student in 
the author's department is required to take a one-year research
project course before graduation, supervised by one of the faculty members.
This situation gives rise to a typical
instance of a many-to-one stable matching problem.

Since 2014, the department has been administering a two-round
stable matching procedure. Supervisors first submit a 
complete rank list of students based on grade scores and 
the results of interviews. 
Then, in the first round of matching, each student lists up to 3 most preferred
supervisors in their rank list.  The DA algorithm is executed
on this truncated instance and the resulting tentative matches are
tested for finalizability using the sufficient condition described in 
Section~\ref{sec:safe}.  Only those students
without finalized matches proceed to the second round, where
they submit a complete preference list of the supervisors who
are not filled by finalized matches. 

This two-round procedure has been working well: a somewhat surprisingly
large number of students are finalized in the first round, resulting in
a huge amount of saving in the evaluation effort and stress on the students'
side.
Due to the lack of publicly disclosable statistics, 
we perform simulations in Section~\ref{sec:simulations} 
to reproduce the phenomenon in a transparent manner.

It would be a challenging research topic to study the
feasibility of replacing 
the current NRMP procedure by a two-round matching procedure
in our approach.  
The advantage of having
a stable matching as a final outcome is attractive
and the success in the smaller scale market described above 
is encouraging. The first step of the feasibility study would be to 
compute the finalizability of the matches produced by the main matching
procedure of NRMP in the past, interpreting the preference lists used in the procedure 
as truncations of the true lists.  This task is challenging, because of the
intractability of FTM and the size of the market.  We note, however, that we may
not necessarily need exact solutions.  Reasonably good estimates on 
the number of finalizable matches may be sufficient for our evaluation purposes.
Theorem~\ref{thm:resident-minimal} (a characterization of negative instances in
the resident-minimal case) and Proposition~\ref{prop:safe} together with
Theorem~\ref{thm:safe-unique} (a polynomial time computable sufficient condition
for finalizability) would be indispensable in computing upper and lower bounds
on that number.

\subsubsection*{Related work}
Truncation of preference lists in the two-sided matching model have been
studied in a different context, namely strategic manipulations.
For example, Roth and Rothblum \cite{roth1999truncation} study the
one-to-one matching case and show that there are instances where a participant
on the proposed side (a hospital in our model) may expect to benefit
significantly by truncating its true preference list, even in the situation
where little information on the preference lists of other 
participants is available.
 
More traditionally, incomplete preference lists arise not as
truncations but as true representations of preferences, where
candidates not on the list are simply meant unacceptable.
There has been active research on the complexity of computing stable matchings
for instances allowing incomplete preference lists and/or ties
(see \cite{iwama2008survey} for a survey).

Rastegari {\it et al.} \cite{rastegari2013two} and
Rastegari {\it et al.} \cite{rastegari2014reasoning} address the 
incompleteness of the preference orders that are inevitable
in large markets in practice. 
They analyze matching markets where the participants
submit their preferences in the form of partial orders that
are consistent with their true total orders.  The
authors of \cite{rastegari2013two} aim at optimizing the number of
interviews needed to sufficiently refine the partial orders 
while the authors of \cite{rastegari2014reasoning}
study the complexity of reasoning about the stable matchings
for the true hidden preference orders using the partial information
available. Our present work may be viewed as dealing with a special case
of their model, where the partial preference order can be represented
in the form of a truncation. 
Although successively extending preference lists is restrictive than
successive refinements of partial orders, it allows the participants
to concentrate on selecting top preferences first and help reduce
the chance of regrets in the submitted lists, compared to the case where
complete lists are required in one shot. The advantage of being restrictive
is that we may have more computationally positive results: none of the positive
results in this paper seem to extend to the general partial order model.

A result analogous to the coNP-completeness of 1-FTM (Theorem~\ref{thm:coNP1})
may be found in their work \cite{rastegari2014reasoning}. 
Restricting themselves to the stable marriage case, 
they consider the problem, among others, of deciding if a given match is a
{\em necessary match}, that is, if it is contained in the resident-optimal
(employer-optimal, in their setting) stable matching for every
completion of the given partial orders into total orders.  
They show that this problem is
coNP-complete. Theorem~\ref{thm:coNP1} in our present paper implies that this
hardness holds for special instances 
where the partial orders are restricted to those representable by truncations.
Their result does not imply our result and, moreover,
neither does their proof, since their reduction uses
partial orders that are not representable by truncations.

The rest of this paper is organized as follows. Section~\ref{sec:prelim}
gives preliminaries of this paper. In Section~\ref{sec:coNP} we prove
the coNP-completeness of 1-FTM. In Section~\ref{sec:r-minimal} we study
resident-minimal instances. In Section~\ref{sec:safe}, we give the sufficient
condition for finalizability. In Section~\ref{sec:simulations}, we
describe our simulation results on the student-supervisor assignment
problem. We conclude the paper in Section~\ref{sec:future},
where we point to some directions for future work.

\section{Preliminaries}
\label{sec:prelim}
Formally, an instance $I$ of our hospitals/residents problem 
is a 5-tuple $(R, H,$
$\{q_h\}_{h \in H}, \{\lambda_r\}_{r \in R}, \{\pi_h\}_{h
\in H})$, where $R$ is the set of {\em residents}, $H$ is the set of
{\em hospitals}, $q_h$ for each $h$ is the quota of $h$, 
$\lambda_r$ for each $r \in R$ is
the preference list of $r$ on $H$, 
and $\pi_h$ for each $h \in H$ is the preference list of $h$ on $R$.
The first three components $R$, $H$, and $\{q_h\}_{h \in H}$ will always be
denoted by these symbols and when we speak of various instances in a context,
these components will be common among those instances: only the preference
lists will vary. 
A preference list on a set $S$ is {\em complete} if it lists all members of $S$.
Our instances in general may have preference lists that are not complete.
An instance is {\em resident-complete} ({\em hospital-complete}, resp.) 
if the preference list of each {\em resident} ({\em hospital}, resp.) is
complete. It is {\em complete} if it is both resident- and hospital-complete.
A {\em match} is a pair in $R \times H$: we say match $(r, h)$
{\em involves} $r$ and $h$.
For each set of matches $M$ and $h \in H$, we define  
$\res\,M = \{r \mid (r, h) \in M \mbox{ for some $h \in H$}\}$
and $\res_h\,M = \{r \mid (r, h) \in M\}$.
A set $M$ of matches is a {\em matching} in $I$ if
$M$ contains at most one match that involves $r$, for each $r \in R$, 
and $|\res_h\,M| \leq q_h$ for each $h \in H$.

We use $+$ operator for concatenation of
sequences and for appending or prepending elements to sequences.
A sequence $\alpha$ is a {\em prefix} of a sequence $\beta$, and 
$\beta$ is an {\em extension} of $\alpha$, 
if $\beta$ can be written as $\alpha$ + $\gamma$ for 
some possibly empty sequence $\gamma$. 
The length of sequence $\alpha$ is denoted by $|\alpha|$. 
Our sequences will never have
duplicate elements and therefore the length of $\alpha$ is precisely the
cardinality of the set of elements in $\alpha$.
 
We say that an instance $I$ is an {\em extension} of an instance $J$,
and that $J$ is a {\em truncation} of $I$,
if the preference list of each resident and each hospital in $I$ is
the extension of that in $J$. An extension $I$ of $J$
is {\em resident-changeless} ({\em hospital-changeless}, resp.) 
if the preference list of each resident (hospital, resp.) is
identical in $I$ and $J$. 

We apply the DA algorithm to a truncated instance $I$ and 
stop when further execution is not possible due to 
the truncations.  We represent the intermediate result
of this execution by two sets of matches:
{\em tentative matches} are those that have been proposed
but not rejected in the algorithm execution;
{\em pending matches} are tentative matches that have not been
rejected because of the incompleteness of the preference list of the
hospital involved.  See below for a more precise definition. 

We formalize the execution of our version of the DA algorithm 
as {\em event sequences}.  
Let $I$ be an instance.
An {\em event} for $I$ is either $(r, h)^+$, the {\em
proposal} of a match $(r, h)$, or $(r, h)^{-}$, the {\em
rejection} of a match $(r, h)$. 
Let $\sigma$ be a sequence of events for $I$ 
(or an {\em event sequence} for $I$, for short).  
We say that $(r, h)$ is {\em proposed (rejected, resp.) in $\sigma$} 
if $(r, h)^+$ ($(r, h)^-$, resp.) appears in $\sigma$.
We denote by $\prop(\sigma)$ ($\rej(\sigma)$, resp.) the
set of matches that are proposed (rejected, resp.) in $\sigma$.
We define $\tent(\sigma) = \prop(\sigma) \setminus \rej(\sigma)$ and
call a match in $\tent(\sigma)$ {\em tentative in $\sigma$}. 
In words, a match is tentative in $\sigma$ if it is proposed but
not rejected in $\sigma$. If a tentative match $(r, h)$ in $\sigma$ is such that
$r$ is not in the preference list of $h$ in the instance $I$ then it is a 
{\em pending match in $\sigma$ with respect to $I$}.  We denote by
$\pend_I(\sigma)$ the set of pending matches in $\sigma$ with respect to $I$.

\begin{remark}
In the standard DA algorithm which regards a missing resident $r$ in the
preference list of hospital $h$ as unacceptable to $h$, there is no notion of
pending matches: the proposal of $r$ to $h$ can be immediately rejected.
In our version, such a proposal should be pending, unless $h$ has filled its
quota with proposals from residents in its list, as we intend to continue the
algorithm when more residents are added to the preference list of $h$.
Also note that the set $\tent(\sigma)$ depends only on the event sequence
$\sigma$ and not on the instance while the set $\pend_I(\sigma)$ depends on
the instance.
\end{remark}

Let $I$ be an instance and $M$ an arbitrary set of matches.  
We say that $(r, h) \in M$ is {\em ousted from $M$ in $I$},
if the preference list of $h$ contains at least $q_h$ residents from
$\res_h\,M$ and either $r$ is missing from this list or
preceded by $q_h$ or more residents from $\res_h\,M$ in this list.
In other words, $(r, h)$ is ousted from $M$ in $I$, if,
no matter how the preference list of $h$ in $I$ is completed,
$r$ is not among the top $q_h$ members of $M$ in the completed list.
We let $\ousted_I(M)$ denote the set of matches ousted from $M$ in $I$.
We say that an event sequence is {\em $I$-feasible} if 
it can be shown so by the inductive procedure below.
\begin{enumerate}
\item An empty sequence is $I$-feasible.
\item Suppose an event sequence $\sigma$ is $I$-feasible.
Then, $\sigma + (r, h)^+$ for each $(r, h) \in R \times H$
is $I$-feasible if $r \not\in \res\,\tent(\sigma)$, $(r, h) \not\in
\prop(\sigma)$, $h$ appears in the preference list of $r \in I$,
and $(r, h') \in \rej(\sigma)$ for every $h' \in H$
that precedes $h$ in the preference list of $r$ in $I$.
On the other hand, 
$\sigma + (r, h)^-$ for each $(r, h) \in R \times H$
is $I$-feasible if $(r, h)$ is in $\ousted_I(\prop(\sigma)) \setminus
\rej(\sigma)$. 
\end{enumerate}
\begin{remark}
\label{rem:ousted}
We have $\ousted(\prop(\sigma)) \setminus \rej(\sigma) =
\ousted(\tent(\sigma))$ for $I$-feasible $\sigma$.
Therefore, the condition for the $I$-feasibility of $\sigma + (r, h)^-$ 
above may be expressed as $(r, h) \in \ousted(\tent(\sigma))$, 
which is more consistent with the 
traditional definition of the DA algorithm.
We use the condition in the present form, since it makes the monotonicity of the
feasibility expressed by the following proposition obvious. 
\end{remark}

For brevity, we refer to $I$-feasible event sequences simply as $I$-feasible
sequences.
\begin{proposition}
\label{prop:order_insensitive}
Let $I$ be an instance and let $\sigma + e$ be an $I$-feasible
sequence where $e$ is a single event. Then, for each $I$-feasible
sequence $\sigma'$ that contains all events in $\sigma$ but not $e$, 
$\sigma' + e$ is $I$-feasible.
\end{proposition}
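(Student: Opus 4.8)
The plan is to reduce the claim to checking that the single event $e$ satisfies its own feasibility condition when $\sigma'$ is used in place of $\sigma$. First I would note that, since $\sigma+e$ arises from the inductive procedure, its prefix $\sigma$ is itself $I$-feasible and $e$ satisfies the relevant clause of the second defining rule with respect to $\sigma$; as $\sigma'$ is $I$-feasible by hypothesis, showing $\sigma'+e$ is $I$-feasible amounts to verifying that same clause with $\sigma'$ substituted for $\sigma$. Two monotonicity facts will do most of the work. Because $\sigma'$ contains every event of $\sigma$, we have $\prop(\sigma)\subseteq\prop(\sigma')$ and $\rej(\sigma)\subseteq\rej(\sigma')$; and $\ousted_I$ is monotone in its set argument, since enlarging $M$ enlarges each $\res_h\,M$ and can only help the conditions ``the list of $h$ holds at least $q_h$ residents of $\res_h\,M$'' and ``$r$ is missing or preceded by $q_h$ of them''. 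Finally, since $e$ is the only event that can place its match into $\prop$ (if $e=(r,h)^+$) or into $\rej$ (if $e=(r,h)^-$), the assumption that $\sigma'$ omits $e$ gives at once $(r,h)\notin\prop(\sigma')$ in the first case and $(r,h)\notin\rej(\sigma')$ in the second.

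When $e=(r,h)^-$, the verification is immediate. Feasibility of $\sigma+e$ gives $(r,h)\in\ousted_I(\prop(\sigma))\setminus\rej(\sigma)$. Monotonicity of $\ousted_I$ together with $\prop(\sigma)\subseteq\prop(\sigma')$ yields $(r,h)\in\ousted_I(\prop(\sigma'))$, and $(r,h)\notin\rej(\sigma')$ by the omission of $e$; these are exactly the conditions for $\sigma'+e$ to be $I$-feasible.

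When $e=(r,h)^+$, three of the four conditions defining feasibility of a proposal go through cheaply: that $h$ lies on the preference list of $r$ is independent of the event sequence; that $(r,h)\notin\prop(\sigma')$ is the omission fact above; and that every hospital $h'$ preceding $h$ in $r$'s list has $(r,h')\in\rej(\sigma')$ follows from the corresponding statement for $\sigma$ via $\rej(\sigma)\subseteq\rej(\sigma')$. The hard part will be the remaining condition $r\notin\res\,\tent(\sigma')$, since $\tent=\prop\setminus\rej$ is not monotone and is not controlled by the containments alone. Here I would exploit that the ``earlier hospitals already rejected'' clause forces the set of hospitals a single resident has proposed to, in any $I$-feasible sequence, to be a prefix of that resident's preference list; in particular, applied to $e$ relative to $\sigma$ it already places $(r,h')\in\rej(\sigma)\subseteq\rej(\sigma')$ for every $h'$ strictly preceding $h$. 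I would then argue by contradiction: if $(r,h'')\in\tent(\sigma')$, then $h''$ cannot strictly precede $h$ (its match with $r$ is rejected in $\sigma'$) nor equal $h$ (its match is the omitted, hence unproposed, $e$), so $h''$ strictly follows $h$; but feasibility of the proposal $(r,h'')^+$ inside $\sigma'$ then forces $(r,h)\in\rej(\sigma')$, whence $(r,h)\in\prop(\sigma')$, contradicting the omission of $e$. Thus no such $h''$ exists, $r\notin\res\,\tent(\sigma')$, and the proposal case is complete. The only genuinely nonroutine ingredient is this per-resident ordering argument; everything else is the ``obvious'' monotonicity advertised in Remark~\ref{rem:ousted}.
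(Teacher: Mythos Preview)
Your proof is correct. The paper provides no proof at all for this proposition: Remark~\ref{rem:ousted} simply asserts that the chosen form of the rejection clause makes the monotonicity ``obvious'', and the text moves on. Your argument is exactly the verification one would write out to justify that claim. The rejection case is indeed the straightforward monotonicity the remark points to; for the proposal case you correctly isolate the one clause, $r\notin\res\,\tent(\sigma')$, that is \emph{not} monotone in the event sets and handle it with the per-resident prefix argument. One small step you pass over quickly: the implication $(r,h)\in\rej(\sigma')\Rightarrow(r,h)\in\prop(\sigma')$ holds because $\ousted_I(M)\subseteq M$ by definition, so any feasible rejection is of a match already in $\prop$; this is routine but is what actually closes your contradiction.
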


Given instance $I$, the execution of the DA algorithm on $I$ results
in an arbitrary, due to the non-determinacy of the algorithm, but maximal
$I$-feasible sequence.
The following observation that is well known for the standard
DA algorithm \cite{gale1962college} holds also for our variant.
\begin{proposition}
\label{prop:unique}
Let $I$ be an instance and let $\sigma$
and $\sigma'$ be two maximal $I$-feasible sequences.
Then, $\sigma$ and $\sigma'$ contain the same set of events.
\end{proposition}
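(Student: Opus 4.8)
The plan is to derive the uniqueness of the event set from the order-insensitivity of feasibility (Proposition~\ref{prop:order_insensitive}) via a first-difference exchange argument, so that none of the combinatorial details of proposals, rejections, or ousting need to be re-examined directly. First I would record the elementary fact that every prefix of an $I$-feasible sequence is itself $I$-feasible; this is immediate from the inductive definition, since each construction step appends exactly one event to an already feasible sequence, and hence truncating at any point leaves a feasible sequence.

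The main argument is by contradiction. Suppose $\sigma$ and $\sigma'$ are maximal $I$-feasible sequences whose sets of events differ. Then, without loss of generality, some event appearing in $\sigma$ does not appear in $\sigma'$. Scanning $\sigma$ from its front, let $e$ be the first event of $\sigma$ that is absent from $\sigma'$, and write $\sigma = \tau + e + \rho$. By the choice of $e$, every event occurring in $\tau$ also occurs in $\sigma'$, whereas $e$ does not. Since $\tau + e$ is a prefix of the $I$-feasible sequence $\sigma$, the preliminary fact shows it is $I$-feasible.

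Now I would invoke Proposition~\ref{prop:order_insensitive} with $\tau$ and $\sigma'$ in the roles of ``$\sigma$'' and ``$\sigma'$'' of its statement: $\tau + e$ is $I$-feasible, and $\sigma'$ is an $I$-feasible sequence that contains all events of $\tau$ but not $e$. The proposition then yields that $\sigma' + e$ is $I$-feasible. This contradicts the maximality of $\sigma'$, which by definition admits no extension by a single event. Hence no such $e$ exists, every event of $\sigma$ occurs in $\sigma'$, and by the symmetric argument the two sequences contain exactly the same set of events.

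I expect the only delicate point to be the precise matching of hypotheses when applying Proposition~\ref{prop:order_insensitive}, namely verifying that $\sigma'$ contains all of $\tau$'s events and none equal to $e$. This is exactly what the ``first missing event'' choice of $e$ guarantees, so once that choice is made the verification is routine. Everything else is bookkeeping, since all the substantive exchange content has already been isolated in Proposition~\ref{prop:order_insensitive}.
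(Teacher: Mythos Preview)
Your proposal is correct and follows essentially the same approach as the paper's proof: pick the first event in one maximal sequence missing from the other, invoke Proposition~\ref{prop:order_insensitive} to append it to the other sequence, and contradict maximality. The only cosmetic difference is that the paper picks $e$ from $\sigma'$ and contradicts the maximality of $\sigma$, whereas you pick $e$ from $\sigma$ and contradict the maximality of $\sigma'$; your write-up is also a bit more explicit about the prefix being $I$-feasible, which the paper leaves implicit.
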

\begin{proof}
Suppose $\sigma'$ contains an event that is not in $\sigma$.
Among such events, choose one that appears first in $\sigma'$
and call it $e$. Then $\sigma$ followed by $e$ is
$I$-feasible, due to Proposition~\ref{prop:order_insensitive}, 
contradicting the maximality of $\sigma$. Therefore, $\sigma'$ does
not have any event not in $\sigma$ and vice versa.
\qed
\end{proof}

For an instance $I$ with a maximal $I$-feasible sequence
$\sigma$, we denote $\prop(\sigma)$,  $\rej(\sigma)$,
$\tent(\sigma)$, and $\pend_I(\sigma)$ by
$\prop(I)$,  $\rej(I)$,
$\tent(I)$, and $\pend(I)$. This notation is justified
since these sets do not depend on the choice of $\sigma$ and
determined solely by $I$, by Proposition~\ref{prop:unique}.
We say that instance $I$ {\em proposes} ({\em rejects}, resp.) a match if it is
in $\prop(I)$ ($\rej(I)$, resp.).

We say that an event sequence is {\em feasible}
if it is $I$-feasible for some instance $I$.

\section{Hardness of 1-FTM}
\label{sec:coNP}
To prove hardness results we use the following folklore.
A similar statement appears in the description of SATISFIABILITY problem in
Garey and Johnson \cite{garey2002computers}. We include a proof for
self-containedness.

\begin{proposition}
\label{prop:1-2SAT}
SAT is NP-complete, even when restricted to a clause set
in which each variable appears exactly twice positively and exactly
once negatively. 
\end{proposition}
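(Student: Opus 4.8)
The plan is to reduce from the standard NP-complete version of SAT and massage the variable occurrences until each variable appears exactly twice positively and once negatively. I would start from a known hardness result such as 3-SAT, or better, from the fact that SAT remains NP-complete when each variable has at most some bounded number of occurrences. The goal pattern (two positive, one negative per variable) is a rigid occurrence constraint, so the work is entirely in gadget-building: I must replace an arbitrary variable with a cluster of new variables whose satisfying assignments force consistency, while arranging the occurrence counts to land exactly on the prescribed pattern.

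The key device will be \emph{consistency clauses} using implications of the form $(\neg x_i \vee x_{i+1})$. If a variable $x$ occurs $k$ times in the original formula, I introduce copies $x_1, \dots, x_k$ and add a cycle of implications $x_1 \Rightarrow x_2 \Rightarrow \cdots \Rightarrow x_k \Rightarrow x_1$, i.e.\ the clauses $(\neg x_j \vee x_{j+1})$ for each $j$ (indices mod $k$). Any satisfying assignment must set all copies equal, so the reduction preserves satisfiability, and each literal occurrence in the original formula can be served by a distinct copy. The accounting is then the crux: in the cycle, each $x_j$ appears once negatively (in $(\neg x_j \vee x_{j+1})$) and once positively (in $(\neg x_{j-1}\vee x_j)$), and it receives one further occurrence from serving an original literal. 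I would choose the orientation of each cycle clause and, if necessary, pad with a few additional tautological or equality clauses so that every variable ends with exactly two positive and exactly one negative occurrence. Care is needed because serving a \emph{negative} original literal contributes the wrong sign, so I may need to route negated occurrences through a complementary copy or add a short balancing gadget.

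The part I expect to be the main obstacle is precisely this parity/occurrence bookkeeping: hitting the exact counts $(2,1)$ for \emph{every} variable simultaneously, rather than merely bounding the number of occurrences. Bounded-occurrence SAT hardness is standard, but forcing an exact signature typically requires introducing dummy variables and clauses that absorb the surplus occurrences without changing satisfiability. I would handle this by designing a small fixed gadget that, given a variable with a deficit or surplus in one sign, attaches new variables carrying tautological clauses (clauses satisfiable independently of the original formula) so the counts can be tuned locally. Verifying that these padding clauses are themselves internally consistent with the $(2,1)$ pattern is the delicate closing step, and I would check it by a direct case analysis on how many positive versus negative slots each newly introduced variable still needs.

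Finally, I would confirm membership in NP (immediate, since a satisfying assignment is a polynomial-size certificate) and that the reduction runs in polynomial time (the number of added variables and clauses is linear in the size of the input formula). Putting these together establishes that SAT restricted to the stated occurrence pattern is NP-complete.
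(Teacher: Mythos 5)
Your core construction is exactly the paper's: replace a variable occurring $k$ times by copies $x_1,\dots,x_k$, one per occurrence, and add the cyclic implication clauses $(\bar{x}_j \vee x_{j+1})$ (indices mod $k$) to force all copies to take the same value. After this step every variable occurs exactly three times, with at least one positive and at least one negative occurrence, and the reduction is clearly polynomial. Up to this point you and the paper agree.

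The genuine gap is in what you yourself flag as the ``delicate closing step.'' You correctly observe that a copy serving a negative original literal ends up with the signature one positive, two negative, but your proposed remedies --- routing negated occurrences through a complementary copy, or padding with tautological or balancing clauses --- are both unnecessary and hazardous: any padding variables you introduce must themselves hit the exact $(2,1)$ signature, so this route threatens a regress, and you never exhibit a concrete gadget that terminates it. The paper closes the proof with a one-line observation you missed: since every variable now has exactly three occurrences with both signs represented, its signature is either $(2,1)$ or $(1,2)$, and in the latter case one simply replaces the variable by its negation (rename $x$ to $\bar{y}$, swapping the roles of positive and negative occurrences everywhere), which is a satisfiability-preserving relabeling and turns $(1,2)$ into $(2,1)$. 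No new variables or clauses are needed. A minor further omission: variables with at most one occurrence should be disposed of first (fix their value and simplify), which is why the paper may assume $k \geq 2$. With the negation-renaming observation your argument becomes complete; without it, the occurrence bookkeeping you identify as the main obstacle remains unresolved.
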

\begin{proof}
Let a clause set $S$ be given. 
We show below that $S$ can be converted into a clause set 
$S'$, without changing the satisfiability, in which 
each variable appears exactly three times and moreover
at least once positively and at least once negatively.
By replacing some variables by their negations if necessary, $S'$ may further
be converted into a clause set satisfying the condition of the
proposition. 

Suppose variable $x$ occurs $k$ times in $S$. We may assume $k \geq 2$
since otherwise the value of $x$ can be fixed without changing the
satisfiability.
Then, we replace occurrences of $x$ by distinct new variables $x_i$, $1 \leq i
\leq k$, and add clauses $\bar{x_i} \vee x_{i + 1}$ for $1 \leq i \leq k$, where
$x_{k + 1} = x_1$, to force these variables to take the same value.
The clause set $S'$ is obtained from $S$ by doing this for all variables.
\qed 
\end{proof}

\begin{theorem}
\label{thm:coNP1}
1-FTM, and hence FTM, is coNP-complete.
\end{theorem}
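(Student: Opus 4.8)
The plan is to prove membership in coNP first and then coNP-hardness by a reduction from SAT; the ``hence FTM'' clause is immediate, since $1$-FTM is the restriction of FTM to quota-$1$ instances and the coNP-membership argument below applies verbatim to arbitrary quotas.

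For membership, observe that a match $(r,h)$ fails to be finalizable in $I$ exactly when some completion $J$ of $I$ satisfies $(r,h)\notin\tent(J)$. Such a completion is a polynomial-size object (each list is made complete, of total size $O(|R|\,|H|)$), and given $J$ one computes $\tent(J)$ in polynomial time by running the DA algorithm to a maximal $J$-feasible sequence, which is well defined by Proposition~\ref{prop:unique}; checking $(r,h)\notin\tent(J)$ is then trivial. Hence the complement of $1$-FTM is in NP, so $1$-FTM is in coNP.

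For hardness I would reduce from the restricted SAT of Proposition~\ref{prop:1-2SAT}, in which each variable occurs exactly twice positively and once negatively. From such a formula $\varphi$ I would construct, in polynomial time, a quota-$1$ instance $I$ and a tentative match $(r^*,h^*)\in\tent(I)$ so that $(r^*,h^*)$ is \emph{not} finalizable in $I$ if and only if $\varphi$ is satisfiable; equivalently $(r^*,h^*)$ is finalizable precisely when $\varphi$ is unsatisfiable. Since UNSAT is coNP-complete, this establishes coNP-hardness. The construction is driven by a single \emph{trigger} resident $w$ together with the target hospital $h^*$, where $h^*$ prefers $w$ to $r^*$. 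In $I$ the list of $w$ is truncated so that $w$ cannot reach $h^*$, so $r^*$ sits safely at $h^*$ and $(r^*,h^*)$ is tentative; every completion routes $w$, once its truncated portion is exhausted, through a fixed sequence of \emph{clause hospitals} and finally to $h^*$. A clause hospital $h_c$ ousts $w$ (and so lets the rejection chain advance) exactly when $h_c$ is already occupied by one of the more-preferred \emph{literal residents} of $c$; if no literal resident occupies $h_c$, then $w$ is held at $h_c$ and the chain dies before reaching $h^*$. Thus $w$ displaces $r^*$ in a completion if and only if every clause hospital is occupied, i.e.\ every clause has a satisfied literal.

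The remaining task, and the crux of the proof, is to build \emph{variable gadgets}, using truncations only, whose completions correspond exactly to truth assignments and enforce consistency: for each variable $x$ the completion must make the two positive literal residents of $x$ occupy their clause hospitals, or the single negative literal resident occupy its clause hospital, but never both, so that the set of clauses satisfied during the DA run can never exceed that satisfied by a genuine assignment. I would obtain this by a small competition gadget---a quota-$1$ control hospital for $x$ contested by a positive enabler and a negative enabler---whose winner determines which literal residents are activated, with the activation realized by how certain truncated lists are completed. The main obstacle is precisely this consistency requirement, since the earlier reduction in the literature used general partial orders, which this proof must avoid; one must verify that \emph{every} completion, not merely those of a prescribed form, induces a legal assignment, that the pending/ousting semantics make the truncated instance $I$ behave as intended (so that $(r^*,h^*)\in\tent(I)$ and the DA run halts at the correct place), and that no spurious rejection chain can break $(r^*,h^*)$ in the absence of a satisfying assignment. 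Once these invariants are checked, the equivalence between completions breaking $(r^*,h^*)$ and satisfying assignments of $\varphi$ follows, completing the reduction.
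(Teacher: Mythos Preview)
Your overall plan---reduce from the restricted SAT of Proposition~\ref{prop:1-2SAT}, use a trigger resident that must cascade through one clause hospital per clause before dislodging the target match, and encode assignments via truncated variable gadgets---is exactly the paper's approach. The coNP-membership argument is fine.

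There is, however, a concrete error in the placement of truncations. You say that the list of the trigger $w$ is truncated so that $w$ cannot reach $h^*$, and that ``every completion routes $w$, once its truncated portion is exhausted, through a fixed sequence of clause hospitals and finally to $h^*$.'' These two statements are incompatible: if the clause hospitals and $h^*$ are absent from $w$'s list in $I$, then a completion is free to order them however it likes, and in particular a completion can place $h^*$ \emph{first} among the appended hospitals. In that completion $w$ proposes straight to $h^*$ and dislodges $r^*$ regardless of whether $\varphi$ is satisfiable, so the ``if unsatisfiable then finalizable'' direction fails. The fix, which is what the paper does, is to give the trigger resident (called $r_0$ there) a \emph{fully specified} list $h_1,\ldots,h_m,h_0$ in $I$; the reason $r_0$ does not reach $h_0$ in $I$ is not truncation of its own list but that $r_0$ is tentatively accepted at $h_1$, since no literal resident has yet proposed there. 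All truncations live strictly inside the variable gadgets.

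The paper's variable gadget also differs from your sketch. Rather than a control hospital with a truncated list contested by two enablers, it introduces, for each variable $x$, two \emph{residents} $r_x^1,r_x^2$ with empty (i.e.\ fully truncated) lists. The first choice that a completion assigns to $r_x^i$ is the branching bit: sending $r_x^i$ to $h_x^{i}$ ousts the positive-literal resident $p_x^i$ toward its clause hospital, while sending both $r_x^1,r_x^2$ to $h_x^{-1},h_x^{-2}$ ousts the negative-literal resident $p_x^0$. This guarantees the mutual exclusion you identify as the crux (activating $p_x^0$ precludes activating either $p_x^i$), and it does so using only resident-side truncations. Your ``contested control hospital'' idea could plausibly be made to work via hospital-side truncation and the pending semantics, but you would still need to show how a single freed enabler activates \emph{two} positive-literal residents, and you would need to rule out side effects when the control residents' completed lists send them to unrelated hospitals; the paper's gadget avoids both issues cleanly.
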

\begin{proof}
That FTM is in coNP is trivial. We prove the hardness by
reducing SAT to the complement of 1-FTM.

Let $S$ be an arbitrary set of SAT clauses.  Let $X$ be the
set of variables of $S$, and $C_1$, \ldots, $C_m$ the enumeration
of clauses in $S$.  Relying on Proposition~\ref{prop:1-2SAT},
we assume that each variable occurs positively in exactly 
two clauses and negatively in exactly one clause.

We construct an instance 
$I = (R, H, \{q_h\}_{h \in H},$
$\{\pi_h\}_{h \in H}, 
\{\lambda_r\}_{r \in R})$ as follows. 
$R$ consists of two distinguished residents $r_0$ and $r_1$, together with
distinct residents $r_x^1$, $r_x^2$, $p_x^0$, $p_x^1$, and $p_x^2$ for each $x
\in X$. Fix $x \in X$. Let $C_{j_0}$,  $C_{j_1}$, and $C_{j_2}$ be the three
clauses in which $x$ appears and assume that the occurrence of $x$ in
$C_{j_0}$ is negative. We say
that resident $p_x^i$, $i = 0, 1, 2$, is {\em associated with} the occurrence
of $x$ in $C_{j_i}$.
$H$ consists of a distinguished hospital $h_0$
together with distinct hospitals $h_x^{-1}$, $h_x^{-2}$, $h_x^{1}$, and
$h_x^{2}$ for each $x \in X$ and distinct hospitals $h_j$ for each $1 \leq j \leq m$.
The preference lists of the hospitals are as follows.
The unspecified parts of the lists are immaterial.
\begin{enumerate}
  \item The list of $h_0$ starts with $r_0$ followed by $r_1$.
  \item For $x \in X$ and $i = 1, 2$, the list of $h_x^{-i}$ starts with $r_x^i$
  followed by $p_x^{0}$.
  \item For $x \in X$ and $i = 1, 2$, the list of $h_x^{i}$ starts with $r_x^i$
  followed by $p_x^{i}$.
  \item For $1 \leq j \leq m$, the list of $h_j$ starts with the 
  residents (of the form $p_x^{0}$, $p_x^1$, or $p_x^2$ for some $x$) 
  that are associated with the variable occurrences in $C_j$, in an
  arbitrary order, followed by $r_0$.
\end{enumerate}
The preference lists of the residents are as follows.
Unlike in the description above for hospitals, these lists are truncated
after the specified elements.
\begin{enumerate}  \item The list of $r_0$ is $h_1$, $h_2$, \ldots $h_m$, followed by 
  $h_0$.
  \item The list of $r_1$ consists solely of $h_0$.
  \item For each $x \in X$, the lists of $r_x^{1}$ and $r_x^{2}$ are
  empty.
  \item For each $x \in x$, the list of $p_x^{0}$ starts with
  $h_x^{-1}$ and $h_x^{-2}$ in this order, followed by $h_j$, where
  $C_j$ is the clause that contains the variable occurrence with which
  resident $p_x^{0}$ is associated.
  \item For each $x \in x$ and $i = 1, 2$, the list of $p_x^i$ starts with
  $h_x^{i}$ followed by $h_j$, where
  $C_j$ is the clause that contains the variable occurrence with which
  resident $p_x^{i}$ is associated.
\end{enumerate}

See Figure~\ref{fig:reduction0} for an example.

\begin{figure}[htb]
\begin{center}
\includegraphics[height=2.8in,bb=0 0 922 518]{fig0.png}
\end{center}
(a) Clause set $S$\\
(b) The instance of 1-FTM corresponding to $S$\\
Tentative matches are shown in bold face from both sides
\caption{Reduction from SAT to the complement of 1-FTM}
\label{fig:reduction0}
\end{figure}

First observe that $\tent(I) = \{(r_1, h_0), (r_0, h_1)\} \cup 
\{(p_x^{0}, h_x^{-1}), (p_x^{1}, h_x^1), (p_x^{2}, h_x^2) \mid x \in X\}$.
We show that $(r_1, h_0)$ is not finalizable in $I$ if and only
if $S$ is satisfiable.

Let $J$ be an extension of $I$. We say that resident $p$ of the form
$p_x^{i}$ is {\em activated in} $J$, if match $(p, h_j)$ is proposed in $J$,
where $h_j$ is such that $C_j$ contains the 
variable occurrence to which $p$ is associated and hence $h_j$ is the last
entry of the preference list of $p$ in $I$. 
Observe that, $p_x^{0}$ is activated if and only if $r_x^1$ chooses $h_x^{-1}$ and 
$r_x^2$ chooses $h_x^{-2}$ as their first hospitals on their lists.
Similarly, $p_x^{i}$, $i =1, 2$, is activated if and only if 
$r_x^i$ chooses $h_x^{i}$.  Therefore, for each $x \in X$, 
the two events (1) $p_x^{0}$ is activated and (2)
both $p_x^{1}$ and $p_x^{2}$ are activated are mutually exclusive and, moreover,
we may choose the way the lists of $r_x^1$ and $r_x^2$ are extended
so that at least one of (1) and (2) happens.  Thus, the activation
of residents $p_x^0$, $p_x^1$, and $p_x^2$ can properly simulate
the truth assignment to variable $x$.

Also observe that $(r_1, h_0)$ is rejected if and only if 
$(r_0, h_0)$ is proposed, which happens if and only if 
there is a chain of rejections/proposals of resident $r_0$ through the
hospitals $h_1$, \ldots, $h_m$ leading to this proposal.
Since the pair $(r_0, h_j)$ is rejected, provided that this pair is proposed, 
if and only if at least one resident on the list of $h_j$ that
is associated with a variable occurrence in $C_j$ is activated,
we conclude that $S$ is satisfiable if and only if there is
an extension of $I$ in which $(r_1, h_0)$ is rejected.
\qed
\end{proof}

\section{Resident-minimal instances}
\label{sec:r-minimal}
In this section, we study resident-minimal instances.

\subsection{Simple extensions and prescriptions}
In this subsection, we define the notions of simple extensions and
prescriptions, which characterize negative instances of FTM-RM.
Hospital-complete instances play an important role here.  

\begin{proposition}
\label{prop:hospital-completion}
Let $I$ be a resident-minimal instance and $J$ a 
resident-changeless and hospital-complete extension of $I$.
Then $J$ is also resident-minimal.
We also have
$\prop(I) = \prop(J)$ and $\tent(I) \setminus \pend(I) \subseteq \tent(J)
\subseteq \tent(I)$.
\end{proposition}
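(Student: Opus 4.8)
The plan is to exploit the fact that a resident-changeless extension appends residents only to the \emph{tails} of the hospital lists, so that the prefix order is preserved; this lets me compare the ousting relations in $I$ and $J$ cleanly. Concretely, I would first prove the pointwise monotonicity $\ousted_I(M) \subseteq \ousted_J(M)$ for every set $M$ of matches, together with the sharper statement that for a match $(r,h)$ with $r$ on the $I$-list of $h$ we have $(r,h) \in \ousted_I(M)$ if and only if $(r,h) \in \ousted_J(M)$. The point is that if $r$ lies on $h$'s $I$-list then, since that list is a prefix of $h$'s $J$-list, the residents preceding $r$ are exactly the same in both lists, so the counting conditions in the definition of ousting coincide; if instead $r$ is absent from $h$'s $I$-list, then all of $h$'s $I$-list precedes $r$ in the complete $J$-list, so every witness to $(r,h) \in \ousted_I(M)$ remains a witness in $J$.

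Using this, the next step is to show by induction on length that every $I$-feasible sequence is $J$-feasible. The proposal clause transfers verbatim because $I$ and $J$ have identical resident lists (and $\tent(\sigma)$, $\prop(\sigma)$ depend only on $\sigma$), while the rejection clause transfers by the monotonicity $\ousted_I(\prop(\sigma)) \subseteq \ousted_J(\prop(\sigma))$. I would then fix a maximal $I$-feasible sequence $\sigma$; by the above it is $J$-feasible, so I can extend it to a maximal $J$-feasible sequence $\tau = \sigma + \rho$. Resident-minimality of $I$ says that $\prop(\sigma) = \prop(I)$ already contains every match $(r,h)$ with $h$ on the list of $r$, so no proposal event can be appended to $\sigma$ in $J$ either (the resident lists being the same). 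Hence $\rho$ consists solely of rejections, which immediately gives $\prop(J) = \prop(\sigma) = \prop(I)$ and, via the equivalent formulation of resident-minimality, that $J$ is resident-minimal; it also gives $\rej(I) = \rej(\sigma) \subseteq \rej(J)$ and therefore $\tent(J) = \prop(I) \setminus \rej(J) \subseteq \prop(I) \setminus \rej(I) = \tent(I)$.

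For the remaining inclusion $\tent(I) \setminus \pend(I) \subseteq \tent(J)$, I would first record the convenient characterization, valid in any maximal feasible sequence and following from Proposition~\ref{prop:order_insensitive} and Proposition~\ref{prop:unique} together with the monotonicity of ousting in its argument $M$, that a match is rejected exactly when it is ousted from the final proposed set: $(r,h) \in \rej(I)$ iff $(r,h) \in \ousted_I(\prop(I))$, and likewise for $J$. Since $\prop(I) = \prop(J)$ and ousting agrees on matches whose resident lies on the hospital's $I$-list, such a match is in $\rej(I)$ if and only if it is in $\rej(J)$. A match in $\tent(I) \setminus \pend(I)$ is precisely a non-rejected match whose resident lies on the hospital's $I$-list; it is therefore not in $\rej(J)$, and being in $\prop(I) = \prop(J)$, it lies in $\tent(J)$. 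I expect the main obstacle to be stating the ousting comparison with exactly the right asymmetry --- equality for on-list residents but only one-sided inclusion for off-list (pending) ones --- since this asymmetry is what simultaneously forces $\tent(J) \subseteq \tent(I)$ (pending matches may be newly rejected in $J$) and protects the non-pending tentative matches from being lost.
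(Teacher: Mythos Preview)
Your proof is correct and follows essentially the same logical route as the paper's, though with considerably more explicit machinery. The paper compresses the argument: it obtains $\prop(I)=\prop(J)$ in one line via the sandwich $\prop(I)\subseteq\prop(J)\subseteq L$ (where $L$ is the set of matches $(r,h)$ with $h$ on $r$'s list), and for the final inclusion simply notes that appending to $h$'s list does not change the rank of a resident already on it, leaving your ousting comparison and the characterization $\rej(\cdot)=\ousted_{(\cdot)}(\prop(\cdot))$ implicit.
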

\begin{proof}
Let $L$ be the set of all matches $(r, h)$ such that $h$ is on the preference
list of $r$ in $I$. 
Since $I$ is resident-minimal, we have $\prop(I) = L$.
As $\prop(I) \subseteq \prop(J) \subseteq L$, we have $\prop(I) = \prop(J)$
and $J$ is resident-minimal.  It immediately follows that $\tent(J)
\subseteq \tent(I)$. Let $(r, h)$ be a match in $\tent(I) \setminus \pend(I)$. 
Then, $r$ is on the preference list of $h$ in $I$ and hence extending the preference list of $h$ does not affect the rank of $r$.
Therefore, we have $(r, h)\in \tent(J)$ and hence
$\tent(I) \setminus \pend(I) \subseteq \tent(J)$.
\qed 
\end{proof}

Let $\sigma$ be an event sequence. We say that an extension $\sigma + \tau$
of $\sigma$ is {\em simple} if $\prop(\tau) \cap \rej(\tau) = \emptyset$ or, in words,
$\tau$ never rejects a proposal made in itself.
We say that an extension $J$ of instance $I$ is {\em simple} if
the maximal $I$-feasible sequence has a simple maximal $J$-feasible extension
or, equivalently, $(\prop(J) \setminus \prop(I)) \cap (\rej(J) \setminus
\rej(I)) = \emptyset$. The goal of this subsection is to show that, for each
resident-minimal instance $I$, we do not need to search through all extensions
of a maximal $I$-feasible sequence to decide the finalizability of a match in
$\tent(I)$: we need only to look at simple extensions. Indeed, it will turn
out that we need only to look at hospital-complete simple extensions.

\begin{proposition}
\label{prop:simple-hcomplete}
Let $I$ be a resident-minimal instance and $J$ a
simple extension of $I$. 
Then, there is a simple and hospital-complete extension $J'$ of $I$
such that $\rej(J) \subseteq \rej(J')$.
\end{proposition}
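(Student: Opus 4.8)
The plan is to build $J'$ by completing only the hospital preference lists of $J$, leaving the resident lists exactly as in $J$; then $J'$ is automatically an extension of $I$, and it remains to choose how to append the missing residents to each hospital's list so that $J'$ is simple and $\rej(J) \subseteq \rej(J')$. Fix a maximal $I$-feasible sequence $\sigma$ and a simple maximal $J$-feasible extension $\sigma + \tau$, so that $\prop(J) = \prop(\sigma + \tau)$, $\rej(J) = \rej(\sigma + \tau)$ and $\tent(J) = \tent(\sigma + \tau)$. The central idea is to arrange the completion so strongly that $\sigma + \tau$ is itself a maximal $J'$-feasible sequence: then $\prop(J') = \prop(J)$ and $\rej(J') = \rej(J)$, so $\rej(J) \subseteq \rej(J')$ holds trivially and the simplicity of $J'$ is inherited verbatim from that of $J$.

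First I would verify that $\sigma + \tau$ is $J'$-feasible for any completion obtained by appending residents at list tails. Proposal events are unaffected, since their feasibility depends only on the resident lists, which are unchanged. For a rejection event $(r, h)^-$ occurring after a prefix $\sigma'$, feasibility requires $(r, h) \in \ousted_{J'}(\prop(\sigma'))$; appending residents only to the tail of $h$'s list preserves the predecessors of any already-listed resident and can only enlarge the set of listed residents drawn from $\res_h\,\prop(\sigma')$, so being ousted in $J$ implies being ousted in $J'$. Hence every rejection in $\sigma + \tau$ remains $J'$-feasible, so $\sigma + \tau$ is $J'$-feasible; extending it to a maximal $J'$-feasible sequence and invoking Proposition~\ref{prop:unique}, all its rejection events, i.e. $\rej(J)$, occur in that maximal sequence.

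The crux is then to choose the completion so that $\sigma + \tau$ is also maximal for $J'$. No proposal can be appended (the resident lists are unchanged and $\sigma + \tau$ is maximal for $J$), so the only danger is a newly ousted tentative match. A tentative match $(r, h)$ already listed in $J$ keeps its predecessors and hence its non-ousted status, so the danger is confined to the pending matches of $J$, that is, to $(r, h) \in \pend(J)$. For a hospital $h$ carrying a pending match, the definition of pending forces fewer than $q_h$ listed residents to have proposed to $h$ in $\sigma + \tau$; I would first observe that this precludes any listed resident from being ousted at $h$, so every proposer to $h$ is tentative and their number is at most $q_h$. Writing $P_h$ for the set of residents pending at $h$ and completing $h$'s list by appending $P_h$ first (in any order) and the remaining residents afterwards, each $r \in P_h$ is then preceded by at most $q_h - 1$ residents of $\res_h\,\prop(\sigma + \tau)$, so $(r, h)$ is not ousted in $J'$. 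Thus no rejection can be appended to $\sigma + \tau$, it is maximal $J'$-feasible, and the proposition follows.

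The step I expect to be the main obstacle is this counting argument bounding the number of residents in $\res_h\,\prop(\sigma + \tau)$ by $q_h$ at hospitals carrying pending matches: one must rule out the possibility that $h$ rejected some listed resident, which would inflate the count of residents preceding the appended pending residents and could force an oust. Here the inequality ``fewer than $q_h$ listed proposers,'' extracted from the definition of a pending match, is exactly what makes the completion order $P_h$-first work.
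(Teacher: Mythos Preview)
Your approach has a genuine gap at exactly the step you flagged: the claim that, at a hospital $h$ carrying a pending match, ``every proposer to $h$ is tentative and their number is at most $q_h$'' is false. The first half is correct (fewer than $q_h$ listed proposers at $h$ means no match at $h$ can be ousted, so all proposers are tentative), but nothing in the definition of feasibility bounds the number of \emph{unlisted} tentative residents at $h$. Concretely, take $q_h = 1$, let $h$'s list in $J$ be empty, and arrange $J$ (a simple extension of a resident-minimal $I$) so that two residents $a,b$ with $a,b \notin \res\,\tent(I)$ each acquire $h$ as a new last entry and both propose to $h$ in $\tau$. Then $(a,h),(b,h)\in\pend_J(\sigma+\tau)$ and both lie in $\prop(J)\setminus\prop(I)$. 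Any completion of $h$'s list places $a$ and $b$ somewhere, giving at least two listed proposers, so one of them is ousted; thus $\sigma+\tau$ cannot be maximal for $J'$, and the forced rejection lands in $(\prop(J')\setminus\prop(I))\cap(\rej(J')\setminus\rej(I))$, destroying simplicity. So a resident-changeless hospital completion of $J$ is not always a simple extension of $I$.

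The paper's proof avoids this by \emph{not} insisting that $J'$ extend $J$: it first takes an arbitrary hospital completion $J_1$ of (a resident-minimal) $J$, then \emph{truncates} the resident lists of $J_1$ by deleting the last entry $h$ from $r$'s list for every match $(r,h)\in(\rej(J_1)\setminus\rej(J))\setminus\tent(I)$. These are precisely the pending matches of $J$ that were proposed in $\tau$ and would be rejected upon completion; removing them from the resident side ensures they are neither proposed nor rejected in $J'$, restoring simplicity while keeping $\rej(J)\subseteq\rej(J')$. The resident-minimality of $I$ guarantees that these deletions do not undershoot $I$'s resident lists, so $J'$ is still an extension of $I$. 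The moral is that you must be willing to shrink the resident lists relative to $J$; modifying hospital lists alone cannot work in general.
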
  
\begin{proof}
Let $I$ and $J$ be as in the lemma. We assume without loss of generality that 
$J$ is resident-minimal: if not, take an appropriate truncation.
Let $J_1$ be an arbitrary hospital-complete and resident-changeless extension of
$J$.  Since $J$ is resident-minimal, so is $J_1$ by
Proposition~\ref{prop:hospital-completion}.
We also have $\rej(J) \subseteq
\rej(J_1)$ and, moreover, $(\rej(J_1) \setminus \rej(J)) \subseteq
\pend(J)$, since $J_1$ is a resident-changeless extension of $J$.
We construct a simple extension of $I$ by truncating preference
lists of residents in $J_1$.  
Let $M = (\rej(J_1) \setminus \rej(J)) \setminus \tent(I)$.
For each $(r, h) \in M$, $h$ is the last entry of the preference list of $r$ in
$J$ and hence in $J_1$, since $(r, h) \in \pend(J)$ and $J$ is
resident-minimal.
Let $J'$ be obtained from $J_1$ by, for each match 
$(r, h) \in M$, removing $h$ from the preference list of $r$.
Then, we have $\prop(J') = \prop(J_1) \setminus M$ and
$\rej(J') = \rej(J_1) \setminus M$.  
For each $(r, h) \in M$, $h$ is not on the preference list of
$r$ in $I$, since $(r, h) \not\in (\tent(I) \cup \rej(J)) \supseteq (\tent(I)
\cup \rej(I)) = \prop(I)$.
Therefore, $J'$ is an extension of $I$. 
We claim that it is a simple extension of $I$.  
To see this, observe that $\rej(J') \setminus \rej(J) \subseteq
\tent(I)$ from the construction of $J'$.
Since no match in $(\prop(J') \setminus \prop(I)) \subseteq (\prop(J)
\setminus \prop(I))$ can be in $\rej(J)$ as $J$ is a simple extension of $I$,
no such match can be in $\rej(J')$.  Therefore, $J'$ is a simple
extension of $I$.  As $M \subseteq (\rej(J_1) \setminus
\rej(J))$ and $\rej(J') = \rej(J_1) \setminus M$, 
we have $\rej(J) \subseteq \rej(J')$ and are done.
\qed   
\end{proof}

For the time being, we concentrate on resident-minimal
instances that are also hospital-complete and try to 
characterize their simple extensions.  

\begin{proposition}
\label{prop:presc}
Let $I$ be a resident-minimal and hospital-complete instance and $J$ a simple
extension of $I$. 
Let $P = \prop(J) \setminus \prop(I)$ and $X = \rej(J) \setminus \rej(I)$.
Then, these sets of matches satisfy
the following conditions.
\begin{description}
\item[P1:] $P \cap \prop(I) = \emptyset$.
\item[P2:] For each $r \in R$, there is at most one $h \in H$ such that $(r, h)
\in P$. 
\item[P3:] $X \subseteq \tent(I)$.
\item[P4:] $\res\,P \cap \res\,\tent(I) \subseteq \res\,X$. 
\item[P5:]
  For each $h \in H$, we have $|\res_h(P \cup (\tent(I) \setminus X)))|
\leq q_h$.  Moreover, if $\res_h\,X$ is non-empty then 
  we have $|\res_h(P \cup (\tent(I) \setminus X)))| = q_h$.
\item[P6:]
  For each $h \in H$, 
  each member of $\res_h(P \cup (\tent(I) \setminus X)))$ precedes 
  all members of $\res_h\,X$ in the preference list of $h$ in $I$.
\end{description}
\qed 
\end{proposition}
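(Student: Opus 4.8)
The plan is to fix a maximal $I$-feasible sequence $\sigma$ and a simple maximal $J$-feasible extension $\sigma + \tau$ (which exists because $J$ is a simple extension of $I$), and to read off all six conditions from the structure of $\tau$. Since proposals and rejections in a feasible sequence are never repeated, I would first record that $P = \prop(\tau)$ and $X = \rej(\tau)$, and that simplicity is precisely $P \cap X = \emptyset$. The central computation I would carry out up front is the identity
\[
P \cup (\tent(I) \setminus X) = \tent(J).
\]
To obtain it I would expand $\tent(J) = (\prop(I) \cup P) \setminus (\rej(I) \cup X)$, using $\prop(I) \subseteq \prop(J)$ and $\rej(I) \subseteq \rej(J)$ (monotonicity of feasibility, valid here because $I$ is hospital-complete, so $J$ is hospital-changeless over $I$ and the ousting relation $\ousted_I = \ousted_J$ is unchanged), together with the disjointness facts $P \cap \rej(I) = \emptyset$ (a match proposed in $\tau$ was not proposed, hence not rejected, in $\sigma$) and $P \cap X = \emptyset$. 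This identity rewrites $\res_h(P \cup (\tent(I)\setminus X))$ as $\res_h\,\tent(J)$, turning P5 and P6 into statements about the final tentative matching of a hospital-complete instance, where the DA algorithm behaves classically.

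P1 is immediate from the definition of $P$. For P2 I would argue that if $r$ proposed to two distinct hospitals in $\tau$, its first proposal in $\tau$ would have had to be rejected before the second (since a resident holds at most one tentative match), placing that match in $\prop(\tau) \cap \rej(\tau)$ and contradicting simplicity. For P3, a match in $X = \rej(\tau)$ lies outside $P$ by simplicity, so it was already proposed in $\sigma$ and, being rejected only in $\tau$, belongs to $\prop(I) \setminus \rej(I) = \tent(I)$. For P4, if $r \in \res\,P \cap \res\,\tent(I)$ then $r$ holds a unique match $(r,h) \in \tent(I)$ at the start of $\tau$; because $I$ is resident-minimal, $h$ is the last hospital on $r$'s list in $I$, so any match $(r,h') \in P$ involves a hospital $h'$ appended after $h$, and proposing to $h'$ forces the earlier $h$ to be rejected within $\tau$, giving $(r,h) \in X$ and $r \in \res\,X$.

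For P5 and P6 I would invoke the identity above together with two standard facts about DA on the hospital-complete instance $J$: at a maximal feasible sequence no match of $\tent(J)$ is ousted (Remark~\ref{rem:ousted}), so $|\res_h\,\tent(J)| \le q_h$, which is the inequality in P5; and a top-$q_h$ proposer to $h$ is never ousted, so $\res_h\,\tent(J)$ is exactly the set of the top $\min(q_h, |\res_h\,\prop(J)|)$ proposers, with every retained resident preceding every rejected one in $h$'s (common) list. Since $\res_h\,X \subseteq \res_h\,\rej(J)$ consists of rejected proposers, this yields P6 and, whenever $\res_h\,X \ne \emptyset$, forces $|\res_h\,\prop(J)| > q_h$ and hence equality in P5. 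The main obstacle I anticipate is establishing these two structural DA facts cleanly from the event-sequence formalism — in particular that a top-$q_h$ proposer can never be ousted and that a hospital that rejects anyone has more than $q_h$ proposers — since everything else is bookkeeping with the three sets $P$, $X$, and $\tent(I)$ once the identity $P \cup (\tent(I)\setminus X) = \tent(J)$ is in hand.
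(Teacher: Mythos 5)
Your proof is correct and takes essentially the paper's approach: the paper states Proposition~\ref{prop:presc} without proof, treating it as routine, but the very argument you give --- reading $P = \prop(\tau)$, $X = \rej(\tau)$ off a simple maximal $J$-feasible extension $\sigma + \tau$, establishing the identity $P \cup (\tent(I) \setminus X) = \tent(J)$, and checking P1--P4 from feasibility (no repeated proposals/rejections, a proposal requires no current tentative match) and P5--P6 from the behavior of a hospital-complete hospital at a maximal sequence --- is exactly the argument the paper uses to prove the closely analogous Lemma~\ref{lem:to-presc}. The two ``standard DA facts'' you flag for P5/P6 (that $\res_h\,\tent(J)$ is the set of top-$\min(q_h, |\res_h\,\prop(J)|)$ proposers in $h$'s list, and that any rejection at $h$ forces more than $q_h$ proposers) are precisely what the paper leaves implicit, and your sketched justifications via maximality and the $\ousted$ condition are sound.
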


A {\em prescription} for resident-minimal and hospital complete instance $I$ is
a pair $(P, X)$ of sets of matches that
satisfies the conditions P1 through P6 in Proposition~\ref{prop:presc}.
The {\em target set} of prescription 
$(P, X)$, denoted by $\tgs(P, X)$ is defined by 
$\tgs(P, X) = \{(r, h) \in X \mid r \not\in \res\,P\}$. 
The crucial part of the proof of the result in this section is in showing
that a certain form of converse of Proposition~\ref{prop:presc} holds:
a prescription $(P, X)$ for $I$
implies a simple extension of $I$ that rejects matches in $\tgs(P, X)$.

\begin{proposition}
\label{prop:ready}
Let $I$ be a resident-minimal and hospital-complete instance and
let $(P, X)$ be a prescription for $I$.
Then, we have $|\res\,P \setminus \res\,\tent(I)| \geq |\tgs(P, X)|$.
\end{proposition}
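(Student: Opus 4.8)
The plan is to reduce the asserted inequality to the much simpler statement that the number of new proposals is at least the number of new rejections, i.e. $|P| \geq |X|$, and then to establish the latter hospital by hospital using the quota conditions P5. Throughout I would abbreviate $A = \res\,P$ and $B = \res\,\tent(I)$, so that the left-hand side is $|A \setminus B|$.

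First I would split $X$ according to whether its residents lie in $A$. Setting $Y = \{(r,h) \in X \mid r \in A\}$, the sets $Y$ and $\tgs(P,X)$ partition $X$, so $|\tgs(P,X)| = |X| - |Y|$. The first key step is to identify $|Y|$ with $|A \cap B|$. On the one hand, any $(r,h) \in Y$ has $r \in A$ and, since $X \subseteq \tent(I)$ by P3, also $r \in B$; on the other hand, P4 states that every $r \in A \cap B$ lies in $\res\,X$, so it contributes a match to $Y$. Because $Y \subseteq \tent(I)$ and $\tent(I)$ is a matching, each resident contributes at most one match to $Y$, whence $|Y| = |\res\,Y| = |A \cap B|$. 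Using P2, which makes the map $(r,h) \mapsto r$ from $P$ to $A$ a bijection and hence $|P| = |A|$, the target inequality $|A \setminus B| \geq |\tgs(P,X)|$ rewrites as $|A| - |A \cap B| \geq |X| - |A \cap B|$, that is, simply $|P| \geq |X|$.

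The second step is to prove $|P| \geq |X|$ by a per-hospital count. Writing $p_h = |\res_h\,P|$, $t_h = |\res_h\,\tent(I)|$, and $x_h = |\res_h\,X|$, summation over hospitals gives $|P| = \sum_h p_h$ and $|X| = \sum_h x_h$, so it suffices to show $p_h \geq x_h$ for every $h$. When $\res_h\,X = \emptyset$ this is trivial. When $\res_h\,X \neq \emptyset$, the ``moreover'' clause of P5 forces $|\res_h(P \cup (\tent(I) \setminus X))| = q_h$; since P1 makes $\res_h\,P$ disjoint from $\res_h(\tent(I) \setminus X)$ and $X \subseteq \tent(I)$ gives $|\res_h(\tent(I) \setminus X)| = t_h - x_h$, this equality reads $p_h + t_h - x_h = q_h$, so $p_h = x_h + (q_h - t_h) \geq x_h$ because $\tent(I)$ respects the quota ($t_h \leq q_h$, as $\tent(I)$ is a matching and $I$ is hospital-complete). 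Summing yields $|P| \geq |X|$ and closes the argument.

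I do not expect a genuine obstacle; the proof is a bookkeeping reduction. The one point that must be handled with care — and the place where the full strength of the prescription conditions is actually exploited — is the equality case of P5: the mere inequality $|\res_h(P \cup (\tent(I) \setminus X))| \leq q_h$ bounds $p_h - x_h$ from above, which is the wrong direction, and only the forced equality when $\res_h\,X \neq \emptyset$ delivers the needed lower bound $p_h \geq x_h$. I would also take care to justify the two disjointness facts on which the exact cardinality counting rests, namely that $\res_h\,P$ and $\res_h(\tent(I) \setminus X)$ are disjoint (from P1, since $P \cap \prop(I) = \emptyset$ and $\tent(I) \subseteq \prop(I)$) and that each resident occurs in at most one match of $Y$ (since $Y \subseteq \tent(I)$ and $\tent(I)$ is a matching).
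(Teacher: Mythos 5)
Your proof is correct and takes essentially the same route as the paper's: both use P2 to identify $|P|$ with $|\res\,P|$, conditions P3/P4 (plus the fact that $\tent(I)$ has one match per resident) to pair off $X \setminus \tgs(P,X)$ with the residents of $P$ lying in $\res\,\tent(I)$, and the equality clause of P5, hospital by hospital, to obtain $|X| \leq |P|$. The only difference is bookkeeping — you establish the exact equality $|X \setminus \tgs(P,X)| = |\res\,P \cap \res\,\tent(I)|$ where the paper settles for the inequality $|P \setminus Q| \leq |X \setminus \tgs(P,X)|$, and you spell out the disjointness (via P1) and quota ($t_h \leq q_h$, using hospital-completeness) details that the paper leaves implicit.
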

\begin{proof}
Let $Q = \{(r, h) \in P \mid r \not\in \res\,\tent(I)\}$.
By condition P2 for $(P, X)$ being a prescription for $I$,
we have $|Q| = |\res\,P \setminus \res\,\tent(I)|$.
Our goal is to show that $|Q| \geq |\tgs(P, X)|$. 

By condition P5 for $(P, X)$ being a prescription for $I$, we have
$|\res_h\,X| \leq  |\res_h\,P|$ for each $h \in H$. Therefore,
we have $|X| \leq |P|$. 
On the other hand, let $(r, h)$ be a match in 
$(r, h) \in P \setminus Q$.  Because of condition P4, 
there is some $h'$ such that $(r, h') \in X$.  However, by
definition, $(r, h') \not\in \tgs(P, X)$ as $r \in \res\,P$.
Therefore, we have $|P \setminus Q| \leq |X \setminus \tgs(P, X)|$.
As $Q$ is a subset of $P$ and $\tgs(P, X)$ is
a subset of $X$, we have 
$|P| - |Q| \leq |X| - |\tgs(P, X)|$.
Combining this with $|X| \leq |P|$, we conclude that 
$|Q| \geq |\tgs(P, X)|$.
\qed 
\end{proof}

\begin{lemma}
\label{lem:executable}
Let $I$ be a resident-minimal and hospital-complete instance and suppose there
is a  prescription $(P, X)$ for $I$.
Then there is some simple extension $J$ of $I$ such that
$\prop(J) \setminus \prop(I) \subseteq P$, 
$\rej(J) \setminus \rej(I) \subseteq X$, and
$\tgs(P, X) \subseteq \rej(J)$.
\end{lemma}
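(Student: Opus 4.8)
The plan is to realize the prescription by a concrete extension of the resident lists and then read off the outcome of the DA algorithm. For each $r\in\res\,P$ let $h_r$ be the unique hospital with $(r,h_r)\in P$ (uniqueness is P2); by P1 together with resident-minimality, $h_r$ does not occur on the $I$-list of $r$, so appending $h_r$ at the end of that list and leaving the (already complete) hospital lists untouched yields a hospital-changeless extension $J$ of $I$. By Proposition~\ref{prop:unique} the sets $P':=\prop(J)\setminus\prop(I)$ and $X':=\rej(J)\setminus\rej(I)$ are well defined, and since $X\subseteq\tent(I)$ is disjoint from $\rej(I)$, the three required properties amount to $P'\subseteq P$, $X'\subseteq X$, and $\tgs(P,X)\subseteq X'$. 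Note that this $J$ extends every resident list as far as $P$ permits, so it induces the \emph{maximal} source-driven realization; if $J$ rejected every target, no sparser extension could do better, so committing to this $J$ loses nothing.

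I would first dispatch the two inclusions, which are local. A new proposal can only be addressed to an appended hospital, hence is of the form $(r,h_r)\in P$; here resident-minimality is used to know that an unmatched resident has already exhausted its $I$-list and that a matched resident reaches $h_r$ only after losing its tentative seat, so $P'\subseteq P$. For $X'\subseteq X$ I would argue at each hospital $h$ from the order-independent final configuration: by P6 the members of $\res_h\,X$ are the lowest ranked among $\res_h(P\cup(\tent(I)\setminus X))\cup\res_h\,X$, while P5 rearranges to $|\res_h\,P|+|\res_h\,\tent(I)|\le q_h+|\res_h\,X|$, so at most $|\res_h\,X|$ residents are ever displaced from $h$; every displacement therefore falls inside $\res_h\,X$, giving $X'\subseteq X$. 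Simplicity is then free, since $P\cap\prop(I)=\emptyset$ and $X\subseteq\tent(I)\subseteq\prop(I)$ force $P\cap X=\emptyset$.

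The real content is $\tgs(P,X)\subseteq X'$. The mechanism is that the residents of $Q=\{(r,h)\in P\mid r\notin\res\,\tent(I)\}$ always propose, and each such proposal sets off a rejection chain: by P5, once all of $\res_h\,P$ have reached a hospital $h$ meeting an $X$-resident, $h$ fills exactly to quota and, by P6, displaces precisely $\res_h\,X$; a displaced resident in $\res\,P$ proposes onward, whereas a displaced resident outside $\res\,P$ is a target at which the chain stops. The plan is to show this source-driven propagation reaches every target. The quantitative input is Proposition~\ref{prop:ready}, which supplies at least as many sources $|Q|$ as targets $|\tgs(P,X)|$; the qualitative input is that a resident whose unique extra hospital is never reached must still be seated on its tentative match, so by P4 it still carries an unrealized $X$-match. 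Consequently the unrealized matches close up into self-contained rotations that contain no source and, having no resident outside $\res\,P$, no target.

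I would organize this last step as a flow from the sources $Q$ to the targets $\tgs(P,X)$ through the hospitals, whose throughput is governed by the quota identities of P5 and whose admissible displacements are pinned down by P6. The main obstacle I anticipate is the feasibility of this flow: a single source-to-target chain need not exist in isolation, since a hospital of quota larger than one may require several converging proposals before it displaces its target, so the routing must be treated globally rather than chain by chain. The crux is to show that the count from Proposition~\ref{prop:ready} is not a mere balance of totals but actually rules out a deficient cut; the clean way to record this is that the part of the prescription left unrealized is again a prescription and, being source-free, has empty target set by Proposition~\ref{prop:ready}, so no target can survive. Verifying that this residual configuration still meets the quota condition P5 — the point at which a hospital's spare capacity interacts with a partial displacement — is the step I expect to demand the most care.
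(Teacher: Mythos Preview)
The proposal is correct and, at the level of ideas, tracks the paper's proof closely. The difference is organizational: the paper proceeds by induction on $|P|$, in each round appending only the ``source'' matches $Q=\{(r,h)\in P\mid r\notin\res\,\tent(I)\}$ to form an intermediate instance $I_1$, verifying that $(P\setminus Q,\,X\setminus\rej(\tau_1))$ is again a prescription for $I_1$, and recursing until the target set is exhausted. You instead append all of $P$ at once to form $J$, argue $P'\subseteq P$ and $X'\subseteq X$ directly from the final configuration, and then invoke Proposition~\ref{prop:ready} on the unrealized residual $(P\setminus P',X\setminus X')$, which is source-free and therefore has empty target set. The core computation is identical in both routes: one checks that $P_1\cup(\tent(\cdot)\setminus X_1)=P\cup(\tent(I)\setminus X)$, whence P5 and P6 for the residual are inherited verbatim from $(P,X)$. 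Your one-shot packaging is slightly slicker; the paper's induction is more explicitly constructive.

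Two technical points to tighten. First, the residual $(P\setminus P',X\setminus X')$ is a prescription for $J$, not for $I$: at a hospital $h$ with spare capacity in $I$ that gets partially filled by $P'$, the equality clause of P5 can fail relative to $\tent(I)$ yet holds relative to $\tent(J)$. Since $J$ need not be resident-minimal (the matches in $P\setminus P'$ sit on the lists but are never proposed), you should either pass to the resident-minimal truncation of $J$, which shares the same $\tent$ and $\rej$, or note that the proof of Proposition~\ref{prop:ready} uses only P2, P4, P5 together with $|\res_h\tent(J)|\le q_h$, all of which hold for any hospital-complete instance. Second, your counting argument for $X'\subseteq X$ silently relies on residents in $\res_h\rej(I)$ playing no role in the new displacements; this is true because each is already outranked by $q_h$ members of $\res_h\tent(I)$ (hospital-completeness forces $|\res_h\tent(I)|=q_h$ whenever $\res_h\rej(I)\neq\emptyset$), but it is worth saying.
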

\begin{proof}
Let $I$ and $(P, X)$ be as in the lemma 
and $\sigma$ a maximal $I$-feasible sequence. 

We prove the statement of the lemma by induction on $|P|$. 
We take $\tgs(P, X) = \emptyset$ as the base case,
which includes the case $P = X = \emptyset$. The statement
is satisfied with $J = I$ in this case.

For the induction step, suppose $\tgs(P, X)$ is non-empty and
let $Q = \{(r, h) \in P
\mid r \not\in \res\,\tent(I)\}$. Since $\tgs(P, X)$ is non-empty,
$Q$ is non-empty by Proposition~\ref{prop:ready}.
Let $\sigma + \tau_1$ be an extension of $\sigma$ such that
$\tau_1$ first lists the proposals of matches in $Q$
in an arbitrary order and then lists all rejections, in an arbitrary
order, that are made possible by these proposals (without further chain of 
proposals and rejections).
Then, $\sigma + \tau_1$ is maximal $I_1$-feasible where
$I_1$ is the extension of $I$ obtained by appending $h$ in 
the preference list of $r$ for each $(r, h) \in Q$.
We claim that $\rej(\tau_1) \subseteq X$. To see this,
fix $h \in H$. By condition P5 for $(P, X)$ being a prescription for
$I$, we have $|\res_h(P \cup (\tent(I) \setminus X)))| \leq q_h$ and
hence $|\res_h(Q \cup (\tent(I) \setminus X)| \leq q_h$.
Thus, no match $(r, h)$ is rejected by $\tau_1$ unless $(r, h) \in X$.
If $\tgs(P, X) \subseteq \rej(I_1)$ then 
we are done with $J = I_1$. 

So suppose otherwise, that $\tgs(P, X)$ is not contained in $\rej(I_1)$.
Consider the prescription $(P_1, X_1)$ for $I_1$ where
$P_1 = P \setminus Q$ and $X_1 = X \setminus \rej(\tau_1)$.
We confirm that this pair is indeed a prescription for $I_1$.
From condition P1 for $(P, X)$ being a prescription for
$I$, we have $P \cap \prop(I) = \emptyset$.
Since $\prop(I_1) = \prop(I) \cup Q$ and $P_1 = P \setminus Q$,
it follows that 
$P_1 \cap \prop(I_1) = \emptyset$, condition P1 for $(P_1, X_1)$
being a prescription for $I_1$.  Condition P2 immediately follows from
the corresponding condition for $(P, X)$.
Since $X \subseteq \tent(I)$ (condition P3 for
$(P, X)$) and $X_1 = X \setminus \rej(\tau_1)$, we have
$X_1 \subseteq \tent(I) \setminus \rej(\tau_1) \subseteq \tent(I_1)$, 
condition P3.

For condition P4, we use the facts that
$P_1$ and $Q$ partition $P$ and that
$Q$ and $\tent(I) \setminus \rej(\tau1)$
partition $\tent(I_1)$. 
Also using condition P4 for $(P, X)$ that 
$\res\,P \cap \res\,\tent(I) \subseteq X$, we have 
\begin{eqnarray*}
\res\,P_1 \cap \res\,\tent(I_1) & = & (\res\,P \setminus \res\,Q)
\cap \res(Q \cup (\tent(I) \setminus \rej(\tau_1)))\\
& = &(\res\,P \setminus \res\,Q)
\cap (\res\,Q \cup (\res\,\tent(I) \setminus \res\,\rej(\tau_1)))\\
& = & \res\,P \cap (\res\,\tent(I) \setminus \res\,\rej(\tau_1))\\
& = & (\res\,P \cap \res\,\tent(I)) \setminus \res\,\rej(\tau_1)\\
& \subseteq & \res\,X \setminus \res\,\rej(\tau_1) \\
& \subseteq & \res(X \setminus \rej(\tau_1)) \\
& = & \res\,X_1.
\end{eqnarray*} 
Therefore, condition P4 holds.
 
For conditions P5 and P6, observe that  
\begin{eqnarray*}
P_1 \cup (\tent(I_1) \setminus X_1) & = &
P_1 \cup ((Q \cup (\tent(I) \setminus \rej(\tau_1))) \setminus X_1)) \\
& = & P_1 \cup (((Q \cup \tent(I)) \setminus \rej(\tau_1))) \setminus X_1)) \\
& = & P_1 \cup (Q \cup (\tent(I) \setminus X)) \\
& = &  P \cup (\tent(I) \setminus X),
\end{eqnarray*}
where we have repeatedly used the disjointness between subsets of $P$ and
subsets of $X$.
Therefore, for each $h \in H$, we have $\res (P_1 \cup (\tent(I_1) \setminus
X_1)) = \res(P \cup (\tent(I) \setminus X))$ and hence condition P5 for $(P_1,
X_1)$ follows from that for $(P, X)$.
Moreover, by condition P6 for $(P, X)$,  
each member of $\res_h(P \cup (\tent(I) \setminus X))$ precedes all
members of $X$ in the preference list of $h$ in $I$. 
Since $X_1 \subseteq X$ and $I_1$ is an extension of $I$, 
it follows that each member of $\res_h(P_1 \cup (\tent(I_1) \setminus X_1))$
precedes all members of $X_1$ in the preference list of $h$ in $I_1$: condition
P6 holds. We have confirmed that $(P_1, X_1)$ is indeed a prescription for
$I_1$.

We note that $\tgs(P_1, X_1) = \tgs(P, X) \setminus \rej(\tau_1)$ is 
non-empty under our current assumption. 
Therefore, we may apply the induction hypothesis to instance $I_1$ and
prescription $(P_1, X_1)$ for $I_1$ to obtain a simple and
hospital-complete extension $I'_1$ of $I_1$ such that $\prop(I'_1) \setminus
\prop(I_1) \subseteq P_1$, $\rej(I'_1) \setminus \rej(I_1) \subseteq X_1$, and
$\tgs(P_1, X_1) \subseteq \rej(I'_1)$. 
We have 
\begin{eqnarray*}
\prop(I'_1) \setminus \prop(I) & = &(\prop(I'_1)
\setminus \prop(I_1)) \cup (\prop(I_1) \setminus \prop(I))\\
& \subseteq & P_1 \cup Q \\
& = & P, 
\end{eqnarray*}
\begin{eqnarray*}
\rej(I'_1) \setminus \rej(I)& \subseteq &
(\rej(I'_1) \setminus \rej(I_1)) \cup (\rej(I_1) \setminus \rej(I))\\
& \subseteq & X_1 \cup \rej(\tau_1) \\
& = & X, 
\end{eqnarray*}
and
\begin{eqnarray*}
\tgs(P, X) & \subseteq & \tgs(P_1, X_1) \cup \rej(\tau_1) \\
&\subseteq & \rej(I'_1) \cup \rej(\tau_1) \\
& = & \rej(I'_1),
\end{eqnarray*}
since $\rej(\tau_1) \subseteq \rej(I_1) \subseteq \rej(I'_1)$.
Therefore, setting $J = I'_1$, the statement of the lemma holds.
This completes the induction step and hence the proof of the lemma.
\qed
\end{proof}

\begin{lemma}
\label{lem:to-presc}
Let $I$ be a resident-minimal and hospital-complete instance
and $(r_0, h_0)$ a match in $\tent(I)$ that is not finalizable in $I$.
Let $\sigma$ be a maximal $I$-feasible sequence and
$\sigma + \tau$ a shortest feasible extension of $\sigma$ that rejects
$(r_0, h_0)$.
Then, $(\prop(\tau), \rej(\tau))$ is a 
prescription for $I$ with $\tgs(\prop(\tau), \rej(\tau)) = \{(r_0,
h_0)\}$.
\end{lemma}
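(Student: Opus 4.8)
The plan is to set $P = \prop(\tau)$ and $X = \rej(\tau)$ and to verify conditions P1--P6 together with the target-set claim directly, extracting all the needed structure from the fact that $\tau$ has minimum length. First I would record the bookkeeping. Since $\sigma$ is maximal $I$-feasible and $\sigma + \tau$ extends it, every proposal in $\tau$ is new and every rejection in $\tau$ is new, so $\prop(\sigma+\tau) = \prop(I) \cup P$ and $\rej(\sigma+\tau) = \rej(I) \cup X$ with $P \cap \prop(I) = \emptyset$ (this is already P1) and $X \cap \rej(I) = \emptyset$; also $P \cap \rej(I) = \emptyset$, since a match already proposed in $\sigma$ cannot be proposed again. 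These identities give $\tent(\sigma + \tau) = (\tent(I) \setminus X) \cup P$ once simplicity is known, and this is the identity I will lean on for P5 and P6.

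The heart of the argument, and the step I expect to be the main obstacle, is to show that a shortest rejecting extension contains no wasteful events. Concretely I would prove the single structural statement: every match in $P$ survives into $\tent(\sigma+\tau)$ and lands within the quota of its hospital, while every match in $X$ lies in $\tent(I)$. The mechanism is an excision/re-routing argument that crucially uses hospital-completeness. If a resident $r$ proposes in $\tau$ to a hospital $h$ from which it is later ousted, then $r$ ranks strictly below the top $q_h$ survivors of $h$; hence any resident that $r$ displaces at $h$ ranks even lower and is ousted by those same survivors regardless of $r$, so I may delete $r$'s passage through $h$. Because the extension realizing $\sigma+\tau$ is ours to choose (we range over all feasible extensions), I re-append to $r$'s list only the single hospital at which $r$ finally settles, bypassing the intermediate ones, keeping the $I$-part of the list intact so that the result is still an extension of $I$; Proposition~\ref{prop:order_insensitive} then lets me reassemble the surviving events into a feasible sequence. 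The same excision removes any new proposal that ends up beyond its hospital's quota and any rejection $(r,h)$ with $r$ not re-proposing and $(r,h)\neq(r_0,h_0)$: deleting such a rejection changes no $\prop$-set, hence no ousting condition, and breaks no later feasibility. Each deletion strictly shortens $\tau$, contradicting minimality.

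Granting this structural statement, the remaining conditions fall out. Simplicity, $P \cap X = \emptyset$, is immediate since every $P$-match survives; with P1 this yields P3, $X \subseteq \tent(I)$. Condition P2 follows because a resident proposing twice in $\tau$ would need its first proposal rejected in order to be freed, contradicting simplicity. Condition P4 is feasibility: if $r \in \res\,P$ and $r \in \res\,\tent(I)$, then $r$'s unique $\tent(I)$-match must have been rejected to free $r$ for its new proposal, so $r \in \res\,X$. For P5 and P6 I use $\res_h(P \cup (\tent(I)\setminus X)) = \res_h\,\tent(\sigma+\tau)$: the no-over-fill part of the structural statement gives $|\res_h\,\tent(\sigma+\tau)| \le q_h$; and if some $(r'',h)\in X$ is ousted, the at least $q_h$ residents preceding $r''$ all survive within quota, forcing equality (P5) and, since every survivor then lies above $r''$ in the complete list of $h$, placing all survivors ahead of all ousted residents (P6).

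Finally, for the target set $\tgs(P,X) = \{(r,h)\in X : r\notin \res\,P\}$, I would argue both inclusions from minimality. The match $(r_0,h_0)$ lies in it: resident-minimality makes $h_0$ the last entry of $r_0$'s list in $I$, so $r_0$ cannot propose before $(r_0,h_0)$ is rejected, and by minimality it makes no proposal afterward (such events would be deletable), whence $r_0 \notin \res\,P$. Conversely, any $(r,h)\in X$ with $r\notin\res\,P$ and $(r,h)\neq(r_0,h_0)$ would be a removable useless rejection by the excision above, contradicting minimality. Hence $\tgs(P,X)$ is exactly $\{(r_0,h_0)\}$, completing the verification.
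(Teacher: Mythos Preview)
Your overall plan mirrors the paper's for most of P1--P6 and for the target-set argument, but diverges on the central step: establishing that the shortest $\tau$ is simple, i.e.\ $\prop(\tau)\cap\rej(\tau)=\emptyset$. The paper does \emph{not} argue this by excision. Instead it sets $P=\prop(\tau)\setminus\rej(\tau)$ and $X=\rej(\tau)\setminus\prop(\tau)$, verifies P1--P6 for this pair directly (the identity $P\cup(\tent(I)\setminus X)=\tent(\sigma+\tau)$ holds for these set differences regardless of simplicity), checks $\tgs(P,X)=\{(r_0,h_0)\}$, and only then invokes Lemma~\ref{lem:executable} in the converse direction: from the prescription $(P,X)$ it builds a simple extension whose event sequence has length at most $|P|+|X|\le|\tau|$, and compares against the minimality of $|\tau|$ to force $\prop(\tau)=P$ and $\rej(\tau)=X$. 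Simplicity is thus obtained as a corollary of the already-proved existence lemma, not by surgery on $\tau$.

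Your excision argument has a genuine gap. When you delete $(r,h)^+$ and $(r,h)^-$ and re-route $r$ directly to its final hospital, you still need the remaining events of $\tau$---in particular any rejection $(r',h)^-$ that $r$'s arrival at $h$ enabled---to be assemblable into a feasible sequence. You appeal to Proposition~\ref{prop:order_insensitive}, but that proposition is a monotonicity statement for \emph{adding} an event to a feasible prefix that already contains all earlier events; it gives no guarantee that a proper subset of the events of $\tau$ can be reordered into something feasible. The obstruction is the circularity you do not address: the $q_h$ residents that eventually oust $r$ from $h$ (and on whom you rely to oust $r'$ in $r$'s absence) may reach $h$ only through a cascade that was itself triggered by $r'$ being freed, which in turn required $r$'s proposal to $h$. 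Untangling this kind of dependency is precisely the work done by the inductive proof of Lemma~\ref{lem:executable}; the paper reuses that lemma here rather than redoing it. If you want to keep your direct route, you would need to supply an argument of comparable strength---effectively reproving Lemma~\ref{lem:executable}---to justify the reassembly step.
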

\begin{proof}
We set $P = \prop(\tau) \setminus \rej(\tau)$ and
$X = \rej(\tau) \setminus \prop(\tau)$. It will turn out
that $\prop(\tau) \cap \rej(\tau) = \emptyset$ and hence
$P = \prop(\tau)$ and $X = \rej(\tau)$.

We first confirm that $(P, X)$ is a prescription for $I$.
Since $\sigma + \tau$ is feasible, $P \subseteq \prop(\tau)$ is
disjoint from $\prop(\sigma) = \prop(I)$: condition P1 holds.  Since $P
\subseteq \tent(\sigma + \tau)$, for each $r \in R$, 
there is at most one $h$ such that $(r, h)\in P$: condition P2 holds. 
Since 
each match rejected by $\tau$ but not already in $\tent(\sigma)$ must be in
$\prop(\tau)$, we have $X \subseteq \tent(I)$:
condition P3 holds.
For condition P4, let $r \in \res\,P \cap \res\,\tent(I)$.
As $(r, h)$ for some $h$ is proposed in $\tau$, some match $(r, h') \in
\tent(I)$ must be rejected in $\tau$ and hence in 
$\rej(\tau) \setminus \prop(\tau) = X$.
Therefore, we have $\res\,P \cap \res\,\tent(I) \subseteq \res\,X$.

For conditions P5 and P6, fix $h \in H$.
Since $P \cup (\tent(I) \setminus X) = \tent(\sigma +
\tau)$, we have $|\res_h(P \cup (\tent(I) \setminus X))| \leq q_h$.
Moreover, if $\res_h\, X$ is non-empty, then $\tau$ rejects 
a match involving $h$ and therefore this
inequality is tight. Therefore, condition P5 holds.
As each member of $\res_h(\tent(\sigma + \tau))$ precedes all members of
$\res_h(\rej(\tau))$, condition P6 holds.

We have $(r_0, h_0) \in \rej(\tau)$ and, from the assumption that $\tau$ is
chosen to be the shortest, $r_0$ is not involved in any proposal in $\tau$.
Therefore $(r_0, h_0) \in \tgs(P, X)$.  A match $(r, h)$ in $\tgs(P, X)$  
distinct from $(r_0, h_0)$ would also contradict that assumption,
since the rejection of such $(r, h)$ may be removed from $\tau$ without
affecting the feasibility as $r$ is not involved in any proposal in $\tau$.
We conclude that $\tgs(P, X) = \{(r_0, h_0)\}$. 

By Lemma~\ref{lem:executable}, there is a simple extension $J$ of
$I$ such that $\prop(J) \setminus \prop(I) \subseteq P$, 
$\rej(J) \setminus \rej(I) \subseteq X$, and
$(r_0, h_0) \subseteq \rej(J)$. Let $\sigma + \tau'$ be a maximum
$J$-feasible extension of $\sigma$. Then, $\prop(\tau') = 
\prop(J) \setminus \prop(I) \subseteq P \subseteq \prop(\tau)$ and 
$\rej(\tau') = \rej(J) \setminus \rej(I) \subseteq X \subseteq \rej(\tau)$.
All of these inclusions must in fact be equalities, since otherwise
$\sigma + \tau'$ is a feasible extension of $\sigma$ rejecting $(r_0, h_0)$
that is shorter than $\sigma + \tau$, a contradiction.
Therefore, we have $\prop(\tau') = \prop(\tau) = P$ and
$\rej(\tau') = \rej(\tau) = X$, finishing the proof of the lemma. 
\qed 
\end{proof}

We have focused on those resident-minimal instances that are 
also hospital-complete. The following theorem, however, is
on general resident-minimal instances.

\begin{theorem}
\label{thm:resident-minimal}.
Let $I$ be a resident-minimal instance
and $(r_0, h_0)$ a match in $\tent(I)$.
Then, the following three conditions are equivalent.
\begin{description}
  \item[(1)] Match $(r_0, h_0)$ is not finalizable in $I$.
  \item[(2)] There is a resident-changeless and hospital-complete
  extension $I'$ of $I$ such that there is a 
  prescription $(P, Y)$ for $I'$ with $(r_0, h_0) \in \tgs(P, Y)$.
  \item[(3)] There is a simple extension of $I$ that rejects $(r_0,
  h_0)$.
\end{description}
\end{theorem}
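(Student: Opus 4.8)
The plan is to establish the cycle $(1)\Rightarrow(2)\Rightarrow(3)\Rightarrow(1)$, using the hospital-complete case (Lemmas~\ref{lem:executable} and~\ref{lem:to-presc}) as the engine and the reduction Propositions~\ref{prop:hospital-completion} and~\ref{prop:simple-hcomplete} to pass between a general resident-minimal instance and its hospital completions. The fact I will lean on repeatedly is that $\prop(I)=\prop(I')$ for every resident-changeless hospital-complete extension $I'$ of $I$ (Proposition~\ref{prop:hospital-completion}); this is exactly what lets simplicity and targets survive a change of base instance.

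For $(2)\Rightarrow(3)$, given a resident-changeless hospital-complete extension $I'$ of $I$ and a prescription $(P,Y)$ for $I'$ with $(r_0,h_0)\in\tgs(P,Y)$, I would first apply Lemma~\ref{lem:executable} to $I'$ (which is resident-minimal by Proposition~\ref{prop:hospital-completion}) to obtain a simple extension $J$ of $I'$ with $\tgs(P,Y)\subseteq\rej(J)$, hence $(r_0,h_0)\in\rej(J)$. It then remains to check that $J$ is a simple extension of $I$ as well, and this is where $\prop(I)=\prop(I')$ enters: any match $m\in\prop(J)\setminus\prop(I)=\prop(J)\setminus\prop(I')$ that also lay in $\rej(J)$ would, by simplicity over $I'$, have to satisfy $m\in\rej(I')\subseteq\prop(I')=\prop(I)$, contradicting $m\notin\prop(I)$. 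Thus $(\prop(J)\setminus\prop(I))\cap(\rej(J)\setminus\rej(I))=\emptyset$ and $J$ witnesses $(3)$.

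For $(3)\Rightarrow(1)$, given a simple extension $J$ of $I$ rejecting $(r_0,h_0)$, I would extend $J$ to an arbitrary completion $J^{*}$ of $I$. A maximal $J$-feasible sequence rejecting $(r_0,h_0)$ stays $J^{*}$-feasible, since appending to preference lists preserves the proposal conditions and keeps ousted matches ousted; by Propositions~\ref{prop:order_insensitive} and~\ref{prop:unique} its rejection of $(r_0,h_0)$ therefore appears in every maximal $J^{*}$-feasible sequence. Hence $(r_0,h_0)\in\rej(J^{*})$, so $(r_0,h_0)\notin\tent(J^{*})$ and the match is not finalizable. (Simplicity is not even needed here.) For $(1)\Rightarrow(2)$, non-finalizability together with $(r_0,h_0)\in\tent(I)\subseteq\prop(I)$ gives a completion $\hat J$ of $I$ with $(r_0,h_0)\in\rej(\hat J)$. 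I would choose a resident-changeless hospital-complete extension $I'$ of $I$ whose hospital lists come from $\hat J$; then $\hat J$ is a completion of $I'$ rejecting $(r_0,h_0)$, so the match is non-finalizable in $I'$, and feeding $I'$ to Lemma~\ref{lem:to-presc} yields a prescription $(P,Y)$ for $I'$ with $\tgs(P,Y)=\{(r_0,h_0)\}$, which is precisely $(2)$.

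The main obstacle sits in this last direction, in guaranteeing the hypothesis of Lemma~\ref{lem:to-presc} that $(r_0,h_0)\in\tent(I')$. When $(r_0,h_0)\notin\pend(I)$ this is free, since $\tent(I)\setminus\pend(I)\subseteq\tent(I')$ by Proposition~\ref{prop:hospital-completion}. The delicate case is a pending $(r_0,h_0)$: completing $h_0$'s list can drop $r_0$ out of the top $q_{h_0}$ among the residents that proposed to $h_0$ in $I$. Here I would not take $I'$ verbatim from $\hat J$, but reorder $h_0$'s completed list so that $r_0$ sits just below the residents $I$ already lists at $h_0$ (which number fewer than $q_{h_0}$, as otherwise $(r_0,h_0)$ would be ousted already in $I$), while keeping the residents that bump $r_0$ in $\hat J$ above it. The crux, and the step I expect to require the most care, is reconciling ``keep $r_0$ within quota among the $I$-proposers so that $(r_0,h_0)\in\tent(I')$'' with ``keep $q_{h_0}$ higher-ranked bumpers available so that $(r_0,h_0)$ is still rejectable in a completion of $I'$'' — the genuine tension that pending matches introduce, which does not arise in the hospital-complete setting already handled by Lemmas~\ref{lem:executable} and~\ref{lem:to-presc}.
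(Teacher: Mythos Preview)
Your cycle $(1)\Rightarrow(2)\Rightarrow(3)\Rightarrow(1)$ and the tools you invoke are exactly the paper's. For $(2)\Rightarrow(3)$ you apply Lemma~\ref{lem:executable} to $I'$ and then argue that simplicity transfers from $I'$ to $I$ via $\prop(I)=\prop(I')$; the paper does the same, only more tersely (``Since $J$ is a simple extension of $I$, we are done''). Your $(3)\Rightarrow(1)$ is the paper's ``trivial'' step spelled out. For $(1)\Rightarrow(2)$ you copy the hospital lists of a rejecting extension into a resident-changeless hospital-complete $I'$ and feed it to Lemma~\ref{lem:to-presc}; again this is precisely the paper's construction. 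For $(r_0,h_0)\notin\pend(I)$ everything you wrote is correct and matches the paper, and the hypothesis $(r_0,h_0)\in\tent(I')$ of Lemma~\ref{lem:to-presc} is indeed free by Proposition~\ref{prop:hospital-completion}.

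You have, however, put your finger on a genuine gap in the pending case---one the paper's own proof does not close either, since it applies Lemma~\ref{lem:to-presc} without verifying $(r_0,h_0)\in\tent(I')$. Your proposed fix of reordering $h_0$'s completed list cannot succeed in general, because the implication $(1)\Rightarrow(2)$ is actually \emph{false} for pending matches. Take $R=\{r_0,r_1,r_2\}$, $H=\{h_0\}$ with $q_{h_0}=1$, each $\lambda_{r_i}=(h_0)$, and $\pi_{h_0}$ empty. Then $I$ is resident-minimal, $(r_0,h_0)\in\pend(I)$, and both (1) and (3) hold (any completion of $\pi_{h_0}$ that puts $r_0$ last is a resident-changeless, hence trivially simple, extension rejecting $(r_0,h_0)$). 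But for every resident-changeless hospital-complete $I'$ one has $\prop(I')=R\times H$, so P1 forces $P=\emptyset$; if $(r_0,h_0)\in Y\subseteq\tent(I')$ then $\tent(I')=\{(r_0,h_0)\}$, whence $\tent(I')\setminus Y=\emptyset$ and the equality clause of P5 fails. Thus (2) fails. The ``tension'' you describe between keeping $r_0$ within quota and keeping bumpers available is therefore not a matter of care but an outright contradiction: the bumpers here are themselves pending proposers in $I$, so they already sit in $\prop(I')$ and are barred from $P$ by P1. The statement should be read for $(r_0,h_0)\in\tent(I)\setminus\pend(I)$, as the follow-up Theorem~\ref{thm:general-presc} is explicitly phrased; restricted thus, your argument and the paper's coincide.
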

\begin{proof}
(3) $\Rightarrow$ (1) is trivial. We show (1) $\Rightarrow$ (2) $\Rightarrow$
(3) below.

(1) $\Rightarrow$ (2): Suppose $(r_0, h_0)$ is not finalizable in $I$.
Let $J$ be an extension of $I$ that rejects $(r_0, h_0)$.  Let 
$J'$ be an arbitrary resident-changeless and hospital-complete extension of
$J$.  We let $I'$ be the resident-changeless and hospital-complete
extension of $I$ in which the preference list of each hospital is identical
to that in $J'$.
By Proposition~\ref{prop:hospital-completion}, 
$I'$ is resident-minimal. Since $J'$ is an extension of $I'$ and
rejects $(r_0, h_0)$, by Lemma~\ref{lem:to-presc}, there is
a prescription $(P, Y)$ for $I'$ such that 
$(r_0, h_0) \in \tgs(P, Y)$.

(2) $\Rightarrow$ (3): Let $I'$ and $(P, Y)$ be as in condition (2).
By Lemma~\ref{lem:executable}, there is a simple extension $J$ of
$I'$ such that $\tgs(P, Y) \subseteq \rej(J)$.
Since $J$ is a simple extension of $I$, we are done.
\qed 
\end{proof}

This theorem shows that, for resident-minimal instance $I$, a triple $(P, Y,
I')$, where $I'$ is a resident-changeless and hospital-complete extension of 
$I$ and $(P, Y)$ is a prescription for $I'$,
is a certificate that each match in $\tgs(P, Y)$ is not
finalizable in $I$.  We seek a more concise certificate and
generalize the notion of prescription to general resident-minimal instances.

Let $I$ be a resident-minimal instance.
A {\em prescription for $I$} is a pair $(P, X)$ of sets of matches
that satisfies conditions P1, P2, P3, P4, P5 in Proposition~\ref{prop:presc}
together with the following condition that replaces P6.
\begin{description}
\item[P6':] For each $h \in H$, the following holds.
Each member of $\res_h(P \cup ((\tent(I) \setminus \pend(I))\setminus X)))$ 
precedes all members of $\res_h(X \setminus \pend(I))$ in the
preference list of $h$ in $I$.
Moreover, if $\res_h(X \setminus \pend(I))$ is non-empty then 
$\res_h\,\pend(I) \subseteq \res_h\,X$.
\end{description}

The target set $\tgs(P, X)$ of prescription $(P, X)$ is defined
in the same manner as in the special case before:
$\tgs(P, X) = \{(r, h) \in X \mid r \not\in \res\,P\}$. 

Note that if $I$ is hospital-complete then $\pend(I)$ is empty and hence
condition P6' is equivalent to condition P6.

\begin{lemma}
\label{lem:presc-to-general}
Let $I$ be a resident-minimal instance, $I'$ a resident-changeless
and hospital-complete extension of $I$, and $(P, Y)$ a 
prescription for $I'$. Then, $(P, X)$, where $X = Y \cup (\rej(I') \setminus
\rej(I))$ is a prescription for $I$.
\end{lemma}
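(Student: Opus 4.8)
The plan is to verify each of conditions P1, P2, P3, P4, P5 and P6' for the pair $(P, X)$ with respect to $I$, drawing throughout on the fact that $(P, Y)$ satisfies P1 through P6 with respect to the hospital-complete instance $I'$. First I would record the structural facts supplied by Proposition~\ref{prop:hospital-completion}: since $I'$ is a resident-changeless and hospital-complete extension of the resident-minimal instance $I$, we have $\prop(I) = \prop(I')$ and $\tent(I) \setminus \pend(I) \subseteq \tent(I') \subseteq \tent(I)$. Writing $D = \rej(I') \setminus \rej(I)$, the equality of proposal sets and $\tent = \prop \setminus \rej$ give $\rej(I) \subseteq \rej(I')$ and $D = \tent(I) \setminus \tent(I')$, whence $D \subseteq \pend(I)$ and $\tent(I) = \tent(I') \cup D$ is a disjoint union. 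Because $X = Y \cup D$ with $D \subseteq \pend(I)$, $Y \subseteq \tent(I')$, and $D$ disjoint from $\tent(I')$, I will repeatedly use three identities: $X \setminus \pend(I) = Y \setminus \pend(I)$, $\tent(I) \setminus X = \tent(I') \setminus Y$ (so $P \cup (\tent(I) \setminus X) = P \cup (\tent(I') \setminus Y)$), and $(\tent(I) \setminus \pend(I)) \setminus X = (\tent(I') \setminus \pend(I)) \setminus Y$. I will also use that, as $I'$ is an extension of $I$, each hospital's list in $I$ is a prefix of its list in $I'$, so the relative order of any two residents both appearing on $h$'s list in $I$ is the same in $I$ and $I'$.

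Conditions P1 and P2 are immediate: P1 follows from $P \cap \prop(I') = \emptyset$ together with $\prop(I) = \prop(I')$, and P2 concerns $P$ alone. For P3, $Y \subseteq \tent(I') \subseteq \tent(I)$ and $D \subseteq \tent(I)$ give $X \subseteq \tent(I)$. For P4, I would take $r \in \res\,P \cap \res\,\tent(I)$ and split on the disjoint union $\tent(I) = \tent(I') \cup D$: if $r \in \res\,\tent(I')$ then $r \in \res\,P \cap \res\,\tent(I') \subseteq \res\,Y \subseteq \res\,X$ by condition P4 for $(P, Y)$, and if $r \in \res\,D$ then $r \in \res\,X$ directly.

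For P5, the identity $P \cup (\tent(I) \setminus X) = P \cup (\tent(I') \setminus Y)$ at once yields $|\res_h(P \cup (\tent(I) \setminus X))| \leq q_h$ from the first clause of P5 for $(P, Y)$. The tightness claim, assuming $\res_h\,X \neq \emptyset$, splits in two. If $\res_h\,Y \neq \emptyset$, the second clause of P5 for $(P, Y)$ gives equality directly. The remaining case is $\res_h\,Y = \emptyset$ but $\res_h\,D \neq \emptyset$; here I would use that $h$ is saturated in $\tent(I')$, namely $|\res_h\,\tent(I')| = q_h$. This holds because some match at $h$ lies in $D \subseteq \rej(I')$, and in the hospital-complete instance $I'$ a hospital that has rejected a proposal holds exactly $q_h$ residents in the final outcome (equivalently, the $q_h$ top proposers of $h$ are never ousted). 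Since $\res_h\,Y = \emptyset$ gives $\res_h(\tent(I') \setminus Y) = \res_h\,\tent(I')$, we obtain $|\res_h(P \cup (\tent(I) \setminus X))| \geq q_h$, and with the upper bound this is an equality.

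The main obstacle is P6', which I would prove in its two assertions. For the ordering assertion, the identities above reduce the claim to: every member of $\res_h(P \cup ((\tent(I') \setminus \pend(I)) \setminus Y))$ lying on $h$'s list in $I$ precedes every member of $\res_h(Y \setminus \pend(I))$ on that list. Each such precede-er belongs to $\res_h(P \cup (\tent(I') \setminus Y))$ and each precede-ee to $\res_h\,Y$, so P6 for $(P, Y)$ places the former ahead of the latter in $h$'s list in $I'$; as both appear on $h$'s list in $I$, the prefix property transfers this order to $I$ (residents of $P$ not on $h$'s list in $I$ have no position and are unconstrained by the relation). For the second assertion, suppose $\res_h(Y \setminus \pend(I))$ is nonempty, witnessed by some $r^*$ with $(r^*, h) \in Y$ and $r^*$ on $h$'s list in $I$; I must show $\res_h\,\pend(I) \subseteq \res_h\,X$. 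Any $(r, h) \in \pend(I)$ lies in $\tent(I) = \tent(I') \cup D$; if $(r, h) \in D$ then $(r, h) \in X$, so assume $(r, h) \in \tent(I')$. Then $r$ is off $h$'s list in $I$, hence strictly after $r^*$ in $h$'s list in $I'$ (the $I$-list is a prefix and contains $r^*$ but not $r$). If $(r, h)$ were in $\tent(I') \setminus Y$, then P6 for $(P, Y)$ would force $r$ to precede $r^*$ in $h$'s list in $I'$, a contradiction; hence $(r, h) \in Y \subseteq X$. This gives $\res_h\,\pend(I) \subseteq \res_h\,X$ and completes P6'. I expect the bookkeeping in P6' — keeping straight which residents sit within versus beyond the truncation point of $h$'s list — to be the most delicate part, while P1 through P5 are routine once the set identities of the first paragraph are in hand.
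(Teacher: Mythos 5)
Your proof is correct and follows essentially the same route as the paper's: a condition-by-condition verification of P1--P5 and P6', built on $\prop(I)=\prop(I')$ from Proposition~\ref{prop:hospital-completion}, the identities $X\setminus\pend(I)=Y\setminus\pend(I)$ and $P\cup(\tent(I)\setminus X)=P\cup(\tent(I')\setminus Y)$, and the same prefix-transfer and contradiction arguments for the two assertions of P6'. If anything, your handling of P5 is more careful than the paper's, which simply declares the two P5 conditions ``equivalent'' via the set identity and thereby glosses over the fact that $\res_h\,X\neq\emptyset$ is a weaker hypothesis than $\res_h\,Y\neq\emptyset$; your saturation argument ($|\res_h\,\tent(I')|=q_h$ whenever $h$ has rejected some match in the hospital-complete instance $I'$) is exactly what is needed to cover the case $\res_h\,Y=\emptyset$ but $\res_h(\rej(I')\setminus\rej(I))\neq\emptyset$.
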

\begin{proof}
Conditions P1 and P2 do not depend on $X$ and therefore follow
from those conditions for prescription $(P, Y)$. Since $Y \subseteq \tent(I') \subseteq
\tent(I)$ and $\rej(I') \setminus \rej(I) \subseteq \tent(I)$, condition 
P3 that $X \subseteq \tent(I)$ holds.
For condition P4, let $r \in \res\,P \cap \res\,\tent(I)$.
If $r \in \res\,P \cap \res\,\tent(I')$ then $r \in Y$ by condition P4 for 
prescription $(P, Y)$. Otherwise, $r \in \res(\rej(I') \setminus \rej(I))
\subseteq \res\,X$. Therefore, condition P4 holds.
Condition P5 is equivalent to condition P5 for prescription $(P, Y)$, 
since $P \cup (\tent(I) \setminus X) = P \cup (\tent(I') \setminus Y)$.
For condition P6, fix $h \in H$.
In the preference list of $h$ in $I'$,
each member of $\res_h(P \cup (\tent(I) \setminus X))$
precedes all members of $\res_h\,Y$ 
by condition P6 for prescription $(P, Y)$, and obviously 
precedes all members of $\res_h(\rej(I') \setminus \rej(I))$. Therefore, 
it precedes all members of $\res_h\,X$. 
If $r \not\in \res\,\pend(I)$ then $r$ is already in the preference list
of $h$ in $I$. Therefore, 
each member of $\res_h(P \cup ((\tent(I) \setminus \pend(I))\setminus X)))$ 
precedes all members of $\res_h(X \setminus \pend(I))$ in the
preference list of $h$ in $I$. Moreover, suppose some $r \in \res_h(X \setminus
\pend(I))$ and some $r' \in \res_h\,(\pend(I) \setminus X)$. 
Then, since $r'$ is in $\res_h(P \cup (\tent(I) \setminus X))$ and
$r \in \res_h\, Y$, $r'$ must precede $r$ in the preference list of
$h$ in $I'$.  But this is impossible since $r$ is on the preference list
of $h$ in $I$ while $r'$ is not, a contradiction.  Therefore, 
if $\res_h(X \setminus \pend(I))$ is non-empty then $\res_h\,\pend(I) \subseteq
\res_h\,X$: condition P6 holds.
\qed  
\end{proof}

\begin{lemma}
\label{lem:presc-from-general}
Let $I$ be a resident-minimal instance
and $(P, X)$ is a prescription for $I$.
Then, there is some resident-changeless and hospital-complete extension $I'$ of
$I$ such that $(P, Y)$, where $Y = X \setminus \rej(I')$, is
a prescription for $I'$.
\end{lemma}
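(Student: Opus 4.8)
The plan is to build the required extension $I'$ explicitly, completing the preference list of each hospital in a manner dictated by $(P,X)$, and then to set $Y = X \setminus \rej(I')$. Before constructing anything I would record the standing facts from Proposition~\ref{prop:hospital-completion}: for a resident-changeless hospital-complete extension $I'$ of $I$ we have $\prop(I') = \prop(I)$, $\rej(I)\subseteq\rej(I')$ with $\rej(I')\setminus\rej(I)\subseteq\pend(I)$, and $\tent(I)\setminus\pend(I)\subseteq\tent(I')\subseteq\tent(I)$; in particular $Y = X\setminus\rej(I') = X\cap\tent(I')$, since $X\subseteq\tent(I)$ by P3. Two further facts are essential. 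First, each resident is involved in at most one tentative match, so $X\subseteq\tent(I)$ forces each resident to have a unique $X$-match, which is its unique tentative match. Second, by resident-minimality every tentatively matched resident has its match at the very end of its list, so when hospital lists are completed a pending resident that fails to survive simply drops out and proposes nowhere else; thus resolving pending proposals is a one-shot comparison at each hospital with no propagation.

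Next I would classify hospitals. Writing $A_h = \res_h(\tent(I)\setminus\pend(I))$, a hospital with a pending match must satisfy $|A_h| < q_h$ (else the pending proposal would be ousted) and hence has no rejections, $\res_h\,\rej(I)=\emptyset$; all other hospitals have $\res_h\,\pend(I)=\emptyset$ and their relevant list is unchanged. The construction of $I'$ is then: at each hospital $h$, keep the $I$-order of its listed residents, insert every off-list $r$ with $(r,h)\in P$ at the front, order the pending residents of $h$ so that those not in $X$ precede those in $X$ and place this block after the listed residents, and append the remaining residents at the end. Because pending residents sit at the ends of the proposers' lists, this determines the surviving set $S_h\subseteq\res_h\,\pend(I)$ (the top $q_h-|A_h|$ pending residents, or all of them if fewer), hence $\rej(I')$ and $Y$.

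The bulk of the work is checking P1--P6 for $(P,Y)$ on $I'$. Conditions P1 and P2 are inherited verbatim (P1 via $\prop(I')=\prop(I)$), and P3 holds since $Y=X\cap\tent(I')$. For P4, if $r\in\res\,P\cap\res\,\tent(I')$ then $r\in\res\,\tent(I)$, so P4 for $(P,X)$ yields an $X$-match of $r$; being $r$'s unique tentative match it lies in $\tent(I')$ because $r\in\res\,\tent(I')$, hence in $Y$. For P5, put $N'_h=\res_h(P\cup(\tent(I')\setminus X))$ and $N_h=\res_h(P\cup(\tent(I)\setminus X))$; since $\tent(I')\subseteq\tent(I)$ we have $N'_h\subseteq N_h$, so $|N'_h|\le q_h$ follows from P5 for $(P,X)$. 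For the equality clause I would note that a nonempty $\res_h\,Y$ forces $\res_h\,X\neq\emptyset$, hence $|N_h|=q_h$, and that placing the non-$X$ pending residents first makes all of them survive (the bound $|A_h|+|\res_h(\pend(I)\setminus X)|\le|N_h|$ guarantees the room), so that $N'_h=N_h$.

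The main obstacle, and the delicate point of the whole argument, is P6, where the two conditions already verified pull in opposite directions. Writing $X_h^{np}=\res_h(X\setminus\pend(I))$, the set $P5$ wants the non-$X$ pending residents of $h$ to survive, so they are placed early and belong to $N'_h$; yet P6 demands that every member of $N'_h$ precede every member of $\res_h\,Y$, and the on-list $X$-residents $X_h^{np}$ lie ahead of the entire pending block. A surviving non-$X$ pending resident together with a resident of $X_h^{np}$ would therefore break P6. The resolution is precisely the ``moreover'' clause of P6$'$: whenever $\res_h(X\setminus\pend(I))$ is nonempty it forces $\res_h\,\pend(I)\subseteq\res_h\,X$, so no non-$X$ pending resident exists at such a hospital and the bad configuration cannot arise. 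Granting this, the remaining precedences---$A_h\setminus X_h^{np}$ and the on-list $P$-residents before $X_h^{np}$ (from the first part of P6$'$), the front-inserted $P$-residents before everything in $Y$, and the non-$X$ pending residents before the $X$-pending ones (from the placement)---all hold, completing P6 and the proof.
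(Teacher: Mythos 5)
Your construction of $I'$ is not a legal extension of $I$, and this is a genuine gap, not a technicality. The paper defines an extension so that the preference list of each hospital in $I$ must be a \emph{prefix} of its list in $I'$ (completions may only append to the tail). Your step ``insert every off-list $r$ with $(r,h)\in P$ at the front'' therefore produces an instance that is not a resident-changeless and hospital-complete \emph{extension} of $I$ at all, which is exactly what the lemma asserts to exist; moreover, the prefix property is what is used downstream (Theorem~\ref{thm:general-presc} feeding into Theorem~\ref{thm:resident-minimal}, where a simple extension of $I'$ must also be an extension of $I$), so the requirement cannot be relaxed. The insertion is also load-bearing in your argument: your verification of P6 explicitly relies on ``the front-inserted $P$-residents before everything in $Y$,'' so deleting the illegal step leaves P6 unproved. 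Nor can it be repaired by moving those $P$-residents to the tail within your ordering, since your list ends with the pending block (non-$X$ then $X$) followed by the remaining residents, so tail-appended $P$-residents would land \emph{after} the $X$-pending residents and P6 would fail again.

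The missing idea is how off-list $P$-residents can precede $\res_h\,Y$ inside a genuine (tail-appending) extension. In any such extension they necessarily come after every resident listed by $h$ in $I$, so P6 for $(P,Y)$ can hold only if, at every hospital $h$ having an off-list $P$-resident, no member of $Y$ is on $h$'s list in $I$; and indeed P6$'$ for $(P,X)$ guarantees this, because an off-list member of $\res_h(P\cup((\tent(I)\setminus\pend(I))\setminus X))$ can ``precede all members of $\res_h(X\setminus\pend(I))$ in the list of $h$ in $I$'' only when $\res_h(X\setminus\pend(I))=\emptyset$, forcing all of $\res_h\,Y$ to be pending. The paper's construction then completes each hospital's list arbitrarily \emph{subject to placing $\res_h(\pend(I)\cap X)$ at the very bottom}, below the appended $P$-residents and all other appended residents, which is what makes P6 go through. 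Your remaining verifications (P1--P5, the use of the ``moreover'' clause of P6$'$ to rule out the clash between surviving non-$X$ pending residents and on-list $X$-residents, and the survival bound for the equality part of P5) parallel the paper's proof and are essentially sound once the construction is corrected in this way.
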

\begin{proof}
For each $h \in H$, arbitrarily complete the preference list of $h$ in $I$ so
that the residents in $\res_h(\pend(I) \cap X)$ get the lowest ranks in the
completed list.  Let the resulting instance be $I'$.
We confirm that $(P, Y)$ is a prescription for $I'$.
Conditions P1 and P2 do not depend on $P$ and therefore follow from
those conditions for prescription $(P, X)$. Since $X \subseteq
\tent(I)$ and $\tent(I') = \tent(I) \setminus \rej(I')$, 
condition P3 that $Y \subseteq \tent(I')$ holds.
For condition P4, let $r \in \res\,P \cap \res\,\tent(I')$.
Since $r \in \res\,P \cap \tent(I)$, we have $r \in \res\,X$ by
condition P4 for prescription $(P, X)$. Therefore, we have
$r \in \res\,X \cap \res\,\tent(I') = \res(X \setminus \rej(I')) = \res\,Y$, 
condition P4.
Condition P5 is equivalent to condition P5 for prescription $(P, X)$, 
since $P \cup (\tent(I) \setminus X) = P \cup (\tent(I') \setminus Y)$.

To show that P6 holds, let $r \in \res_h(P \cup (\tent(I')
\setminus Y))) = \res_h(P \cup (\tent(I) \setminus X))$.  
Suppose first that $r \not\in \res_h\,\pend(I)$.
Then, by condition P6' for $(P, X)$, $r$ precedes
all members of $\res_h(Y \setminus \pend(I)) \subseteq \res_h(X \setminus
\pend(I))$ in the preference list of $h$ in $I$ and hence in $I'$ as well.
Since $r$ precedes all members of $\res_h\,\pend(I)$
in the preference list of $h$ in $I'$ by the way $I'$ completes the preference
list of $h$, we conclude that $r$ precedes all members of $\res_h\,X$ in that
preference list.
Suppose next that $r \in \res_h\,\pend(I)$. 
Then, since $\res_h\,\pend(I)$, having $r$ as a member, is not contained in
$\res_h\,X$, $\res_h(X \setminus \pend(I))$ is empty, by condition P6' for $(P, X)$.
Therefore $r$ precedes all members in $\res_h\,X$ in the preference list of $h$
in $I'$, as those members are placed in the lowest positions. 
In either case, $r$ precedes all members of $\res_h\,Y \subseteq \res_h\,X$
in the preference list of $h$ in $I'$, that is, condition P6 holds
for $(P, Y)$.
\qed  
\end{proof}

Thus, a prescription for a general resident-minimal instance is
indeed a certificate for the negative answer to the finalizability
of a tentative match.

\begin{theorem}
\label{thm:general-presc}
Let $I$ be a resident-minimal instance and $(r_0, h_0)$ a
match in $\tent(I) \setminus \pend(I)$. Then,
there is a prescription $(P, X)$ for $I$ with
$(r_0, h_0) \in \tgs(P, X)$ if and only if
there is some resident-changeless and hospital-complete
extension $I'$ of $I$ and a prescription $(P, Y)$ for
$I'$ with $(r_0, h_0) \in (P, Y)$.
\end{theorem}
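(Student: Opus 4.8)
The plan is to obtain both directions of the equivalence almost immediately from the two preceding lemmas, Lemma~\ref{lem:presc-to-general} and Lemma~\ref{lem:presc-from-general}, which already do all the work of transferring the prescription conditions P1--P6 (for the hospital-complete case) to and from the generalized conditions P1--P5, P6' (for the general resident-minimal case). The theorem therefore reduces to tracking the single distinguished match $(r_0, h_0)$ through the target sets, and here the hypothesis $(r_0, h_0) \in \tent(I) \setminus \pend(I)$ is what makes everything go through.

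For the ``if'' direction, I would start from a resident-changeless hospital-complete extension $I'$ of $I$ and a prescription $(P, Y)$ for $I'$ with $(r_0, h_0) \in \tgs(P, Y)$. Applying Lemma~\ref{lem:presc-to-general} yields a prescription $(P, X)$ for $I$ with $X = Y \cup (\rej(I') \setminus \rej(I))$. Since membership $(r_0, h_0) \in \tgs(P, Y)$ unpacks to $(r_0, h_0) \in Y$ together with $r_0 \notin \res\,P$, and since $Y \subseteq X$ while the component $P$ is literally the same, I get $(r_0, h_0) \in X$ and $r_0 \notin \res\,P$, i.e. $(r_0, h_0) \in \tgs(P, X)$. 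This direction needs no use of the $\tent(I) \setminus \pend(I)$ assumption.

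For the ``only if'' direction, I would start from a prescription $(P, X)$ for $I$ with $(r_0, h_0) \in \tgs(P, X)$ and invoke Lemma~\ref{lem:presc-from-general} to produce a resident-changeless hospital-complete extension $I'$ and the prescription $(P, Y)$ for $I'$ with $Y = X \setminus \rej(I')$. The component $P$ is unchanged, so $r_0 \notin \res\,P$ is inherited, and the only thing left to verify is $(r_0, h_0) \in Y$, equivalently $(r_0, h_0) \notin \rej(I')$. This is exactly where the hypothesis is used: $I'$ is a resident-changeless hospital-complete extension of the resident-minimal instance $I$, so Proposition~\ref{prop:hospital-completion} gives $\tent(I) \setminus \pend(I) \subseteq \tent(I')$; hence $(r_0, h_0) \in \tent(I')$, and in particular $(r_0, h_0) \notin \rej(I')$, so $(r_0, h_0) \in Y$ and $(r_0, h_0) \in \tgs(P, Y)$.

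I expect the forward direction's last verification to be the only nontrivial point, and indeed the conceptual heart of the theorem. Without restricting to $\tent(I) \setminus \pend(I)$, the distinguished match could be a pending match, and a pending match can be rejected once the hospital preference lists are completed (the completion in Lemma~\ref{lem:presc-from-general} pushes the relevant pending residents to the lowest ranks precisely so as to reject them). Such a match would then leave $\tent(I')$ and drop out of the target set $\tgs(P, Y)$, breaking the equivalence. Excluding pending matches from the distinguished match is exactly what rules this out, and it is the reason the hypothesis $(r_0, h_0) \in \tent(I) \setminus \pend(I)$ appears in the statement.
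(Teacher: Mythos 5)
Your proof is correct and follows essentially the same route as the paper: both directions are immediate applications of Lemma~\ref{lem:presc-to-general} and Lemma~\ref{lem:presc-from-general}, with the hypothesis $(r_0,h_0)\in\tent(I)\setminus\pend(I)$ used only in the forward direction to show $(r_0,h_0)\notin\rej(I')$ (the paper phrases this as $X\cap\rej(I')\subseteq\pend(I)$, you phrase it via Proposition~\ref{prop:hospital-completion}, which is the same underlying fact). Your closing remark on why the pending-match exclusion is necessary is a correct and apt observation.
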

\begin{proof}
Suppose first that there is a prescription $(P, X)$ for $I$ with
$(r_0, h_0) \in \tgs(P, X)$. By Lemma~\ref{lem:presc-from-general},
there is a resident-changeless and hospital-complete
extension $I'$ of $I$ and a prescription $(P, Y)$ for
$I'$ such that $Y = X \setminus \rej(I')$.  
As $X \cap \rej(I') \subseteq \pend(I)$ and $(r_0, h_0) \not\in \pend(I)$,
$(r_0, h_0) \in \tgs(P, X)$ implies $(r_0, h_0) \in \tgs(P, Y)$.
For the converse, suppose that there is
a resident-changeless and hospital-complete
extension $I'$ of $I$ and a prescription $(P, Y)$ for $I$ with
$(r_0, h_0) \in \tgs(P, Y)$.
By Lemma~\ref{lem:presc-to-general}, $(P, X)$, where 
$X = Y \cup (\rej(I') \setminus \rej(I))$, is a prescription
for $I$. Since $\tgs(P, Y) \subseteq \tgs(P, X)$, we have
$(r_0, h_0) \in \tgs(P, X)$.
\qed 
\end{proof}

We close this subsection by sketching an integer program (IP)
for computing a prescription for a given resident-minimal
instance $I$ and a match $(r_0, h_0) \in \tent(I)$.
More precisely, the IP captures a triple $(P, X, Z)$, where
$(P, X)$ is a prescription for $I$ with $(r_0, h_0) \in \tgs(P, X)$
and $Z$ is a subset of $\pend(I)$ such that there is a resident-changeless and
hospital complete extension $J$ of $I$ with $Z = \rej(J) \setminus
\rej(I)$ and $(P, X \setminus Z)$ being a prescription for $J$.

We only describe the variables in the IP and their 
intended interpretations. The linear constraints are straightforward
to write down based on those interpretations. All the variables are
binary.  For each match $(r, h) \not\in \prop(I)$, we have
a variable $p_{r, h}$: $p_{r, h} = 1$ if and only if $(r, h) \in P$.
For each match $(r, h) \in \tent(I)$, we have a variable
$x_{r, h}$: $x_{r, h} = 1$ if and only if $(r, h) \in X$.
For each $h \in H$ and a subset $S$ of $\res_h\,\pend(I)$,
we have a variable $z_{h, S}$: $z_{h, S} = 1$ if and only if
$\res_h\,Z = S$. The objective function is the sum of $p_{r, h}$
over all $(r, h) \in (R \times H) \setminus \prop(I)$, which is minimized.
The optimal solution of this IP corresponds to a desired prescription
with the smallest cardinality of $P$.

\subsection{Polynomial time algorithm for the stable marriage case}
In this subsection, we show that 1-FTM-RM, the finalizability of
a tentative match for resident-minimal stable marriage instances,
is polynomial time solvable. 

Let $I$ be a resident-minimal stable marriage instance.
We define a bipartite digraph $G_I$ on vertex sets
$T = \tent(I)$ and $P = (R \times H) \setminus \prop(I)$ as follows.
Let $(r, h) \in T$ and $(r', h') \in P$.
There is an edge from $(r, h)$ to $(r', h')$ if and only if
$r = r'$.  There is an edge from $(r', h')$ to $(r, h)$
if and only if $h = h'$, both $r$ and $r'$ are on the
preference list of $h$ in $I$, and $r'$ precedes $r$
in that list. 

\begin{lemma}
\label{lem:rejection-chain}
Let $I$ be a resident-minimal stable marriage instance
and $(r_0, h_0)$ a match in $\tent(I)$.
Then, there is a simple extension of $I$ that rejects
$(r_0, h_0)$ if and only if there is a directed path in $G_I$
from some root (a vertex without incoming edges) of $G_I$
to $(r_0, h_0)$.
\end{lemma}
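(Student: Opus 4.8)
The plan is to read a directed path in $G_I$ as a \emph{rejection chain} and to prove that such chains are exactly the traces of simple extensions that reject $(r_0,h_0)$. The two edge types of $G_I$ record the two moves of the resident-proposing algorithm past the state $\tent(I)$: a $T\to P$ edge $(r,h)\to(r,h')$ says that resident $r$, having vacated $h$, proposes to a hospital $h'$ absent from its list in $I$; a $P\to T$ edge $(r',h')\to(r,h)$ (so $h=h'$ and $r'$ precedes $r$ on the list of $h$) says that the proposal of $r'$ to $h$ ousts the incumbent $r$. Because every quota is $1$, each hospital holds a single resident, so each $P$-vertex has at most one in-neighbour, namely the $T$-vertex sharing its resident; hence a source lying in $P$ is precisely a match whose resident is unmatched in $\tent(I)$, and it is such a source that can actually start a chain.

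For $(\Leftarrow)$ I take a path $(r_k,g_{k-1})\to(r_{k-1},g_{k-1})\to(r_{k-1},g_{k-2})\to\cdots\to(r_1,h_0)\to(r_0,h_0)$ from a source to $(r_0,h_0)$, writing $g_0=h_0$, where each $(r_i,g_i)$ lies in $T$ and each $(r_i,g_{i-1})$ lies in $P$. I build an extension $J$ of $I$ by appending $g_{i-1}$ to the list of $r_i$ for $i=k,\dots,1$ and then run the algorithm from a maximal $I$-feasible sequence. The source being unmatched forces $r_k$ to actually propose to $g_{k-1}$; the edge condition that $r_k$ precedes $r_{k-1}$ on $g_{k-1}$ makes this proposal oust $r_{k-1}$, which then proposes to $g_{k-2}$, and so on until $r_1$ ousts $r_0$ from $h_0$. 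The point requiring care is that $J$ is \emph{simple}: the only new proposals are the $(r_i,g_{i-1})$, and I must show none of them is rejected. This holds because the path visits distinct vertices and quotas are $1$: each hospital $g_{i-1}$ receives exactly one new proposal, from a resident it prefers to its incumbent $r_{i-1}$, and the only residents freed to re-propose are the chain residents, each proposing solely to its prescribed next hospital; thus $\prop(\tau)\cap\rej(\tau)=\emptyset$.

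For $(\Rightarrow)$ I start from a simple extension rejecting $(r_0,h_0)$ and, as in Lemma~\ref{lem:to-presc}, pass to a maximal $I$-feasible sequence $\sigma$ and a shortest extension $\tau$ that still rejects $(r_0,h_0)$; simplicity forces every rejection in $\tau$ onto a match of $\tent(I)$. I then trace the chain backwards. The rejection of $(r_0,h_0)$ is caused by a new proposal $(r_1,h_0)$ with $r_1$ preceding $r_0$ on the list of $h_0$, and $(r_1,h_0)\in P$, for otherwise $(r_1,h_0)\in\prop(I)$ and $(r_0,h_0)$ would already be ousted in $\sigma$, contradicting $(r_0,h_0)\in\tent(I)$. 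Resident $r_1$ proposed to $h_0$ only after leaving its match $g_1$ in $\tent(I)$ (resident-minimality makes $g_1$ the last entry of its $I$-list), and $(r_1,g_1)$ was itself ousted by a new proposal $(r_2,g_1)\in P$; iterating produces edges $(r_{i+1},g_i)\to(r_i,g_i)$ and $(r_{i+1},g_{i+1})\to(r_{i+1},g_i)$ of $G_I$. The process terminates, and it can only terminate at a resident unmatched in $\tent(I)$, i.e.\ at a source, giving the required path to $(r_0,h_0)$.

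The main obstacle is the simplicity verification in $(\Leftarrow)$: I must rule out that appending hospitals along the path triggers a secondary rejection landing on one of the freshly made proposals, which would violate simplicity. I expect to dispatch this by a short induction along the chain, using the distinctness of the path's vertices together with the unit-quota ousting rule to show that the set of residents freed and re-proposing is exactly the set of chain residents, each aimed only at the next hospital on the path, so that no new proposal is ever displaced. The auxiliary facts—membership of each intermediate match in $P$ rather than $\prop(I)$, and termination of the backward trace at an unmatched resident—are routine consequences of resident-minimality and the unit-quota semantics.
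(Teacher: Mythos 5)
Your handling of the resident side is sound and, on non-pending matches, coincides with the paper's argument in both directions. But you have overlooked hospital-side truncation, i.e.\ pending matches, and this is a genuine gap in both halves. A pending match $(r,h)\in\pend(I)$ is always a root of $G_I$ lying in $T$ (its resident is not on the list of $h$ in $I$, so it has no incoming edge), yet its resident \emph{is} matched in $\tent(I)$, so it is not one of your $P$-sources. In your $(\Leftarrow)$ direction you dismiss such roots with the remark that only $P$-sources ``can actually start a chain''; the lemma, however, quantifies over all roots, and the paper's proof explicitly treats a path whose first vertex $(r^*,h^*)$ lies in $T$: it completes the preference list of $h^*$ so that $r^*$ is ranked lowest, which (with quota $1$) makes the rejection $(r^*,h^*)^-$ feasible and launches the chain. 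In your $(\Rightarrow)$ direction, the claim that every rejection in $\tau$ ``is caused by a new proposal'' from a preferred resident --- so that the backward trace can only stop at a resident unmatched in $\tent(I)$ --- is false for pending matches: a pending match can be rejected with \emph{no} new proposal at all, merely because the extension completes the hospital's list (for instance, placing another pending proposer above it). At that point your trace has no proposal to follow, and the chain must terminate at a pending $T$-root rather than at a $P$-source; the paper's trace handles exactly this by checking root-ness first and observing that a rejected pending match is itself a root.

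This omission changes the truth value of what you prove, so it cannot be patched by a remark. Take residents $a,e,b,d$ and hospitals $X,Y$ with quota $1$, where in $I$ the lists are: $a\colon X$; \ $e\colon X$; \ $b\colon Y$; \ $d\colon$ empty; \ $X\colon d$; \ $Y\colon a,b$. This instance is resident-minimal, $\tent(I)=\{(a,X),(e,X),(b,Y)\}$, and $(a,X),(e,X)$ are pending. Extending the list of $X$ to $d,e,a$ and the list of $a$ to $X,Y$ first rejects $(a,X)$ (the pending proposer $e$ now precedes $a$ on the list of $X$) and then $(b,Y)$; this is a simple extension rejecting $(b,Y)$. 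Yet in $G_I$ the only root with a path to $(b,Y)$ is the pending $T$-root $(a,X)$; no $P$-source reaches $(b,Y)$. Hence the statement your two directions actually establish --- ``rejectable by a simple extension if and only if reachable from a $P$-source'' --- is false, and recovering the equivalence asserted in the lemma requires the case analysis on roots in $T$ versus roots in $P$ that the paper carries out in both halves of its proof.
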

\begin{proof}
Suppose first that $I$ has a simple extension $I'$ that rejects $(r_0, h_0)$.
Let $\sigma$ be a maximal $I$-feasible sequence and $\sigma + \tau$
a maximal $I'$-feasible sequence.  We determine a sequence of
matches $(r_i, h_i)$, $i = 0, 1, \ldots$, so that the reversed sequence
$(r_j, h_j)$, $j = i, i - 1, \ldots, 0$, forms a directed path 
from $(r_i, h_i)$ to $(r_0, h_0)$ in $I$, for each $i$.
We maintain the invariant that
if $(r_i, h_i) \in T$ then $(r_i, h_i) \in \rej(\tau)$ and
if $(r_i, h_i) \in P$ then $(r_i, h_i) \in \prop(\tau)$.
We sart with the given match $(r_0, h_0)$.

Suppose $i \geq 0$ and match $(r_i, h_i)$ 
has been determined. If $(r_i, h_i)$ is a root of $G_I$ then we are
done as we have a desired path from $(r_i, h_i)$ to $(r_0, h_0)$. 
Suppose otherwise.  First suppose
that $(r_i, h_i) \in T$. If $(r_i, h_i) \in \pend(I)$ then
$r_i$ is not on the preference list of $h_i$ in $I$ and
hence there is no incoming edge to $(r_i, h_i)$ in $G_I$.
Since we are assuming that $(r_i, h_i)$ is not a root of $G_I$,
we conclude that $(r_i, h_i) \in \tent(I) \setminus \pend(I)$.
Due to the invariant, $(r_i, h_i)$ is in $\rej(\tau)$ and hence  
its rejection must be preceded in
$\tau$ by a proposal of some match $(r, h_i)$ in $\prop(\tau) \subseteq  P$
such that $r$ precedes $r_i$ in the preference list of $h_i$ and hence
there is an edge of $G_i$ from $(r, h_i)$ to $(r_i, h_i)$.
We let $(r_{i + 1}, h_{i + 1}) = (r, h_i)$.
Next suppose $(r_i, h_i) \in P$. Then, by the invariant
we have $(r_i, h_i) \in \prop(\tau)$.  If $r_i \not\in \res\,\tent(I)$ then
$(r_i, h_i)$ is a root of $G_I$ and we are done.
Otherwise, the proposal of $(r_i, h_i)$ must be preceded
in $\tau$ by the rejection of $(r_i, h)$ for some $h$.
We let $(r_{i + 1}, h_{i + 1}) = (r_i, h)$.

As the construction selects matches appearing in $\tau$
in the reversed order, it must eventually end at a root of $G_I$.

For the converse, suppose there is a directed path $p$ from
some root of $G_I$ to $(r_0, h_0)$.  Let $\tau_p$ be an
event sequence listing the matches in $p$ in the same order and
making each match in $P$ a proposal and each match in $T$ a rejection.
Extend $I$ by adding $h$ to the preference list of $r$, for
each $(r, h) \in \prop(\tau_p)$.  Furthermore, if the starting
vertex $(r^*, h^*)$ of $p$ is in $T$, which implies that
$(r^*, h^*) \in \pend(I)$, complete the preference list of
$h^*$ so that $r^*$ gets the lowest rank. Let $I'$ be
the resulting extension of $I$. Let $\sigma$ be 
a maximal $I$-feasible sequence.
If $(r^*, h^*) \in T$ then, as the quota of each hospital is one, 
$\sigma + (r^*, h^*)^-$ is $I'$-feasible. Otherwise, since
$(r^*, h^*) \in P$ and $r^* \not\in \res\,\tent(I)$, it follows that 
$\sigma + (r^*, h^*)^+$ is $I'$-feasible. 
By a straightforward induction, we may verify that
$\sigma + \tau_p$ is $I'$-feasible. As $\rej(\tau_p) \subseteq \tent(I)$,
$\sigma + \tau_p$ is a simple extension of $\sigma$ and hence
$I'$ is a simple extension of $I$ that rejects $(r_0, h_0)$.
\qed 
\end{proof}

The following theorem is immediate from
Theorem~\ref{thm:resident-minimal} and Lemma~\ref{lem:rejection-chain}.

\begin{theorem}
\label{thm:1-FTM-RM}
1-FTM-RM is solvable in polynomial time. 
\end{theorem}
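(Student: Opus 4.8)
The plan is to derive Theorem~\ref{thm:1-FTM-RM} as an immediate corollary of the two results that precede it, namely Theorem~\ref{thm:resident-minimal} and Lemma~\ref{lem:rejection-chain}, by checking that the characterization they provide is testable in polynomial time. First I would recall the input: an instance of 1-FTM-RM is a resident-minimal stable marriage instance $I$ together with a match $(r_0, h_0) \in \tent(I)$, and the question is whether $(r_0, h_0)$ is finalizable in $I$. By the equivalence of conditions (1) and (3) in Theorem~\ref{thm:resident-minimal}, $(r_0, h_0)$ is \emph{not} finalizable in $I$ if and only if there is a simple extension of $I$ that rejects $(r_0, h_0)$. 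So the decision problem reduces to testing for the existence of such a simple extension.

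Next I would invoke Lemma~\ref{lem:rejection-chain}, which applies precisely in the stable marriage (quota-one) setting: a simple extension rejecting $(r_0, h_0)$ exists if and only if the digraph $G_I$ contains a directed path from some root (a vertex with no incoming edge) to the vertex $(r_0, h_0)$. Chaining the two equivalences, $(r_0, h_0)$ is finalizable in $I$ if and only if \emph{no} such root-to-$(r_0,h_0)$ path exists in $G_I$.

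It then remains to argue that this condition is checkable in polynomial time. I would first observe that $G_I$ can be constructed explicitly in polynomial time: its vertex set is $T = \tent(I) \subseteq R \times H$ together with $P = (R \times H) \setminus \prop(I)$, so $|V(G_I)| = O(|R|\,|H|)$, and each adjacency (the resident-match and hospital-precedence conditions defining the edges) is decidable directly from the preference lists of $I$, giving at most $O(|R|^2 |H|^2)$ edges. Computing $\tent(I)$ and $\prop(I)$ themselves is just running the adapted DA algorithm, which is polynomial. Having built $G_I$, I would identify the set of roots (vertices of in-degree zero) and then test reachability of $(r_0, h_0)$ from that set by a single graph search (for instance, a breadth-first search from a super-source connected to all roots, or equivalently a backward search from $(r_0, h_0)$ checking whether it meets a root). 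Reachability is linear in the size of $G_I$, hence polynomial in the input size. The answer to 1-FTM-RM is ``yes'' (finalizable) exactly when $(r_0, h_0)$ is not reachable from any root.

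I do not expect a genuine obstacle here, since all the conceptual work has already been discharged by Theorem~\ref{thm:resident-minimal} and Lemma~\ref{lem:rejection-chain}; the only content is the bookkeeping that establishes polynomiality. If anything, the mild point requiring care is confirming that $G_I$ has polynomial size and is polynomially constructible, and that ``existence of a root-to-target path'' is a standard reachability query---both of which are routine. Accordingly the proof will be short, essentially one sentence asserting that the characterization of Theorem~\ref{thm:resident-minimal} and Lemma~\ref{lem:rejection-chain} is tested by a polynomial-time reachability computation in $G_I$.
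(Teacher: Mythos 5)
Your proposal is correct and follows exactly the paper's route: the paper states that Theorem~\ref{thm:1-FTM-RM} is immediate from Theorem~\ref{thm:resident-minimal} (equivalence of finalizability with the nonexistence of a simple extension rejecting the match) and Lemma~\ref{lem:rejection-chain} (reduction of that existence question to root-to-target reachability in $G_I$). Your write-up merely makes explicit the polynomial-size and polynomial-time reachability bookkeeping that the paper leaves implicit.
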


\subsection{Hardness of FTM-RM}
\label{subsec:hardness-rm}
In this subsection, we show that 2-FTM-RM, and hence FTM-RM,
is coNP-complete. The reduction is from SAT through an intermediate
problem we call DIGRAPH-FIRING.

Let $G$ be a digraph and $\theta : V(G) \rightarrow N$ be
a {\em threshold} function which assigns a non-negative integer
$\theta(v)$ to each vertex $v$ of $G$. A {\em $\theta$-firing}
of $G$ is a subgraph $F$ of $G$ such that,
for each $v \in V(F)$, the in-degree of $v$ in $F$ is at least
$\theta(v)$ and the out-degree of $v$ in $F$ is at most 1.

\smallskip
\noindent\textbf{$k$-DIGRAPH-FIRING}
\newline\textbf{Instance:} A triple $(G, t, \theta)$,
where $G$ is a digraph, $t$ is a vertex of $G$, 
and $\theta$ is a threshold function on $V(G)$ such that
$\theta(v) \leq k$ for every $v \in V(G)$.
\newline\textbf{Question:} 
Does $G$ have a $\theta$-firing that contains
$t$?

\begin{lemma}
\label{lem:dag-firing}
2-DAG-FIRING is NP-Complete.
\end{lemma}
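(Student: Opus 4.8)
The plan is to prove membership in NP by the obvious guess-and-check, and then to establish NP-hardness by a reduction from the restricted form of SAT guaranteed by Proposition~\ref{prop:1-2SAT}, in which every variable occurs exactly twice positively and exactly once negatively. Membership is immediate: a claimed $\theta$-firing $F$ containing $t$ is a polynomial-size certificate, and checking $\deg^-_F(v) \ge \theta(v)$ and $\deg^+_F(v) \le 1$ for every $v \in V(F)$, together with $t \in V(F)$, takes linear time.

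For the hardness, the guiding intuition is to read a $\theta$-firing as a token-routing process. A threshold-$0$ vertex is a \emph{source} that may fire freely and send one token along its single permitted outgoing edge; a threshold-$1$ vertex is a \emph{relay} or \emph{OR}-gate that fires once any incoming token arrives; and a threshold-$2$ vertex is an \emph{AND}-gate that needs two incoming tokens and, by the out-degree bound, still emits only one. The constraint $\deg^+_F(v)\le 1$ is precisely what forces a fired vertex to commit its single token to exactly one successor, and this is the mechanism I use to encode consistent truth assignments. Given clauses $C_1,\dots,C_m$, I would build the following acyclic instance. For each variable $x$, whose positive occurrences lie in $C_a,C_b$ and whose negative occurrence lies in $C_c$, introduce two sources $a_x^1,a_x^2$ (threshold $0$) and a \emph{negative aggregator} $n_x$ (threshold $2$) with in-edges from both $a_x^1$ and $a_x^2$. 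The source $a_x^i$ has one alternative out-edge to the clause vertex of its positive occurrence (so $a_x^1\to c_a$ and $a_x^2\to c_b$), and $n_x$ has its single out-edge to $c_c$. Each clause vertex $c_j$ (threshold $1$) collects exactly the occurrence-outputs belonging to $C_j$. Finally, a chain aggregates the clauses: $g_1$ (threshold $1$) takes $c_1$, and for $2\le j\le m$ the gate $g_j$ (threshold $2$) takes $g_{j-1}$ and $c_j$; set $t=g_m$. All thresholds are at most $2$, and the edges respect the topological order sources $<n_x<$ clauses $<g_1<\cdots<g_m$, so the instance is a DAG as required for $2$-DAG-FIRING.

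The two directions of correctness are then routine. For soundness ($t$ fires $\Rightarrow$ satisfiable), note that $g_m\in F$ forces both its in-neighbours into $F$, so by downward induction every $g_j$ and every clause vertex $c_j$ fires; each $c_j$ having in-degree $\ge 1$ means some occurrence-output reaches it. Declaring $x$ \emph{false} exactly when $n_x\in F$ gives a consistent assignment, and one checks each clause is satisfied: a token into $c_j$ from a source $a_x^i$ means $a_x^i$ spent its one out-edge on $c_j$ rather than on $n_x$, so $n_x$ has in-degree $<2$, hence $n_x\notin F$, hence $x$ is true and its positive literal satisfies $C_j$; a token from $n_x$ means $n_x\in F$, so $x$ is false and its negative literal satisfies $C_j$. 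For completeness (satisfiable $\Rightarrow$ $t$ fires), route each true variable's sources to their positive clauses and each false variable's sources into $n_x$ (firing it and routing $n_x$ to its negative clause); every clause then receives a token and fires, the chain fires up to $t$, and all degree constraints hold by construction.

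The main obstacle is the design of the variable gadget so that it enforces \emph{consistency}, namely that a positive and the negative occurrence of the same variable can never be satisfied simultaneously, using only thresholds at most $2$ and the out-degree-$1$ rule. The $AND$-gate $n_x$ supplies exactly this: firing it demands that both sources relinquish their tokens to it, which by the out-degree bound makes it impossible for either source to feed a positive clause; conversely, feeding any positive clause steals an input from $n_x$ and kills it. Everything else—collecting occurrences at threshold-$1$ clause vertices and reducing the $m$ clause-tokens to a single token at $t$ through a chain of threshold-$2$ gates—is mechanical, and the fit of exactly two sources and one aggregator to the ``twice positive, once negative'' structure from Proposition~\ref{prop:1-2SAT} is what keeps the thresholds within the bound~$2$.
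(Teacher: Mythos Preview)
Your proof is correct and follows essentially the same route as the paper's: both use the restricted SAT of Proposition~\ref{prop:1-2SAT}, encode each variable by two threshold-$0$ sources feeding a threshold-$2$ ``negative'' aggregator (so that firing the aggregator consumes both sources' unique out-edges and blocks the positive occurrences), collect literal outputs at threshold-$1$ clause vertices, and conjoin the clauses via a chain of threshold-$2$ gates terminating at~$t$. The only difference is cosmetic: the paper inserts two extra threshold-$1$ relay vertices $l^1_x,l^2_x$ between the sources and the positive-occurrence clause vertices, which you correctly observe are dispensable and omit.
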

\begin{proof}
That 2-DAG-FIRING is in NP is trivial. We show its NP-hardness by
a reduction from SAT.  Let $S$ be a set of clauses, $X$ the
set of variables of $S$, and $C_1$, \ldots, $C_m$ the enumeration
of clauses in $S$.  Using Proposition~\ref{prop:1-2SAT}, we assume that
each variable in $X$ appears positively in exactly two clauses and 
negatively in exactly one clause.
For each $x \in X$, let $i_x^-$ denote the index of the clause
containing $x$ negatively and let $i_x^{1}$ and
$i_x^{2}$ denote the indices of clauses that contain $x$ positively.
We construct a DAG $G$ as follows.
$V(G)$ contains distinct vertices $a_i$ and $b_i$ for $1 \leq i \leq m$
and five distinct vertices $u_x$, $v_x$, $l^-_x$, $l^{1}_x$, and
$l^{2}_x$ for each $x \in X$. 
The edge set is defined by
\begin{eqnarray*}
E(G) & = & \{(a_i, b_i) \mid 1 \leq i \leq m\} 
 \cup \{(b_i, b_{i + 1}) \mid 1 \leq i < m\}  \cup \bigcup_{x \in X} E_x,  
\end{eqnarray*} 
where
\begin{eqnarray*}
E_x & = & \{(u_x, l^-_x), (v_x, l^-_x), (u_x, l^1_x), (v_x, l^2_x),
(l^-_x, a_{i^-_x}), (l^1_x, a_{i^1_x}), (l^2_x, a_{i^2_x})\}.
\end{eqnarray*} 
We set $t = b_m$. 
The threshold function $\theta$
is such that $\theta(v)$ is the in-degree of $v$ except 
that $\theta(a_i) = 1$ for $1 \leq i \leq m$. 
Since the only vertices with indegree possibly larger than two are $a_i$, $1
\leq i \leq m$, we have $\theta(v) \leq 2$ for every $v \in V(G)$.

See Fig.~\ref{fig:reduction1} for an example.

\begin{figure}[htb]
\begin{center}
\includegraphics[scale=0.5,bb=0 0 571 343]{fig1.png}
\end{center}
(a) Clause set $S$\\
(b) DAG $G$; threshold is equal to the in-degree 
unless explicitly specified on the shoulder
\caption{Reduction from SAT to 2-DAG-FIRING}
\label{fig:reduction1}
\end{figure}

Observe that, for each $x \in X$, a    
$\theta$-firing cannot contain either $l^1_x$ or $l^2_x$ if 
it contains $l^-_x$ but can contain both
$l^1_x$ and $l^2_x$ simultaneously if it does not
contain $l^-_x$. Given this property of the ``variable gadgets''
in $G$,
it is straightforward to see that there is a mutual conversion 
between a satisfying
assignments of $S$ and a $\theta$-firing of $G$ containing $t$.
\qed
\end{proof}

\begin{theorem}
\label{thm:coNP2}
2-FTM-RM is coNP-complete.
\end{theorem}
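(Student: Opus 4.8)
The plan is to prove membership in coNP and then coNP-hardness by reducing the NP-complete problem of Lemma~\ref{lem:dag-firing} to the complement of 2-FTM-RM. Membership is inherited from FTM: a completion that rejects the queried match is a polynomial-size certificate of non-finalizability. For hardness, from an instance $(G,t,\theta)$ with $\theta(v)\le 2$ I build a resident-minimal and hospital-complete instance $I$ with all quotas at most $2$ and a tentative match $(r_0,h_0)$, so that $G$ admits a $\theta$-firing containing $t$ exactly when $(r_0,h_0)$ is not finalizable. Each source vertex $v$ becomes an unmatched resident $\rho_v$ with empty list (free to make one new proposal in a completion); each non-source vertex $v$ becomes a hospital $\eta_v$ of quota $q_{\eta_v}=\theta(v)$ together with a resident $\rho_v$ whose list is the single hospital $\eta_v$, so that $(\rho_v,\eta_v)\in\tent(I)$. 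On the list of $\eta_v$ I place exactly the residents $\{\rho_u:(u,v)\in E(G)\}$ above $\rho_v$ and all other residents below it, and I set $(r_0,h_0)=(\rho_t,\eta_t)$. Every resident's list is then empty or equal to the single hospital it occupies, so $I$ is resident-minimal, its execution causes no rejection, and $\tent(I)$ is exactly $\{(\rho_v,\eta_v):v\text{ non-source}\}$.

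Because $I$ is hospital-complete, the only resident-changeless hospital-complete extension of $I$ is $I$ itself, so by Theorem~\ref{thm:resident-minimal} the match $(r_0,h_0)$ is non-finalizable if and only if there is a prescription $(P,X)$ for $I$ with $(r_0,h_0)\in\tgs(P,X)$. I read a firing off such a prescription by putting an edge $(u,v)$ into $F$ whenever $\rho_u$ proposes to $\eta_v$, and putting $v$ into $F$ when $v$ is a source making such a proposal or a non-source with $(\rho_v,\eta_v)\in X$. Condition P2 says each resident makes at most one new proposal, i.e.\ out-degree at most $1$. At a hospital $\eta_v$ whose resident is ousted, $\tent(I)\setminus X$ is empty at $\eta_v$, so the tightness of condition P5 forces exactly $q_{\eta_v}=\theta(v)$ new proposals there, i.e.\ in-degree at least $\theta(v)$; this is precisely where the quota-equals-threshold choice pays off, since for a threshold-$2$ vertex the single vacant slot absorbs the first in-neighbor proposal and only the second ousts $\rho_v$. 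Condition P6, together with the placement of only in-neighbors above $\rho_v$, guarantees that the residents ousting $\rho_v$ are in-neighbors of $v$, so every extracted edge is a genuine edge of $G$; and condition P4 forces a resident to propose only after its own match is ousted, so the tail of every extracted edge has itself fired. Conditions P1 and P3 are immediate from the construction.

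The crux, faced in both directions, is reconciling the firing relation ``in-degree at least $\theta(v)$'' with the matching relation ``exactly $q_{\eta_v}$ accepted proposals'' while keeping the extension simple. For the direction from firings to non-finalizability I first prune a given firing containing $t$ to a sub-firing in which every vertex keeps exactly $\theta(v)$ incoming edges; this is possible because $G$ is acyclic and deleting surplus in-edges only lowers out-degrees, so the out-degree-at-most-$1$ condition survives. Declaring $P$ to be the proposals along the pruned edges and $X$ the ousted matches yields a prescription with $(\rho_t,\eta_t)\in\tgs(P,X)$: routing exactly $\theta(v)$ in-neighbor proposals into each fired $\eta_v$ ousts $\rho_v$ and no new proposal, so simplicity holds and P5 is tight. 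For the converse the prescription conditions recover a firing as above; the only delicate point is that a ``rogue'' new proposal directed at a hospital that is \emph{not} ousted has no corresponding edge of $G$, but such a proposal lands below $\rho_v$ and hence, by P6, cannot contribute to any rejection, so it is simply discarded when reading off $F$ without disturbing any in-degree, any out-degree, or the membership of $t$. As the reduction is plainly polynomial-time computable, 2-FTM-RM is coNP-hard, and hence coNP-complete.
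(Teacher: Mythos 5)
Your reduction is essentially the paper's: vertices become residents, each non-source $v$ becomes a hospital with quota $\theta(v)$ whose list places the in-neighbours' residents above its own resident $\rho_v$, resident lists are the single entry $\eta_v$ (empty for sources), and the queried match is $(\rho_t,\eta_t)$. What is genuinely different is the correctness argument. The paper works directly with event sequences: for the firing-to-rejection direction it builds an explicit $I'$-feasible sequence by walking a topological order of the firing, and for the converse it takes a simple extension (Theorem~\ref{thm:resident-minimal}, (1)$\Rightarrow$(3)) and extracts a firing from its proposals and rejections. You instead make the hospital lists complete, so that condition (2) of Theorem~\ref{thm:resident-minimal} collapses to ``there is a prescription for $I$ itself with $(\rho_t,\eta_t)$ in its target set,'' and then translate firings to prescriptions and back: P2 is out-degree at most $1$, tight P5 at an ousted hospital is in-degree exactly $\theta(v)$, P6 forces proposals into ousted hospitals to come from in-neighbours, and P4 forces proposing non-sources to have fired. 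This is an attractive repackaging: the firing conditions and P1--P6 correspond almost term by term, and no event-sequence bookkeeping is needed, at the cost of leaning entirely on the prescription machinery (Lemma~\ref{lem:executable} via Theorem~\ref{thm:resident-minimal}). Your handling of ``rogue'' proposals in the converse is sound, though the stated justification is slightly off: such a proposal may well correspond to an edge of $G$ and lie above $\rho_v$; what makes discarding harmless is simply that edges into non-fired vertices are irrelevant to the firing conditions, while P6 guarantees that proposals into \emph{ousted} hospitals do come from in-neighbours.

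There is one genuine gap, in the firing-to-prescription direction. You need $(\rho_t,\eta_t)\in\tgs(P,X)$, hence $\rho_t\notin\res P$, i.e., $t$ must have out-degree $0$ in the pruned firing. Your pruning deletes only surplus \emph{in}-edges, so if the given firing contains an out-edge at $t$, your pair $(P,X)$ is still a prescription but its target set does not contain $(\rho_t,\eta_t)$, and non-finalizability of $(\rho_t,\eta_t)$ is not established. The paper avoids this by assuming ``without loss of generality, $t$ is a sink of $G$''; you make no such assumption, so your claim does not hold as written. The fix is short: before pruning, replace the given firing $F$ by its restriction to the set $A$ of vertices that can reach $t$ along edges of $F$. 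Every $F$-in-edge of a vertex of $A$ has its tail in $A$ (that tail reaches $t$ through the edge), so in-degrees inside $A$ are unchanged; out-degrees can only drop; and, since $G$ is acyclic, the out-neighbour of $t$ (if any) cannot reach $t$, so $t$ gets out-degree $0$ in the restricted firing. With that one repair (and the implicit assumption, shared with the paper and valid for the instances produced by Lemma~\ref{lem:dag-firing}, that $\theta(v)=0$ holds exactly for the sources), your proof is correct.
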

\begin{proof}
That 2-FTM-RM is in coNP is trivial.
To show that it is coNP-hard, we 
we give a polynomial time reduction from $k$-DAG-FIRING to 
the complement of $k$-FTM-RM, for each positive integer $k$.
As 2-DAG-FIRING is NP-complete by Lemma~\ref{lem:dag-firing},
the theorem follows.

Fix $k$. Let $(G, t, \theta)$ be an instance of $k$-DAG-FIRING.
Without loss of generality, we assume that $t$ is a sink of $G$.
For each $v \in G$, let $\Nin(v)$ denote the set of
in-neighbors of $v$ in $G$ and let $V_0 = \{v \in V(G) \mid
\Nin(v) = \emptyset\}$ denote the set of roots of $G$.
We construct an instance 
$I = (R, H, \{q_h\}_{h \in H},$
$\{\pi_h\}_{h \in H}, 
\{\lambda_r\}_{r \in R})$ as follows.
For each $v \in V(G)$, we have a mutually distinct resident $r_v$
and we set $R = \{r_v \mid v \in V(G)\}$.
For each non-root vertex $v \in V(G) \setminus V_0$, we have a mutually distinct 
hospital $h_v$ and we set $H = \{h_v \mid v \in V(G) \setminus V_0\}$.
For each non-root vertex $v \in V(G) \setminus V_0$, 
we set $q_{h_v} = \theta(v)$.
For each non-root vertex $v \in V(G) \setminus V_0$, 
the preference list of $h_v$ lists
$r_u$, $u \in \Nin(v)$, in the first $|\Nin(v)|$ places
in an arbitrary order and then lists $r_v$ as its final element.
For each root $v \in V_0$, the preference list of
$r_v$ is empty (nothing disclosed). For each non-root vertex
$v \in V(G) \setminus V_0$, the preference list of $r_v$
consists of a single entry $h_v$ (only the top preference is disclosed). 
Finally, the match for which we ask the finalizability
is $(r_t, h_t)$. 
It is straightforward to verify that $I$ is resident-minimal and that 
and $(r_t, h_t) \in \tent(I)$; in fact we have $(r_v, h_v) \in \tent(I)$
for every $v \in V(G) \setminus V_0$. It is
also clear that the quota of each hospital in $I$ is $k$ or smaller.

See Figure~\ref{fig:reduction2} for an example.

\begin{figure}[htb]
\includegraphics[height=2in,bb=0 0 782 315]{fig2.png}

\noindent(a) DAG $G$ for firing: the threshold equals the in-degree unless
explicitly specified on the shoulder\\
(b) The instance corresponding to $G$:
the parenthesized numbers are quotas;
tentative matches are shown in bold face from both sides\\
(c) A $\theta$-firing that contains $t$\\
(d) The instance extending the instance in (b) that corresponds to the firing in
(c); rejected matches are crossed out from both sides
\caption{Reduction from 3-DAG-FIRING to the complement of 3-FTM-RM}
\label{fig:reduction2}
\end{figure}

First suppose that $G$ has a $\theta$-firing $F$ that contains $t$.
We show that then $(r_t, h_t)$ is not finalizable in $I$.
Let $I'$ be obtained from $I$ by adding $h_v$
at the end of the preference list of $r_u$ in $I$, for each $(u, v) \in F$.
Let $v_1$, \ldots, $v_n = t$ be a topologically sorted enumeration of
$V(F)$.
We define $I'$-feasible sequence
$\sigma_i$, $0 \leq i \leq n$, inductively
as follows.  We will maintain the induction 
hypothesis that $\sigma_i$ is $I'$-feasible and
$\tent(\sigma_i) = \{(r_{v_j}, h_{v_j})
\mid v_j \in V(G) \setminus V_0 \mbox{\ and\ } j > i\}
\cup \{(r_{v_j}, h_{v_k}) \mid j \leq i \mbox{\ and\ }
(v_i, v_k) \in E(F)\}$.
Let $\sigma_0$ be an arbitrary
maximal $I$-feasible sequence. 
Since $I$-feasibility implies $I'$-feasibility and
$\tent(I) = \{(r_{v}, h_{v}) \mid v \in V(G) \setminus V_0\}$,
the induction hypothesis holds for the base case.
Suppose $i > 0$. 
If $v_i$ does not have
any outgoing edge in $F$ then set $\sigma_{i} = \sigma_{i - 1}$.
Suppose $v_i$ has an outgoing edge $(v_i, v_k)$ in $F$.
Because of the topological ordering, we have $k > i$.
If $v_i \in V_0$, then set
$\sigma_{i} = \sigma_{i - 1} + 
(r_{v_i}, h_{v_k})^+$. Since 
$r_{v_i} \not \in \res\,\tent(I)$ and
$h_{v_k}$ is ranked top in the preference list of $r_{v_i}$ in $I'$,
$\sigma_{i}$ is $I'$-feasible. The induction hypothesis is 
maintained since we have $(v_i, v_k) \in E(F)$ and
$(r_{v_i}, h_{v_k}) \in \tent(\sigma_i)$.
On the other hand, if $v_i \in V(F) \setminus V_0$ then 
set $\sigma_{i} = \sigma_{i - 1} + 
(r_{v_i}, h_{v_i})^- + (r_{v_i}, h_{v_k})^+$.
In this case, the in-degree of $v_i$ in $F$ is at least  
$\theta(v_i) = q_{h_{v_i}}$. For each in-neighbor $v_{j}$
of $v_i$ in $F$, its index $j < i$ and,
by the induction hypothesis, we have
$(r_{v_j}, h_{v_i}) \in \tent(\sigma_{i - 1})$.
Therefore, the preference list of $h_{v_i}$ in $I'$ has
at least $q_{h_{v_i}}$ residents in $\res_h \tent(\sigma_{i - 1})$ 
that precede $r_{v_i}$ and therefore 
$\sigma'_i = \sigma_{i - 1} + (r_{v_i}, h_{v_i})^-$ is
$I'$-feasible.  Moreover, since $r_{v_i}$ ranks $h_{v_k}$ immediately after
$h_{v_i}$, $\sigma_i = \sigma'_i + (r_{v_i}, h_{v_k})^+$ is $I'$-feasible.
We also have $\tent(\sigma_i) = \tent(\sigma_{i - 1}) \cup 
\{(r_{v_i}, h_{v_k})\} \setminus \{(r_{v_i}, h_{v_i})\}$
and therefore the induction hypothesis is maintained.
This construction leads to a $I'$-feasible sequence 
$\sigma_n$ that rejects $(r_t, r_t)$.
Therefore, $(r_t, h_t)$ is not finalizable in $I$.

For the converse, suppose $(r_t, h_t)$ is not finalizable in $I$.
Since $I$ is resident-minimal, by Theorem~\ref{thm:resident-minimal},
there is some simple extension $I'$ of $I$ that rejects $(r_t, h_t)$.
We assume without loss of generality that $I'$ is resident-minimal: take an
appropriate truncation if not. 
Let $U = V_0 \cup \{v \in V(G) \setminus V_0 \mid (r_v, h_v) \in
\rej(\tau)\}$ and let $F$ be the subgraph of $G$ induced by $U$.
We show that $F$ is a $\theta$-firing of $G$.
We first show that the out-degree of each vertex in $F$ is at most 1.
Let $(u, v)$ be an arbitrary edge of $F$.
By the definition of $G$, $r_u$ precedes $r_v$ in the preference
list of $h_v$ in $I$. Since $v \in U$, $(r_v, h_v)$ is rejected by $\tau$,
which implies that $\tau$ contains the proposal of $(r, h_v)$
for every $r$ that precedes $r_v$ in the preference list of $h_v$
(recall that there are exactly $q_{h_v}$ such residents $r$),
including $r_u$.  Therefore, the extension of the preference
list of $r_u$ from $I$ to $I'$ is by $h_v$.  
This show that, for each $u$, the vertex $v$ such that $(u, v)$ is an edge
of $F$ is unique if one exists: the out-degree of each vertex in $F$ is at
most 1.

We next show that the in-degree of each vertex $v$ is at least
$\theta(v)$. If $v \in V_0$, this is obvious
since $\theta(v) = 0$.  Suppose $v \in U \setminus V_0$.
Then, since $\tau$ rejects $(r_v, h_v)$,
this rejection event must be preceded in $\tau$ by
the proposal of $(r_u, h_v)$ for every resident $r_u$ in the set of
the $q_{h_v} = \theta(v)$ residents that precede $r_v$ in the preference list of $h_v$.
But for each such resident $r_u$, either $u$
is in $V_0$ (hence $r_u \not\in \res\;\tent(I)$)
or the rejection of $(r_u, h_u)$ precedes the proposal
of $(r_u, h_v)$ in $\tau$.  In either case,
we have $u \in U$.  Therefore, the in-degree of
$v$ is in $F$ is at least $\theta(v)$.
We conclude that $F$ is a $\theta$-firing of $G$.
Since $\tau$ rejects $(r_t, h_t)$,
we have $t \in U$.  This completes the proof
that if $(r_t, h_t)$ is not finalizable then 
then there is a $\theta$-firing of $G$ that contains $t$.
\qed 
\end{proof}

\section{A sufficient condition for finalizability}
\label{sec:safe}
In this section, we introduce a polynomial-time decidable sufficient condition
for a match to be finalizable in a given instance.
This condition turns out necessary for resident-minimal instances in the
stable marriage case, thus providing another proof that 
1-FTM-RM is polynomial time solvable (Theorem~\ref{thm:1-FTM-RM}).

Let $I$ be an instance and $M$ a subset of $\tent(I)$. 
We say that $r \in R$ is {\em relevant to} $h \in H$
with respect to $M$ if $r$ is matched in $M$ either to $h$ or
to no hospital.
We say that a match $(r, h)$ in $M$ is {\em endangered in $M$ with respect to
$I$} if it satisfies the following condition:
if $(r, h) \in \tent(I) \setminus \pend(I)$ then  
the preference list of $h$ in $I$ contains $q_h$ or more residents before $h$
that are relevant to $h$ with respect to $M$; if $(r, h) \in \pend(I)$ then the
number of residents relevant to $h$ with respect to $M$ is $q_h + 1$ or greater.
We denote by $\dang_I(M)$ the set of endangered matches with respect to $M$ in
$I$. 
We say that the set $M$ is {\em safe} with respect to $I$ if $\dang_I(M) =
\emptyset$. Observe that $\dang_I$ is monotone decreasing in the following
sense: if $M \subseteq M' \subseteq \tent(I)$ and $(r, h) \in M \setminus
\dang_I(M)$ then $(r, h) \not\in \dang_I(M')$.

See Table~\ref{tab1} for an example.

\begin{table}[htbp]
\caption{An example of a safe set}{%
\begin{center}
\begin{tabular}{cc}
\begin{minipage}{0.65\hsize}
Preference list $\pi_h$ of each $h \in H$\\
Matches in $\tent(I)$ are in bold face\\
Matches in $M$, a safe set, are parenthesized
\smallskip 
\begin{center}
\begin{tabular}{|c|c|c|c|c|c|c|c|c|c|}
\hline
$h$ & 1st & 2nd & 3rd & 4th & 5th & 6th & 7th & 8th & 9th \\ \hline
X& ({\bf a}) & i & e & ({\bf c}) & b & {\bf f} & d & - & -  \\ \hline
Y& i &({\bf g}) & a & ({\bf b}) & d & {\bf e} & c & - & -\\\hline 
Z& e & b & g & a & ({\bf i}) & d & - & - & - \\
\hline
\end{tabular}
\end{center}
\smallskip
Preference list of relevant residents for each $h$\\
Each member of $M$ is within
the quota of 3 in these lists
\smallskip
\begin{center}
\begin{tabular}{|c|c|c|c|c|c|c|c|c|c|}
\hline
$h$ & 1st & 2nd & 3rd & 4th & 5th & 6th & 7th & 8th & 9th \\ \hline
X& ({\bf a}) &  & e & ({\bf c}) &  & {\bf f} & d & - & -  \\ \hline
Y&  &({\bf g}) &  & ({\bf b}) & d & {\bf e} & c & - & -\\\hline 
Z& e &  &  &  & ({\bf i}) & d & - & - & - \\
\hline
\end{tabular}

\end{center}
\end{minipage}
\begin{minipage}{0.05\hsize}
\end{minipage}
\begin{minipage}{0.3\hsize}
Preference list of each $r \in R$; matches in $\tent(I)$ are in bold
face; rejected matches are in braces;
\smallskip
\begin{center}
\begin{tabular}{|c|c|c|c|}
\hline
$r$ & 1st & 2nd & 3rd \\ \hline
a& {\bf X}&-&- \\ \hline
b& {\bf Y}&X&- \\\hline
c& \{Y\}&{\bf X}&-  \\\hline
d& \{X\}&-&- \\\hline
e& {\bf Y}&X&- \\\hline
f& \{Y\}&{\bf X}&- \\\hline
g& \{X\}& {\bf Y}&- \\\hline
h& \{Y\}&-&- \\\hline
i& {\bf Z}&-&- \\
\hline
\end{tabular}
\end{center}
\end{minipage}
\end{tabular}
\end{center}
}
\label{tab1}
\end{table}

\begin{proposition}
\label{prop:safe}
Let $I$ be an instance and suppose $M \subseteq \tent(I)$ is safe
with respect to $I$. Then, every match in $M$ is finalizable in $I$.
\end{proposition}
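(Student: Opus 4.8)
The plan is to fix an arbitrary completion $J$ of $I$ and a match in $M$, and show directly that this match survives into $\tent(J)$. Since $J$ extends $I$, a maximal $I$-feasible sequence $\sigma$ is also $J$-feasible and extends to a maximal $J$-feasible sequence $\sigma + \rho$; by Proposition~\ref{prop:unique} the set $\tent(\sigma + \rho)$ is exactly $\tent(J)$. As every match of $M$ lies in $\tent(I) = \tent(\sigma)$, it suffices to prove that no match of $M$ is ever rejected in $\rho$: such a match is then proposed in $\sigma$ and never rejected, hence belongs to $\tent(J)$. I would argue by contradiction, assuming some match of $M$ is rejected in $\rho$ and focusing on the \emph{first} such rejection, writing $\rho = \rho_1 + (r, h)^- + \rho_2$ with $(r, h) \in M$ and $\rho_1$ rejecting no match of $M$.

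The technical heart is a relevance claim: every resident $r'$ tentatively matched to $h$ in the state $\sigma + \rho_1$ is relevant to $h$ with respect to $M$. Its proof is short. If instead $(r', h'') \in M$ for some $h'' \neq h$, then since $M \subseteq \tent(I)$ the match $(r', h'')$ is tentative in $\sigma$; but at $\sigma + \rho_1$ the resident $r'$ is tentatively matched to $h \neq h''$, and because the proposal rule lets a resident hold at most one tentative match at a time, $(r', h'')$ must have been rejected somewhere in $\rho_1$. This contradicts the choice of $\rho_1$ as rejecting no match of $M$. Hence $r'$ is matched in $M$ to $h$ or to nothing, i.e. relevant to $h$.

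To finish I would unfold the feasibility of $(r, h)^-$: it means $(r, h) \in \ousted_J(\tent(\sigma + \rho_1))$, so, as $J$ is complete, at least $q_h$ residents tentatively matched to $h$ at this state precede $r$ in the list of $h$ in $J$; call this set $S$, with $|S| \geq q_h$, and note that by the relevance claim every member of $S$ is relevant to $h$. If $(r, h) \in \tent(I) \setminus \pend(I)$, then $r$ lies on the list of $h$ in $I$, and since the extension to $J$ only appends residents to the tail, the residents preceding $r$ in $J$ coincide with those preceding $r$ in $I$; thus $S$ exhibits $q_h$ relevant residents ahead of $r$ in the list of $h$ in $I$, so $(r, h) \in \dang_I(M)$, contradicting safety. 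If instead $(r, h) \in \pend(I)$, then $r$ itself is relevant to $h$ (it is matched to $h$ in $M$) and is distinct from every member of $S$, so $h$ has at least $q_h + 1$ relevant residents, again placing $(r, h)$ in $\dang_I(M)$ and contradicting safety. As $J$ was arbitrary, every match of $M$ is finalizable.

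The main obstacle is the case analysis together with two subtleties that must be handled carefully: first, the relevance claim is valid only because $\rho_1$ is cut off at the first rejection of an $M$-match, since this minimality is precisely what forbids $(r', h'')$ from having been rejected earlier; and second, the bookkeeping that distinguishes pending from non-pending matches, where the gap between ``$q_h$ relevant residents ahead of $r$'' and ``$q_h + 1$ relevant residents in total'' is accounted for exactly by whether $r$ appears on the list of $h$ in $I$. Once these are in place, the contradiction with $\dang_I(M) = \emptyset$ is immediate.
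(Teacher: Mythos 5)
Your proof is correct and takes essentially the same route as the paper's: both cut the extended execution at the first rejection of a match in $M$ and show that the feasibility of that rejection would place the rejected match in $\dang_I(M)$, contradicting safety. Your explicit relevance claim and pending/non-pending case analysis simply spell out what the paper compresses into its monotonicity observation for $\dang_I$ and the phrase ``it would be endangered in $\tent(\tau)$ with respect to $I$.''
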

\begin{proof}
Let $\sigma$ be an arbitrary feasible extension of the 
maximal $I$-feasible sequence.
Let $\tau$ be the maximal prefix of $\sigma$ that does not
contain the rejection of any member of $M$.  Since $M \subseteq \tent(I)$,
$\tau$ is an extension of the maximal $I$-feasible sequence.
Since $M \subseteq \tent(I)$ and $\tau$ does not reject any match in $M$, 
we have $M \subseteq \tent(\tau)$ and hence 
$\dang_I(\tent(\tau)) \subseteq \dang_I(M) = \emptyset$ by the monotonicity of $\dang_I$ observed above
and the assumption that $M$ is safe. This means that there is no match $(r, h)$
in $M$ such that $\tau + (r, h)^-$ is feasible, since if there is such a match 
then it would be endangered in $\tent(\tau)$ with respect to $I$.
Therefore, $\tau$ must be equal to $\sigma$ and therefore there is no 
extension of $I$ that rejects any match in $M$.
\qed 
\end{proof}

\begin{theorem}
\label{thm:safe-unique}
Let $I$ be an instance. Then,
the maximal safe set with respect to $I$ is unique and can be
identified in polynomial time.
\end{theorem}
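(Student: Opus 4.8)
The plan is to exploit the monotonicity of $\dang_I$ recorded just before the theorem to show that safe sets are closed under union, and then to compute the (consequently unique) maximal safe set by a simple peeling procedure. Throughout I use that every resident is matched at most once in $\tent(I)$ (the feasibility rules forbid a resident from proposing while tentatively matched), so any subset of $\tent(I)$ assigns at most one hospital to each resident.

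First I would show that if $M_1$ and $M_2$ are safe with respect to $I$, then so is $M_1 \cup M_2$, which is again a subset of $\tent(I)$. Take any $(r, h) \in M_1 \cup M_2$, say $(r, h) \in M_1$. Since $M_1$ is safe, $(r, h) \in M_1 \setminus \dang_I(M_1)$, and since $M_1 \subseteq M_1 \cup M_2 \subseteq \tent(I)$, monotonicity gives $(r, h) \notin \dang_I(M_1 \cup M_2)$. As $(r, h)$ was arbitrary, $\dang_I(M_1 \cup M_2) = \emptyset$. The identical argument applies verbatim to an arbitrary union of safe sets, so the union $M_{\max}$ of all safe sets with respect to $I$ is itself safe (note $\emptyset$ is trivially safe, so the family is nonempty). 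Being a safe set that contains every safe set, $M_{\max}$ is the unique maximal safe set.

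Next, for the algorithm, I would compute $M_{\max}$ by iterated removal of endangered matches: set $M_0 = \tent(I)$ and $M_{i + 1} = M_i \setminus \dang_I(M_i)$, stopping at the first index $k$ with $\dang_I(M_k) = \emptyset$ and returning $M_k$. Each nonempty removal strictly shrinks $M_i$, and $|\tent(I)| \leq |R|$, so the process halts after at most $|R|$ rounds; evaluating $\dang_I(M_i)$ amounts to counting, for each hospital, the relevant residents and inspecting their ranks, which is polynomial. Hence the whole procedure runs in polynomial time. To prove $M_k = M_{\max}$, note first that $M_k$ is safe by the stopping condition, so $M_k \subseteq M_{\max}$. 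For the reverse inclusion I would maintain the invariant $M_{\max} \subseteq M_i$ by induction on $i$: the base case $M_{\max} \subseteq \tent(I) = M_0$ is immediate, and for the step, any $(r, h) \in M_{\max}$ lies in $M_{\max} \setminus \dang_I(M_{\max})$ by safety of $M_{\max}$, so monotonicity applied to $M_{\max} \subseteq M_i \subseteq \tent(I)$ yields $(r, h) \notin \dang_I(M_i)$, i.e. $(r, h) \in M_{i + 1}$. Thus $M_{\max} \subseteq M_k$, giving $M_k = M_{\max}$.

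The main obstacle is not any single calculation but invoking monotonicity with the correct containments in both the union-closedness argument and the invariant, and in particular checking that enlarging a subset of $\tent(I)$ can only shrink each hospital's set of relevant residents (a resident unmatched in the smaller set may become matched elsewhere in the larger one, but never the reverse). Once that is pinned down, uniqueness, correctness of the peeling procedure, and the polynomial bound all follow directly.
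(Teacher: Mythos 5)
Your proof is correct and takes essentially the same approach as the paper: the paper also computes the answer by the identical peeling iteration $M_0 = \tent(I)$, $M_{i+1} = M_i \setminus \dang_I(M_i)$, and establishes maximality by the same monotonicity-based induction showing that every safe set is contained in each $M_i$. Your preliminary observation that safe sets are closed under union is sound but is an optional extra; in the paper uniqueness falls out directly from that induction, since the fixed point of the peeling is itself safe and contains every safe set.
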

\begin{proof}
Let $M_0 = \tent(I)$ and $M_i = M_{i - 1} \setminus \dang_I(M_{i - 1})$
for $i > 0$.  Let $m$ be the smallest $i$ such that $M_i = M_{i + 1}$.
Since $\dang_I(M_m) = \emptyset$, $M_m$ is safe. To show its maximality,
let $M$ be an arbitrary subset of $\tent(I)$ that is safe with respect to
$I$.  We show by induction on $i$ that $M \subseteq M_i$.
The base case $i = 0$ is trivial.  Suppose $i > 0$.
By the induction hypothesis, $M \subseteq M_{i - 1}$.
Let $(r, h)$ be a match in $\dang_I(M_{i - 1})$.
We cannot have $(r, h) \in M$, since if we did then, by
the monotonicity of $\dang_I$, $(r, h)$ would be endangered in $M$,
a contradiction to the assumption that $M$ is safe.
Therefore, $\dang_I(M_{i - 1}) \cap M = \emptyset$ and hence
$M \subseteq M_i$ holds.  Therefore we have $M \subseteq M_m$ and
hence $M_m$ is the unique maximal safe set.
\qed 
\end{proof}

In the stable marriage resident-minimal case, 
the above sufficient condition for
finalizability turns out necessary as well.
\begin{theorem}
\label{thm:necessary}
Let $I$ be a resident-minimal instance in the stable marriage case.
Then, each $(r, h) \in \tent(I)$ is finalizable only if
it is in the maximal safe set with respect to $I$. 
\end{theorem}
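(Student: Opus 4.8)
The plan is to prove the contrapositive: if $(r_0,h_0) \in \tent(I) \setminus M_m$, where $M_m$ is the maximal safe set produced by the peeling process $M_i = M_{i-1} \setminus \dang_I(M_{i-1})$ in the proof of Theorem~\ref{thm:safe-unique}, then $(r_0,h_0)$ is not finalizable in $I$. By Lemma~\ref{lem:rejection-chain} together with the implication (3) $\Rightarrow$ (1) of Theorem~\ref{thm:resident-minimal}, it suffices to show that $(r_0,h_0)$ is reachable from a root of $G_I$ (a vertex with no incoming edge), since such reachability is exactly the existence of a directed path from a root to $(r_0,h_0)$.

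I would establish reachability by induction on the \emph{removal stage} $s(r,h)$, defined for each $(r,h) \in \tent(I) \setminus M_m$ as the least $i \geq 1$ with $(r,h) \notin M_i$; thus $(r,h) \in M_{s(r,h)-1}$ and $(r,h) \in \dang_I(M_{s(r,h)-1})$. The pending case is immediate and belongs to the base: if $(r,h) \in \pend(I)$ then $r$ is absent from the list of $h$, so $(r,h)$ has no incoming edge in $G_I$ and is itself a root. Otherwise $(r,h) \in \tent(I) \setminus \pend(I)$ is endangered in $M_{s-1}$ (write $s = s(r,h)$); since $q_h = 1$, the definition of $\dang_I$ supplies a resident $r'$ that precedes $r$ in the list of $h$ and is relevant to $h$ with respect to $M_{s-1}$, and because the single slot of $h$ is occupied by $(r,h) \in M_{s-1}$, this $r'$ must be \emph{unmatched} in $M_{s-1}$.

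The crux of the argument, and the step I expect to be the main obstacle, is verifying that $(r',h) \in P = (R \times H) \setminus \prop(I)$, which is precisely what lets $G_I$ mirror the peeling process. If $h$ were on the list of $r'$, then resident-minimality would force $(r',h) \in \prop(I)$; but $r'$ precedes $r$ in the list of $h$, so $h$ prefers $r'$ to $r$, and with $q_h = 1$ the hospital $h$ would end tentatively matched to someone it ranks at least as high as $r'$, hence strictly above $r$, contradicting $(r,h) \in \tent(I)$. Hence $h$ is not on the list of $r'$, so $(r',h) \in P$ and $G_I$ contains the edge $(r',h) \to (r,h)$. This is the only place where both resident-minimality and the single-quota structure are used essentially.

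Finally, to reach $(r',h)$ from a root: if $r' \notin \res\,\tent(I)$ then $(r',h)$ has no incoming edge and is itself a root, giving a length-one path to $(r,h)$; otherwise $r'$ is matched in $\tent(I)$ to some $h' \neq h$, so $G_I$ contains the edge $(r',h') \to (r',h)$, and since $r'$ is unmatched in $M_{s-1}$ we have $(r',h') \notin M_{s-1}$, whence $s(r',h') \leq s-1 < s$. The induction hypothesis then yields a directed path from a root to $(r',h')$, and appending $(r',h') \to (r',h) \to (r,h)$ (passing to a shortest walk to obtain a simple path if desired) completes the path to $(r,h)$. The strict decrease of the removal stage guarantees well-foundedness, so the claim and hence the theorem follow. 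Combined with Proposition~\ref{prop:safe}, this shows that for resident-minimal stable marriage instances finalizability coincides with membership in the maximal safe set, re-deriving Theorem~\ref{thm:1-FTM-RM}.
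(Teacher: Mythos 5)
Your route is genuinely different from the paper's: you reduce non-membership in the maximal safe set to root-reachability in $G_I$ and then invoke Lemma~\ref{lem:rejection-chain} together with (3)~$\Rightarrow$~(1) of Theorem~\ref{thm:resident-minimal}, whereas the paper never passes through $G_I$ at all -- it converts the peeling chain directly into an explicit sequence of extensions $I_1,\ldots,I_m$ and feasible event sequences that reject the given match. Your inductive step for non-pending matches is correct, and you identified and resolved the right crux: with $q_h=1$, the relevant resident $r'$ supplied by the endangerment of $(r,h)$ in $M_{s-1}$ must be unmatched in $M_{s-1}$, and resident-minimality plus the ousting rule force $(r',h)\notin\prop(I)$, so the edge from $(r',h)$ to $(r,h)$ exists in $G_I$ and the removal stage strictly decreases along the recursion.

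The genuine gap is the pending case. A pending match $(r,h)$ has no incoming edges in $G_I$, so the only ``path from a root'' reaching it is the length-zero path, and Lemma~\ref{lem:rejection-chain} cannot be soundly invoked with that: under the trivial-path reading, the lemma would assert that \emph{every} pending tentative match admits a simple extension rejecting it, which is false. Take a single resident $r$ whose list is $h$ and let the list of $h$ be empty; then $(r,h)$ is pending, hence a root of $G_I$, but in the unique completion $r$ is the only proposer and $q_h=1$, so $(r,h)$ is never rejected -- it is finalizable (consistently, it is not endangered and stays in the safe set). What actually makes a pending match outside $M_m$ rejectable is not its being a root but its being endangered at its removal stage: the number of residents relevant to $h$ is at least $q_h+1=2$, so there is a second relevant resident $r''$, and one must either rank $r''$ (if it is another pending proposer at $h$) above $r$ when completing the list of $h$, or first free $r''$ up (recursively, if $r''$ is matched in $\tent(I)$ but peeled away earlier) and append $h$ to its list, before the rejection of $(r,h)$ becomes feasible. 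This recursion for pending targets is invisible to $G_I$ -- the graph simply has no edges into pending vertices -- so it cannot be delegated to Lemma~\ref{lem:rejection-chain}; it needs the kind of direct construction the paper carries out (the paper, for this reason, also argues that pending matches can only be peeled at stage~1, which keeps the interior of its chain non-pending). The same issue resurfaces inside your inductive step whenever the recursion terminates at a pending match $(r',h')$: the resulting path then starts at a pending $T$-root, and the feasibility of its very first rejection again requires a second proposer at $h'$ that your argument does not produce. Chains that end at a $P$-root ($r'\notin\res\,\tent(I)$) are fine; everything touching pending matches needs a separate argument driven by the endangerment condition.
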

\begin{proof}
Fix $I$ and let $M_i$, $0 \leq i \leq m$, be as
defined in the proof of Theorem~\ref{thm:safe-unique}.
In particular, $M_m$ is the maximal safe set with respect to $I$.
Fix an arbitrary match $(r, h) \in \tent(I) \setminus M_m$.
We show that $(r, h)$ is not finalizable.

Let $i$ be the smallest integer such that $(r, h) \not\in M_i$.
Since $(r, h) \in M_0 = \tent(I)$, we have $i > 0$.
We construct a sequence of matches $(r, h) = (r_i, h_i)$,
$(r_{i - 1}, h_{i - 1})$, \ldots, $(r_0, h_0)$ such that
$(r_j, h_j) \in \dang_I(M_{j - 1})$ for each $j$, $1 \leq j \leq i$.

Let $0 < j < i$ and suppose $(r_k, h_k) \in \dang_I(M_{k - 1})$ 
for $i \geq k > j$ has been determined. As $(r_{j + 1}, h_{j + 1}) \in
\dang_I(M_j)$, we have some resident, say $r_j$, that is relevant
to $h_{j + 1}$ with respect to $M_j$ and precedes $r_{j + 1}$ 
in the preference list of $h_{j + 1}$ in $I$.  Observe here that
we cannot have $(r_{j + 1}, h_{j + 1}) \in \pend(I)$ since if we had then
$(r_{j + 1}, h_{j + 1})$ would be in $\dang_I(M_0)$ and hence
not in $\dang_I(M_j) \subseteq M_j$ as $j \geq 1$.
Now, $r_j$ is relevant to $h_{j + 1}$ with respect to $M_j$ but
not with respect to $M_{j - 1}$ since $(r_{j + 1}, h_{j + 1})$ is
not in $\dang_I(M_{j - 1})$. Therefore, there is some hospital, say $h_j$,
distinct from $h_{j + 1}$ such that $(r_j, h_j) \in M_{j - 1} \setminus M_j =
\dang_I(M_{j - 1})$. Thus, we have determined 
$(r_j, h_j) \in \dang_I(M_{j - 1})$ for the current $j$ and, inductively, 
for each $j$, $i \geq j \geq 1$.

As observed above, we have $(r_j, h_j) \in \tent(I) \setminus \pend(I)$
for $i \geq j \geq 1$ and, since $I$ is a stable marriage instance,
$h_j$ for $i \geq j \geq 1$ are pairwise distinct.@More straightforwardly, 
$r_j$ for $i \geq j \geq 1$ are pairwise distinct as there is at most
one match in $\tent(I)$ involving a particular resident. 

Let $\sigma$ be a maximal $I$-feasible sequence.
We now construct successive extensions $I_1$, \ldots, $I_m$ of $I$
and successive extensions $\sigma_1$, \ldots, $\sigma_m$ of $\sigma$,
based on the sequence of matches constructed above. 
Since $(r_1, h_1) \in \dang_I(M_0) = \dang_I(\tent(I))$, either $(r_1, h_1) \in
\pend(I)$ or $(r_1, h_1) \in \tent(I)\setminus \pend(I)$. If $(r_1, h_1) \in
\pend(I)$ then let $I_1$ be obtained from $I$ by completing the preference list of
$h_1$ so that $r_1$ is ranked lowest and
let $\sigma_1 = \sigma + (r_1, h_1)^-$. 
Suppose otherwise that $(r_1, h_1) \in \tent(I) \setminus \pend(I)$ then,
since $(r_1, h_1)$ is endangered in $\tent(I)$ with respect to $I$,
there must be some resident, say $r_0$, that precedes $r_1$ in the
preference list of $h_1$ in $I$ and is relevant to $h_1$ with respect
to $\tent(I)$. The latter condition implies that $(r_0, h_1) \not\in \tent(I)$.
Moreover, $(r_0, h_1)$ is not in $\prop(I)$ since if it were then
it would be impossible for $(r_1, h_1)$ to be in $\tent(I)$. 
Since $I$ is resident-minimal, it follows that 
$h_1$ is not in the preference list of $r_0$.
We let $I_1$ be obtained from $I$ by appending $h_1$ to the preference list of
$r_0$ and let $\sigma_1 = \sigma + (r_0, h_1)^+ + (r_1, h_1)^-$.
In either case, $\sigma_1$ is $I_1$-feasible.
In general, we maintain the invariant that $(r_j, h_j) \in \rej(\sigma_j)$
and $\sigma_j$ is $I_j$-feasible for
$1 \leq j \leq m$. Suppose $j > 1$ and $I_{j - 1}$ together with $\sigma_{j -
1}$ has been defined.
Let $I_j$ be obtained from $I_{j - 1}$ by appending $h_{j}$
to the preference list of $r_{j - 1}$ and let $\sigma_j = \sigma_{j - 1} + 
(r_{j - 1}, h_j)^+ + (r_j, h_j)^-$. 
As $(r_{j - 1}, h_{j-1}) \in \rej(\sigma_{j - 1})$, $\sigma_{j - 1} + 
(r_{j - 1}, h_j)^+$ is $I_j$-feasible.  Moreover, since $r_{j - 1}$ precedes
$r_j$ in the preference list of $h_j$ by constructin, $\sigma_j$ is
$I_j$-feasible.
We conclude that $(r, h) = (r_m, h_m)$ is not finalizable in $I$ since
the extension $I_m$ of $I$ rejects $(r_m, h_m)$.
\qed
\end{proof}

It follows as a corollary to Proposition~\ref{prop:safe}, 
Theorem~\ref{thm:safe-unique}, and Theorem~\ref{thm:necessary}
that 1-FTM-RM is polynomial time solvable giving another proof of  
Theorem~\ref{thm:1-FTM-RM}.

%

\section{Student-supervisor assignment: simulations}
\label{sec:simulations}
In this section, we present some simulation results on the
student-supervisor assignment procedure mentioned in the introduction.
The purpose is to demonstrate that there are realistic markets
in which multi-round matching procedures based on FTM can be effective.
Since the statics from the real market are not publicly disclosable, 
we resort to simulations.

\subsection*{Real market}
We first describe the real student-supervisor market in the author's
department. Every student in the final year of undergraduate study takes a 
full year project course as a part of the requirement for graduation. 
Every faculty member in the department supervises a project course. 
Supervisors have quotas as even as possible that sum up to 
the total number of students.
The assignment procedure takes place in the following steps.
\begin{enumerate}
  \item Students visit supervisors' labs to see the research activity there and get interviews if interested.
  \item Each supervisor submits a rank list of students to the central system. This list must be complete.
  The grade point information is provided to the supervisors. Typically, supervisors make their rank list
  based on the grade points and the score from the interviews.
  \item The rank lists of the supervisors are not public but are partially disclosed
  in the following manner: if the rank of student $s$ in the list of supervisor $p$
  is within the quota of $p$, then $s$ is notified of this fact.
  \item The first round of matching: each student submits a rank list of length up to 3 and
  the deferred acceptance algorithm is executed. Among the resulting tentative matches,
  those found finalizable by the sufficient condition in Section~\ref{sec:safe}
  are finalized. Both the student and the supervisor of each finalized
  match are notified.
  \item Each student $s$ without a finalized match is informed of the following:
  (1) the list of unfilled supervisors (supervisors for which the number
  of finalized matches is strictly smaller than their quota) and (2) the list of 
  unfilled supervisors $p$ such that
  the rank of $s$ in the rank list of $p$, after removing students who are finalized
  to supervisors other than $p$, is within the quota of $p$.  Note that,
  in the circumstances in (2), $s$ must be matched to $p$ in any stable matching
  provided that $s$ ranks $p$ the highest among all supervisors except
  those that rejected $s$ in the first round.
  \item The second round of matching: each student without a finalized match
  submit a complete rank list of unfilled supervisors. This rank list
  must be consistent with the rank list in the first round, in that they 
  agree on the ordering of common entries.
  Then the deferred  
  acceptance algorithm is executed to complete the assignment.
\end{enumerate}

The final outcome of the two rounds of matching is stable, assuming that
the rank lists of students in both rounds are consistent with their true
preferences.  This assumption might be disputable because of the
partial disclosures of the supervisors' rank lists before each round, described above.
There is no strategic reason for students to change their preference orders but
there may be psychological factors.
Though these disclosures are introduced for good reasons, we do not include this
ingredient in our simulation, partly because we want to avoid such disputes and
partly because it is difficult to model the influence of such disclosures on the
preferences of students.

\subsection*{Model}
We have a set $S$ of students and a set $P$ of supervisors.
To model the diversity of interests of students and of attractiveness of 
the supervisors,
we have a set $T$ of topics.  
Besides $|S|$, $|P|$, $|T|$ we have parameters $k$, $\sigma_1$,
$\sigma_2$, and $\sigma_3$ to be used below.

We have the following random variables, which are mutually 
independent except for the relationships explicitly described.
Each $s \in S$ has a score $g_s$ which has a normal distribution
with mean $0.5$ and standard deviation $\sigma_1$. 
Each student $s \in S$ has an
interest value $i_{s, t}$ on each topic $t \in T$ and each 
supervisor $p$ has attractiveness $a_{p, t}$ on each topic $t \in T$.
For each student $s$, the total interest $\sum{t \in T} i_{s, t}$
is fixed to 1 and the relative magnitude of $i_{s, t}$, $t \in T$,
is proportional to a random variable with mean 0.5 and
standard deviation $\sigma_3$.
For each supervisor $p$, the total attractiveness $\sum{t \in T} a_{p,
t}$ has a normal distribution with mean 1 and standard
deviation $\sigma_2$, truncated to fit in the interval $[0.5, 1.5]$.
Given this total attractiveness, the relative magnitude of $a_{p, t}$, $t \in
T$, is proportional to a random variable with mean 0.5 and
standard deviation $\sigma_3$.

Based on these random variables, the rank lists of
students and supervisors are determined a follows.
For each pair of student $s$ and supervisor $p$, 
let $\attraction(s, p) = \sigma_{t \in T} i_{s, t} a_{p, t}$ denote
the attraction between $s$ and $p$, which is the inner product
between the interest vector $i_s$ of $s$ and the attractiveness
vector $a_p$ of $p$.

Each student $s$ uses $\attraction(s, p)$ as a score to rank $p$ in the list.  
The rank lists of supervisors are based on the grade scores of students
and the results of interviews described as follows.
Each student $s$ has interviews with top $k$ supervisors
in the rank list of $s$. The score supervisor $p$ uses to rank student
$s$ is the grade score $g_s$ if $p$ does not interview $s$.
If $p$ does interview $s$, then the score is modified to reflect
the chemistry catalyzed by the interview.  We 
use a simplest model that the score in this case 
is $g_s$ + $\attraction(s, p)$. 
 
\subsection*{Procedure}
The procedure has a parameter $r$, a positive integer.
In the first round, the top $r$ of the rank list of
each student are submitted and the deferred acceptance
algorithm is executed on this truncated instance.
Among the resulting tentative matches, those found
finalizable by the the sufficient condition in Section~\ref{sec:safe} are
finalized. We count the number of tentative matches, 
the number of finalized matches, and the number of supervisors
that are completely filled by the finalized matches in the
first round.  

The second round could be executed using the complete
rank lists but, in this study, we are not interested in 
the final outcomes.
   
\subsection*{Simulation results}
In our simulation, we set  
$|S| = 100$ and $|P| = 10$, round numbers which are close to
the real numbers in the author's department. 
We also fix the following parameters: $|T| = 4$,
$k = 5$, $r = 3$, and $\sigma_1 = \sigma_2 = 0.1$.
We try several values of parameter $\sigma_3$, which 
controls the degree of diversity of the interests
of the students and of the attractiveness of supervisors.

We have run the simulation 100 times for each value of
$\sigma_3$ and recorded the average, minimum, and maximum
values of each quantity measured.
Table~\ref{tab:simulations} shows the results.

\begin{table}[htbp]
  \begin{center} 
    \begin{tabular}{|c|c|c|c|c|c|c|c|c|c|c|c|c|c|c|c|c|} 
    \hline
	$\sigma_3$ & \multicolumn{3}{|c|}{tentative matches}&
	\multicolumn{3}{|c|}{finalized matches} &
	\multicolumn{3}{|c|}{finalized$/$tentative} &
	\multicolumn{3}{|c|}{filled supervisors}\\
	\hline
	& avg. & min & max &  avg. & min & max 
	& avg. & min & max &  avg. & min & max \\
	\hline 
    0.1&40.16&30&55&35.72&30&50&0.89&0.71&1.0&3.7&3.0&5.0\\
    0.3&73.57&51&91&67.49&44&88&0.91&0.78&1.0&6.29&4.0&8.0\\
    0.5&82.02&64&94&77.37&58&93&0.94&0.82&1.0&7.08&6.0&9.0\\
    0.7&82.81&69&96&78.68&62&91&0.94&0.81&1.0&7.2&6.0&9.0\\	
	\hline
    \end{tabular}
    \caption{Simulation results}
	\label{tab:simulations}
  \end{center}
\end{table}

As $\sigma_3$ increases, both the number of tentative matches and
the number of finalized matches tend to increase, except that
the results for $\sigma_3 = 0.5$ and $\sigma_3 = 0.7$ do not show
a significant difference. This tendency is plausible, since the diversity
of interests and attractiveness would result in the diversity of
preferences. 

It might be rather surprising that the ratio of the number of finalized matches
over the number of tentative matches is consistently high:
on average, it is around or above 90\% for all values of $\sigma_3$.

With high diversity of interests and attractiveness ($\sigma_3 = 0.5$, for
example), on average, about 77 students out of 100 are finalized after
the first round and about 7 supervisors out of 10 are filled with finalized
matches.

These numbers are fairly close to those from the real supervisor assignment
results in the author's department. Thus, the savings in the evaluation efforts of the students
are enormous. The students finalized in the first round do not need to extend
their list beyond the top 3 supervisors and those who are not finalized may
concentrate on the small number of unfilled supervisors in the second
round.

We do not claim that our model captures the underlying mechanism of 
the real market well. 
In particular, the
model for the effect of interviews is too simplistic. Nonetheless,
the simulations do demonstrate that multi-round stable matching procedure
based on FTM can be effective for markets where the preferences of
participants are diverse and some prematch process, such as interviews,
helps nurturing ties between some pairs through which
each side of a pair ranks the other high.

\section{Future work}
Though the sufficient condition for finalizability given
in Section~\ref{sec:safe} is useful in matching procedures
for markets as studied in Section~\ref{sec:simulations}, 
exact determination of finalizability, if can be done with
a reasonable amount of computation, would further enhance
the merit of the multi-round approach.
The characterization of negative instances for FTM-RM given
in Section~\ref{sec:r-minimal} would be indispensable in 
developing practical algorithms for exact finalizability.

Applicability of the approach to larger markets such as
NRMP is an interesting and challenging topic.  

\subsection*{Acknowledgments}
The author would like to thank his department colleagues,
especially Toshiyuki Tsutsumi, for discussions that were helpful in
forming the idea of incremental submissions.
He is also thankful to Shuichi Miyazaki 
for helpful comments on an earlier manuscript and to anonymous reviewers of SAGT
2015 conference whose constructive comments on a submission 
describing this work in a preliminary stage have lead to the
improved results and presentations in the current paper.
He also thanks SAGT 2016 reviewers for comments that were helpful in improving the
presentation.

\bibliographystyle{plain}
\bibliography{ec}

\begin{thebibliography}{1}

\bibitem{gale1962college}
David Gale and Lloyd~S Shapley.
\newblock College admissions and the stability of marriage.
\newblock {\em The American Mathematical Monthly}, 69(1):9--15, 1962.

\bibitem{garey2002computers}
Michael~R Garey and David~S Johnson.
\newblock {\em Computers and intractability}, volume~29.
\newblock W.H.Freeman, New York, 2002.

\bibitem{iwama2008survey}
Kazuo Iwama and Shuichi Miyazaki.
\newblock A survey of the stable marriage problem and its variants.
\newblock In {\em Informatics Education and Research for Knowledge-Circulating
  Society, 2008. ICKS 2008. International Conference on}, pages 131--136. IEEE,
  2008.

\bibitem{NMRP2014}
NRMP.
\newblock Match participation agreement for institutions for all matches
  opening after june 1, 2014, August 2014.

\bibitem{rastegari2014reasoning}
Baharak Rastegari, Anne Condon, Nicole Immorlica, Robert Irving, and Kevin
  Leyton-Brown.
\newblock Reasoning about optimal stable matchings under partial information.
\newblock In {\em Proceedings of the fifteenth ACM conference on Economics and
  computation}, pages 431--448. ACM, 2014.

\bibitem{rastegari2013two}
Baharak Rastegari, Anne Condon, Nicole Immorlica, and Kevin Leyton-Brown.
\newblock Two-sided matching with partial information.
\newblock In {\em Proceedings of the fourteenth ACM conference on Electronic
  Commerce}, pages 733--750. ACM, 2013.

\bibitem{roth1999truncation}
Alvin~E Roth and Uriel~G Rothblum.
\newblock Truncation strategies in matching markets-in search of advice for
  participants.
\newblock {\em Econometrica}, 67(1):21--43, 1999.

\end{thebibliography}

\end{document}